\newenvironment{allttt}
  {\par\begingroup\footnotesize\begin{alltt}\ignorespaces}
  {\end{alltt}\endgroup\ignorespacesafterend\noindent}
  \tikzset{morphism/.style={->,font=\scriptsize}}
  \tikzset{injection/.style={right hook->,font=\scriptsize}}
  \tikzset{twocell/.style = {double equal sign distance,double,-implies,shorten <= .3cm,shorten >=.3cm,font=\scriptsize,draw}}
  \tikzset{crossover/.style={preaction={solid,draw=white,-,line width=6pt}}}
  \tikzset{over/.style={fill=white,inner sep=1.5pt}}
\tikzset{over/.style={fill=white,inner sep=2.5pt}}
\newcommand{\B}{\ensuremath{\mathbb{B}}}
\newcommand{\N}{\ensuremath{\mathbb{N}}}
\newcommand{\C}{\ensuremath{\mathbb{C}}}
\newcommand\Set{\ensuremath{\mathsf{Set}}}
\newcommand{\Fam}{\ensuremath{\mathsf{Fam}}}
\newcommand{\Pow}{\ensuremath{\mathsf{Pow}}}
\newcommand{\End}{\ensuremath{\mathsf{End}}}
\newcommand{\Mnd}{\ensuremath{\mathsf{Mnd}}}
\newcommand{\Nat}{\ensuremath{\mathsf{Nat}}}
\newcommand\Alg[1]{\ensuremath{#1{\mathtt{-}}\mathsf{alg}}}
\newcommand\MAlg[1]{\ensuremath{#1{\mathtt{-}}\mathsf{Alg}}}
\newcommand\Rule[2]{\frac{\phantom{\big(}\quad{\displaystyle #1}\quad}{\phantom{\Big(}\quad{\displaystyle #2}\quad}}
\newenvironment{myequation}{%
  \ignorespaces\[\everymath{\displaystyle}\begin{array}{rclr}%
    }{%
  \end{array}\]\ignorespacesafterend%
}
\newenvironment{verticalhack}
  {\begin{array}[b]{@{}c@{}}\displaystyle}
  {\\[-0.4em]\noalign{\hrule height0pt}\end{array}}
\newenvironment{myequationqed}{%
  \ignorespaces\[\begin{verticalhack}\everymath{\displaystyle}\begin{array}{rclr}%
    }{%
  \end{array}\end{verticalhack}\qedhere\]\ignorespacesafterend%
}
\newcommand\BLANK{{\texttt{\char"5F}}}
\newcommand{\restrict}{\upharpoonright}
\renewcommand\t[1]{\leavevmode\text{\textup{\texttt{#1}}}}
\newcommand{\ie}{i.e.\ }
\newcommand\sub{\subset}
\newcommand\meets{\between}
\newcommand\IN{\mathrel{\epsilon}}
\renewcommand{\implies}{\Rightarrow}
\newcommand{\isimpliedby}{\Leftarrow}
\newcommand{\NU}[1]{\nu_{#1}}
\newcommand{\LAM}[1]{(\lambda\,#1)\,{}}
\newcommand\nuIntro{\nu_\mathrm{intro}}
\newcommand\nuElim{\nu_\mathrm{elim}}
\newcommand\Sem[1]{\left\llbracket#1\right\rrbracket}
\newcommand\linear{\multimap}
\newcommand{\PiSymbol}{\Pi}
\newcommand{\PI}[1]{\left(\big.\PiSymbol\,#1\right)\,{}}
\newcommand{\SigmaSymbol}{\Sigma}
\newcommand{\SI}[1]{\left(\big.\SigmaSymbol\,#1\right)\,{}}
\newcommand{\id}{\mathrm{id}}
\newcommand{\eval}{\mathrm{eval}}
\newcommand\ALayered[1]{\langle\mskip-4.5mu\langle{#1}\rangle\mskip-4.5mu\rangle}
\newcommand\Stream[1]{\mathtt{stream}(#1)}
\newcommand\Tree[2]{\mathtt{tree}_{#1}(#2)}
\newcommand{\One}{{\bf 1}}
\newcommand{\Zero}{{\bf 0}}
\newcommand{\Two}{{\bf 2}}
\newcommand{\OUT}{\mathtt{output}}
\newcommand{\INPUT}{\mathtt{input}}
\newcommand{\tINPUT}{\t{input}}
\newcommand{\leaf}{\mathtt{Leaf}}
\newcommand{\node}{\mathtt{Node}}
\newcommand{\checked}{{\bf{[}{\it Agda}\/\ding{51}]}\enspace}%
\newcommand{\partlychecked}{{\bf{[}{\it Agda}\/\ding{55}]}\enspace}%
\begin{document}

\keywords{dependent type theory, coinductive types, continuous functions, 
indexed containers, inductive recursive definitions, Agda}

\title[Continuous Function, Greatest Fixed Points and Indexed Containers]
      {Representing Continuous Functions between Greatest Fixed Points of Indexed Containers}


\author{Pierre Hyvernat}
\email{\tt pierre.hyvernat@univ-smb.fr}
\urladdr{\url{https://www.lama.univ-savoie.fr/~hyvernat/}}
\address{
Universit\'e Grenoble Alpes, Universit\'e Savoie Mont Blanc, CNRS, LAMA, 73000
Chamb\'ery, France.}

\thanks{The author wants to thank {both} Peter Hancock and the reviewers. The
former should get credits for many of the ideas developed in this paper. He
visited Chamb\'ery many years ago and most of the results started to take form
then and there. The latter contributed many helpful comments much more
recently when the paper was finally submitted.}

\begin{abstract}
  We describe a way to represent computable functions between coinductive
  types as particular transducers in type theory. This generalizes earlier
  work on functions between streams by P. Hancock to a much richer class of
  coinductive types. Those transducers can be defined in dependent type theory
  \emph{without} any notion of equality but require inductive-recursive
  definitions. Most of the properties of these constructions only rely on a
  mild notion of equality (intensional equality) and can thus be formalized in
  the dependently typed language Agda.
\end{abstract}

\maketitle


\section*{Introduction}

This paper gives a type theoretic representation theorem for continuous
functions (Theorems~\ref{lem:eval_layer} and~\ref{lem:eval_layer_comp}) between
a wide class of spaces of infinite values. By infinite, we understand a value
having a coinductive type. Continuity means that finite information about a
result of the function requires only finite information about its argument.
Because of that, it is a necessary condition for computability.

The simplest coinductive space is the Cantor space of infinite boolean
sequences. It corresponds to the coinductive type~$\nu_X(X + X)$ and its
topology is well known. Programs that implement continuous functions from the
Cantor space to a discrete space~$D$ can be represented by finite binary trees
in which leaves are labelled by values in~$D$. We extend this to a class of
functors going beyond~$X \mapsto X + X$ on the category~$\Set$ by considering
so-called (finitary) polynomial functors on~$\Set^I$ for some index set~$I$.
The final coalgebra~$\nu P$ of such a functor~$P$ always exists and may be
constructed as the inverse limit of:~$1 \leftarrow P(\One) \leftarrow
P^2(\One) \leftarrow \cdots$. Those final coalgebras have a natural topology,
and when the functor~$F$ is \emph{finitary} (commutes with filtered colimits),
the topology enjoys a close connection with the intuitive notion of ``finite
amount of information'' about potentially infinite values.  However,
representing such topologies inside a formalized system such as dependent type
theory is far from trivial because their definition relies heavily on delicate
topics like equality.

Our main result pertains to the question of how continuous functions between
these natural classes of spaces can be represented in dependent type theory. It
turns out that any such ``implementation'' can itself be put into the form of a
potentially infinite data-structure, inhabiting a final coalgebra for an
appropriate functor, albeit one which is in most cases no longer finitary. This
settles a conjecture of P. Hancock about representability of continuous between
``dependent streams''~\cite{hancock09:_repres_of_stream_proces_using} by
extending the work of P. Hancock, N. Ghani and D.
Pattinson~\cite{hancock09:_contin_funct_final_coalg} from \emph{containers} to
\emph{indexed containers}. The technology and root ideas are very similar.

We obtain this result via a more general construction, without any cardinality
restrictions on the initial functors. One can still topologise the final
coalgebras, though the topology that arises from the inverse chain
construction no longer enjoys much connection with any intuition of finite
information, and there are (classically) continuous functions that cannot be
implemented by programs.


\section{Preliminaries I. Streams and Trees in Point Set Topology}

\subsection{Streams}  

Given a set~$X$ endowed with the discrete topology, the set of \emph{streams}
over~$X$, written~$\Stream{X}$, is defined as the infinite
product~$\prod_{i\geq0} X$. The product topology is generated from the basic
open sets~$\prod_{i\geq0} U_i$ where finitely many~$U_i$s are of the
form~$\{x_i\}$ for some~$x_i \in  X$ and the other~$U_i$s are equal to~$X$.
This topological space is usually called the Cantor space (when~$X$ is finite)
or the Baire space (when~$X$ is countably infinite). Continuity for functions
between streams amounts to the following:

\begin{lem}\label{lem:continuous_stream}
  A function~$f : \Stream{X} \to \Stream{Y}$ is continuous if and only if, for
  each stream~$s$ in~$\Stream{X}$, each projection~$f(s)_k$ of~$f(s)$ depends
  on at most
  a finite prefix of~$s$.
\end{lem}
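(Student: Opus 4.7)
The plan is to unwind the definition of the product topology and, in each direction, translate between basic open neighbourhoods of $\Stream{X}$ and finite prefixes of streams. The essential observation is that although basic open sets are specified by finitely many \emph{arbitrary} coordinates, one can always enlarge such a finite set to an initial segment $\{0,\dots,n-1\}$ without leaving the basic open.

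For the forward direction, suppose $f$ is continuous, fix $s \in \Stream{X}$ and $k \in \N$, and consider the basic open set $V = \{t \in \Stream{Y} \mid t_k = f(s)_k\}$. Its preimage $f^{-1}(V)$ is open and contains $s$, so it contains a basic open neighbourhood of $s$ of the form $U = \{s' \mid s'_{i_1}=s_{i_1},\dots,s'_{i_m}=s_{i_m}\}$ for some finite set $I = \{i_1,\dots,i_m\}$. Choosing any $n > \max I$, every stream $s'$ agreeing with $s$ on the first $n$ entries lies in $U$, hence in $f^{-1}(V)$, so $f(s')_k = f(s)_k$. Thus $f(s)_k$ depends only on $s_0,\dots,s_{n-1}$.

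For the converse, I would show that $f^{-1}(W)$ is open for every basic open $W \subseteq \Stream{Y}$. Since a basic open is a finite intersection of sets of the form $\{t \mid t_k = y\}$ and preimages commute with intersections, it suffices to treat $W = \{t \mid t_k = y\}$. Given $s \in f^{-1}(W)$, the hypothesis provides a finite prefix $s_0,\dots,s_{n-1}$ such that $f(s')_k = f(s)_k = y$ whenever $s'$ agrees with $s$ on this prefix. The cylinder $\{s' \mid s'_0=s_0,\dots,s'_{n-1}=s_{n-1}\}$ is thus a basic open neighbourhood of $s$ contained in $f^{-1}(W)$, proving $f^{-1}(W)$ open.

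Neither direction is really hard; the only thing to watch is bookkeeping on the indices, namely the passage between arbitrary finite sets of coordinates (which is what the product topology gives) and prefixes (which is what the statement asks for). This is handled simply by replacing a finite index set $I$ by the prefix of length $\max I + 1$. I therefore do not anticipate any genuine obstacle, and the proof should occupy only a few lines once these translations are in place.
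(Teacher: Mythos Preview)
Your proposal is correct and follows essentially the same argument as the paper. The only organisational difference is that the paper factors out your observation about passing from finite index sets to prefixes into a separate preliminary lemma (characterising open subsets of $\Stream{X}$ as those $V$ satisfying $\forall s\in V,\ \exists n,\ \forall t,\ s_{\restrict n}=t_{\restrict n}\Rightarrow t\in V$), and then both directions simply invoke that lemma; you do the same work inline.
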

Writing~$s_{\restrict n}$ for the restriction of stream~$s$ to its finite
prefix of length~$n$, the condition is equivalent to
\begin{equation}\tag{\ensuremath{\ast}}\label{eqn:prefix_condition}
  \forall s\in \Stream{X},
  \forall k\geq0,
  \exists n\geq0,
  \forall t\in \Stream{X},
  s_{\restrict n} = t_{\restrict n} \implies f(s)_k = f(t)_k
  \,.
\end{equation}
Before proving Lemma~\ref{lem:continuous_stream}, let's look at a preliminary
result.
\begin{lem}\label{lem:stream_openset}
  For any subset~$V\subseteq \Stream{X}$, we have:~$V$ is open iff
  \[
    \forall s\in V, \exists n\geq 0, \forall t \in \Stream{X}, s_{\restrict n}
    = t_{\restrict n} \implies t \in V
    \,.
  \]
\end{lem}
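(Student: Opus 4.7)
The plan is to chase the two implications directly from the definition of the product topology, using the fact that the basic open sets have the form $\prod_{i\geq 0} U_i$ with finitely many $U_i$ singletons and the rest equal to $X$.

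For the forward direction, I would start with an open $V$ and $s \in V$, then unwind the definition of ``open'' to get a basic open neighbourhood $B = \prod_{i \geq 0} U_i \subseteq V$ containing $s$, with $U_i = \{s_i\}$ for $i$ in some finite set $F$ and $U_i = X$ otherwise. Setting $n = \max(F) + 1$ (or $0$ if $F$ is empty), any $t$ with $t_{\restrict n} = s_{\restrict n}$ agrees with $s$ on every index in $F$, hence belongs to $B$ and a fortiori to $V$. This gives the required $n$.

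For the converse, assume $V$ satisfies the prefix condition. For each $s \in V$, let $n_s$ be the witness supplied by the hypothesis, and define
\[
  B_s \;=\; \prod_{i\geq 0} U_i^{(s)}, \qquad U_i^{(s)} = \begin{cases} \{s_i\} & i < n_s \\ X & i \geq n_s\end{cases}
\]
which is a basic open set containing $s$. By choice of $n_s$ we have $B_s \subseteq V$, so $V = \bigcup_{s \in V} B_s$ is a union of basic opens, hence open.

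I do not expect real obstacles here; the only mild subtlety is being careful about the empty finite-support case (where $F = \emptyset$ or $n = 0$, in which case the condition $s_{\restrict 0} = t_{\restrict 0}$ is vacuous and forces $V$ to be all of $\Stream{X}$, consistent with $V$ being open). The rest is just translating between ``fixing finitely many coordinates'' and ``fixing a finite prefix'', which is immediate since a finite set of indices is always contained in an initial segment.
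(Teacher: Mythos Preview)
Your proposal is correct and follows essentially the same route as the paper: the forward direction reduces to basic opens (where a finite set of fixed coordinates is absorbed into a prefix), and the converse covers $V$ by the prefix-cylinders $B_s = \{t \mid t_{\restrict n_s} = s_{\restrict n_s}\}$. The paper's argument is terser but identical in substance.
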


\begin{proof}
  The $\implies$ direction is immediate: an open set is a union of basic open
  sets, which satisfy the condition. (Recall that a basic open set is of the
  form~$\prod_{i\geq0} U_i$, where each~$U_i$ is~$X$, except for finitely many
  that are singleton sets.)

  For the~$\isimpliedby$ direction, we define, for each~$s\in V$, the set~$V_s
  = \{t \mid s_{\restrict n_s}=t_{\restrict n_s}\}$, where~$n_s$
  is the integer coming from condition. We have~$V = \bigcup_{s\in V} V_s$.
\end{proof}

\begin{proof}[Proof of Lemma~\ref{lem:continuous_stream}]
  Suppose the function~$f : \Stream{X} \to \Stream{Y}$ satisfies
  condition~$(\ref{eqn:prefix_condition})$. To show that~$f$ is continuous,
  it is enough to show that the inverse image of any basic open set is an open
  set. Because the inverse image commutes with intersections, it is sufficient
  to look at \emph{pre} basic open sets of the form~$V_{k,y}=\{s \mid s_k =
  y\}$.

  To show that~$f^{-1}(V_{k,y})$ is open, we use
  Lemma~\ref{lem:stream_openset} and show that~$s\in
  f^{-1}(V_{k,y})$ implies
  \[
    \forall s \in f^{-1}(V_{k,y})
    \exists n\geq 0, \forall t \in \Stream{X}, s_{\restrict n}
    = t_{\restrict n} \implies t \in f^{-1}(V_{k,y})
    \,.
  \]
  Because~$s\in f^{-1}(V_{k,y})$ is~$f(s)_k = y$, this is implied by
  condition~$(\ref{eqn:prefix_condition})$.

  For the converse, suppose~$f : \Stream{X} \to \Stream{Y}$ is continuous. We
  want to show that it satisfies condition~$(\ref{eqn:prefix_condition})$.
  Let~$s\in \Stream{X}$ and~$k\geq0$. The set~$\{t \mid f(t)_k = f(s)_k\}$ is
  open and because~$f$ is continuous, its inverse image also is open. By
  Lemma~\ref{lem:stream_openset}, we now that there is some~$n$ such that
  $s_{\restrict n} = t_{\restrict n} \implies f(t)_k = f(s)_k$. This finishes
  the proof.
\end{proof}
Because of this, \emph{constructive} functions between streams
are usually held to be continuous. We expect them to arise as continuous
functions with the additional properties that:
\begin{itemize}
  \item
    finding the finite prefix needed to compute a chosen element of~$f(s)$ is
    computable, and

  \item
    finding the value of the element of~$f(s)$ from that finite prefix is
    computable.
\end{itemize}

Note that the discrete space~$X$ may be generalized to a
family~$(X_i)_{i\geq0}$ that need not be constant. More interestingly, we can
allow the set~$X_i$ (giving the set of possible values for the~$i$th element
of a stream) to depend on the~$i$th prefix of the stream. We can in this way
obtain the space of increasing streams of natural numbers:

\begin{itemize}
  \item
    the set~$X_0$ does not depend on anything and is defined as~$\N$,

  \item
    the set~$X_1$ depends on the value of~$x_0\in X_0$: $X_{1,x_0} = \{ k\in
    \N , x_0 \le k \}$,

  \item
    the set~$X_2$ depends on~$x_1\in X_{1,x_0}$, etc.
\end{itemize}
The set of increasing streams is not naturally a product space but is a
subspace of~$\Stream{\N}$. Because of that, the topology is the expected one and
continuous functions are still characterized by
Lemma~\ref{lem:continuous_stream}.


\subsection{Infinite Trees, Natural Topology}  
\label{sub:natural}

The natural topology for sets of \emph{infinite trees} is less well known than
the Cantor and Baire topologies. The simplest kind of infinite tree, the
infinite \emph{binary tree} has a root, and two distinguished ``branches''
going from that root to two ``nodes''. Each of these two nodes also has two
branches, etc. An infinite binary tree \emph{over}~$X$ is a way to label each
node of the infinite binary tree with an element of~$X$.
If we write~$\B$ for the set~$\{0,1\}$, each node of the infinite binary tree
can be identified by a list of elements of~$\B$: this list simply represents the
branch leading to this node from the root. The set of infinite binary trees
over~$X$, written~$\Tree{\B}{X}$, can thus be defined as
\[
  \Tree{\B}{X} = X \times (\B\to X)
                  \times (\B^2\to X)
                  \times \dots
                  \times (\B^i\to X)
                  \times \dots
\]
where each term gives the~$i$th ``layer'' of the tree as a function from finite
branches of length~$i$ to~$X$. We can rewrite this as
\[
  \Tree{\B}{X}
  \quad=\quad
  \prod_{i\geq0} \big(\B^i \to X\big)
  \quad=\quad
  \prod_{i\geq0} \left(\prod_{t\in \B^i} X\right)
  \,.
\]
By replacing the set~$\B$ by some other set~$B$, we obtain the ternary trees
over~$X$ or countably-branching trees over~$X$, etc. Streams themselves are
recovered by taking~$B = \{\star\}$.
If both~$B$ and~$X$ are endowed with the discrete topology, we obtain a
natural topology on~$\Tree{B}{X}$. Note that when~$B$ is infinite, the
spaces~$B^i \to X$ are not discrete anymore. Nevertheless, we have:

\begin{lem}\label{lem:continuous_trees}
  Let~$A$, $B$, $X$ and~$Y$ be discrete spaces; a function~$f : \Tree{A}{X}
  \to \Tree{B}{Y}$ is continuous iff for every~$t\in \Tree{A}{X}$, the value
  at each node of~$f(t)$ only depends on a finite subtree\footnote{A
  \emph{subtree} is a set of nodes that contains the root of the tree and is
  closed by the ancestor relation.} of~$t$.
\end{lem}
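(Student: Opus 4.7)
The plan is to follow exactly the pattern used for streams: first characterize open sets of $\Tree{B}{X}$ by a ``finite subtree determines membership'' condition, then use this characterization to reduce continuity to a statement about values at single nodes.

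First, I would prove the tree analog of lemma~\ref{lem:stream_openset}: a subset $V \subseteq \Tree{B}{X}$ is open iff for every $t \in V$ there exists a finite subtree $\sigma$ of $t$ such that every $t' \in \Tree{B}{X}$ agreeing with $t$ on $\sigma$ lies in $V$. The $\Rightarrow$ direction unwinds the definition of the product topology on $\prod_{i\ge 0}(B^i \to X)$: a basic open set fixes the values on finitely many finite branches, and the finite set of branches so fixed can be closed under ancestors into a finite subtree. For $\Leftarrow$, write $V$ as the union of the basic open sets associated to each $t \in V$ and its witnessing subtree $\sigma_t$ (namely, the set of trees agreeing with $t$ on $\sigma_t$). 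This step is the one requiring the most care, because when $B$ is infinite the spaces $B^i \to X$ are not discrete, so one must argue directly from the subbase of $\Tree{B}{X}$ rather than treating it as a product of discrete factors.

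Once this lemma is in place, the main equivalence is a near-copy of the proof of lemma~\ref{lem:continuous_stream}. For the $\Rightarrow$ direction, fix $t$ and a node $p \in B^*$. The singleton $\{f(t)(p)\}$ is clopen in $Y$, hence the set $W_{p,f(t)(p)} = \{s \in \Tree{B}{Y} \mid s(p) = f(t)(p)\}$ is a subbasic open set of $\Tree{B}{Y}$ containing $f(t)$; its preimage under the continuous $f$ is open and contains $t$, so by the characterization of open sets there is a finite subtree $\sigma$ of $t$ on which agreement forces $f(t')(p) = f(t)(p)$, which is exactly the required local dependence.

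For the $\Leftarrow$ direction, it suffices to show that the preimage of every subbasic open set $W_{p,y} = \{s \mid s(p) = y\}$ is open, since such sets generate the topology. Given $t \in f^{-1}(W_{p,y})$ we have $f(t)(p) = y$; by hypothesis some finite subtree $\sigma$ of $t$ determines $f(t)(p)$, so every $t'$ agreeing with $t$ on $\sigma$ also satisfies $f(t')(p) = y$ and lies in $f^{-1}(W_{p,y})$. Applying the tree analog of lemma~\ref{lem:stream_openset} yields openness. The main obstacle throughout is purely bookkeeping: making sure that a ``finite collection of (branch, value) constraints'' can always be repackaged as agreement on a finite subtree, i.e.\ closing finitely many nodes under the ancestor relation still yields a finite subtree. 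Once that is clear, the argument mirrors the stream case almost verbatim.
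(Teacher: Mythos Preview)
Your proposal is correct and follows essentially the same approach as the paper: the paper's proof simply says to redo the argument for streams with ``finite prefix of length~$n$'' replaced by ``finite subtree'', and remarks on the form of basic open sets, which is exactly what you spell out in more detail (including the tree analog of lemma~\ref{lem:stream_openset}). Your extra care about subbasic versus basic opens and about closing under ancestors is sound bookkeeping that the paper leaves implicit.
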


\begin{proof}
  The proof of this lemma is exactly the same as the proof of
  Lemma~\ref{lem:continuous_stream}, except that we replace the natural
  number~$n$ in~$t_{\restrict n}$ (for some~$t\in \Tree{B}{Y}$) by a
  \emph{finite subtree}.
  The only remark is that basic open sets of~$\Tree{B}{Y}$ are of the form
  $\prod_{i\geq0} \prod_{t\in B^i} X_{i,t}$ where all sets~$X_{i,t}$ are equal
  to~$X$, except for finitely many that are singletons of the
  form~$\{x_{i,t}\}$ for some~$x_{i,t} \in X$.
\end{proof}
It is again possible to devise more general notions of trees by allowing the
set~$X$ at a node in the tree to depend on the values of its location as a
path from the root. The resulting space is endowed with the subspace topology
and Lemma~\ref{lem:continuous_trees} still holds.
We will later generalize this notion further by allowing the branching of a
node (given by the set~$B$) to depend itself on the value stored at the node.
With this generalisation we can model very general objects, such as infinite
automata that issue commands and change to a new state (choose a branch) based
on the responses.


\subsection{Infinite Trees, Wild Topology}  
\label{sub:wild}

The topology we naturally get in this paper corresponds to a
different topology on trees. When looking at
\[
  \Tree{B}{X} \quad=\quad \prod_{i\geq0} \left(\prod_{t\in B^i} X\right)
  \,,
\]
we can endow the inner product space with the ``box'' topology, where basic
opens are given by arbitrary products of open sets. Because~$B$ and~$X$ are
discrete sets, this amounts to giving the discrete topology to each
layers~$B^i \to X$. Instead of being generated by ``finite subtrees'', open
sets are generated by ``subtrees with finite depth''.

\begin{lem}\label{lem:continuous_trees_wild}
  Let~$A$, $B$, $X$ and~$Y$ be discrete spaces; a function~$f : \Tree{A}{X}
  \to \Tree{B}{Y}$ is continuous \emph{for the wild topology} iff for
  every~$t\in \Tree{A}{X}$, and~$k\in\N$, there is an~$n\in\N$ such that
  the nodes of~$f(t)$ at depth less than~$k$ depend only on the nodes of~$t$
  at depth less than~$n$.
\end{lem}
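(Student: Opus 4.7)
The plan is to mimic exactly the strategy used for lemmas~\ref{lem:continuous_stream} and~\ref{lem:continuous_trees}, replacing ``finite subtrees'' by ``full subtrees of bounded depth''. Concretely, writing $t_{\restrict k}$ for the truncation of a tree $t \in \Tree{B}{Y}$ to its top $k$ layers (an element of $\prod_{i<k}(B^i \to Y)$), the key auxiliary fact is an analog of lemma~\ref{lem:stream_openset}: a subset $V \subseteq \Tree{B}{Y}$ is open in the wild topology iff
\[
  \forall t \in V,\ \exists k \geq 0,\ \forall t'\in \Tree{B}{Y},\ t_{\restrict k} = t'_{\restrict k} \implies t' \in V\,.
\]
First I would prove this auxiliary fact. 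The $\Rightarrow$ direction is immediate once one observes that a basic open of the box topology on $\prod_{i\geq 0}(B^i \to Y)$ is of the form $\prod_{i\geq 0} W_i$ with each $W_i$ open in the discrete space $B^i \to Y$ and with all but finitely many $W_i$ equal to the whole layer; such a basic open clearly constrains only the first $k$ layers for some $k$. For the $\Leftarrow$ direction, one writes $V$ as the union $\bigcup_{t\in V} V_t$ where $V_t = \{t' \mid t_{\restrict k_t} = t'_{\restrict k_t}\}$, each $V_t$ being basic open (the first $k_t$ layers are frozen to singletons, the remaining layers are the whole discrete space $B^i \to Y$).

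For the lemma itself I would then proceed exactly as in the proof of lemma~\ref{lem:continuous_stream}. For the $(\Leftarrow)$ direction: assume the depth condition, and to show $f$ continuous it suffices to show that $f^{-1}(W)$ is open for every pre-basic open of the form $W_{i,b,y} = \{t \mid t(i)(b) = y\}$ with $i \in \N$, $b \in B^i$, $y \in Y$ (these generate the box topology by finite intersection). Pick $s \in f^{-1}(W_{i,b,y})$ and take $k = i+1$; the hypothesis gives some $n$ such that the top $k$ layers of $f(s)$ depend only on the top $n$ layers of $s$. In particular any $s'$ with $s_{\restrict n} = s'_{\restrict n}$ satisfies $f(s')(i)(b) = f(s)(i)(b) = y$, so $s' \in f^{-1}(W_{i,b,y})$. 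By the auxiliary fact, $f^{-1}(W_{i,b,y})$ is open.

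Conversely, assuming $f$ continuous, fix $s \in \Tree{A}{X}$ and $k \in \N$. The set $U = \{t \in \Tree{B}{Y} \mid t_{\restrict k} = f(s)_{\restrict k}\}$ is basic open (it freezes the top $k$ layers of the tree to singletons), hence $f^{-1}(U)$ is open and contains $s$; applying the auxiliary fact produces an $n$ such that $s_{\restrict n} = s'_{\restrict n}$ implies $f(s') \in U$, that is, the top $k$ layers of $f(s')$ coincide with those of $f(s)$. This is exactly the stated depth condition. No step is really hard here; the only subtle point is getting the auxiliary characterization right, since it requires noting that discreteness of $Y$ (and finiteness of each $B^i$ is \emph{not} needed) makes each layer $B^i \to Y$ discrete, so the box topology really is generated by prefixes of bounded depth rather than by finite subtrees.
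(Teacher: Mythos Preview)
The paper does not actually supply a proof of this lemma; it merely restates the condition as~$(\ast\ast)$ and moves on, implicitly relying on the same argument as for lemmas~\ref{lem:continuous_stream} and~\ref{lem:continuous_trees}. Your overall strategy---prove a depth-prefix characterisation of wild-open sets, then run the two directions exactly as for streams---is the intended one, and your auxiliary fact and your $(\Rightarrow)$ direction are correct.

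There is, however, a genuine slip in your $(\Leftarrow)$ direction. You claim that the sets $W_{i,b,y} = \{t \mid t(i)(b) = y\}$ generate the wild topology by finite intersection. This is false as soon as $B$ is infinite: a finite intersection of such sets constrains only finitely many node-values within each layer, whereas a wild-basic open may freeze an \emph{entire} layer (the layer $B^i \to Y$ is discrete in the wild topology). For instance, with $B=\N$ and $Y=\Two$, the set of trees whose $0$th layer is the constant function~$0$ is wild-open but is not a union of finite intersections of the $W_{i,b,y}$. The family $W_{i,b,y}$ generates the \emph{natural} topology of section~\ref{sub:natural}, not the wild one---and the whole point of this lemma is precisely the case where the two differ. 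The fix is immediate: replace $W_{i,b,y}$ by the correct sub-basic opens $W_{i,g} = \{t \mid t(i) = g\}$ for $g : B^i \to Y$, or simply argue directly via your auxiliary characterisation (take any wild-open $V$, any $s \in f^{-1}(V)$, extract the depth~$k$ witnessing $f(s)\in V$ from the auxiliary fact, then apply the hypothesis to obtain~$n$). A related minor point: you twice write ``box topology'' for the topology on the outer product $\prod_{i\geq 0}(B^i\to Y)$; the outer product carries the ordinary product topology (only the inner layers get the box, i.e.\ discrete, topology), as your own clause ``all but finitely many $W_i$ equal to the whole layer'' confirms.
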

Formally, this looks very similar to condition~$(\ref{eqn:prefix_condition})$ on
page~\pageref{eqn:prefix_condition}:
\begin{equation}\tag{\ensuremath{\ast\ast}}
  \forall t\in \Tree{A}{X},
  \forall k\geq0,
  \exists n\geq0,
  \forall t'\in \Tree{A}{X},
  t_{\restrict k} = t'_{\restrict k} \implies f(t)_n = f(t')_n
  \,,
\end{equation}
where~$t_{\restrict k}$ is the complete subtree of~$t$ up-to depth~$k$.
Intuitively, this topology considers infinite trees as streams of their
layers, where layers are discrete.

\medbreak
When~$A$ and~$B$ are finite, the two notions of continuity
(Lemma~\ref{lem:continuous_trees} and~\ref{lem:continuous_trees_wild})
coincide. When this is not the case, we cannot compare continuous
functions for the two topologies.

\begin{itemize}

  \item\label{ex:natural_not_wild}
    Consider~$f:\Stream{\N} \to \Tree{\N}{\N}$ sending the stream~$s$ to the tree~$f(s)$
    where the node indexed by~$(i_0,\dots,i_n)$
    is~$s_{i_0}+s_{i_1}+...+s_{i_n}$. This is certainly continuous for the
    natural topology. However, because the first layer of the output is
    infinite, we cannot bound the number of layers (elements) of the input
    stream~$s$ that are needed to construct it: this function is not continuous
    for the wild topology.

  \item
    Consider~$g:\Tree{\N}{\B} \to \Stream{\B}$ where the~$i$th element
    of~$g(t)$ is the maximum of the complete~$i$th layer of~$t$ ($\B$ being
    the complete lattice with two elements). This function is continuous in
    the wild sense, but because we need to know the whole~$i$th layer of the
    input to get the~$i$th value of the output, this function is not continuous
    for the natural topology (and certainly not computable).

\end{itemize}



\section{Preliminaries II. Martin L\"of Type Theory}

\subsection{Basic Features}  

We work in a meta theory that is in essence Martin-L\"o{}f's dependent type
theory~\cite{ML73} with two additional features:

\begin{itemize}

  \item
    coinductive types,

  \item
    inductive-recursive definitions.

\end{itemize}
All the constructions described in the paper have been defined using the
dependently typed functional programming language Agda~\cite{Agda}.

\subsubsection*{A Note about Equality}     

This paper is concerned with constructions in ``pure'' type theory, \ie
dependent type theory without identity. Those constructions enjoy many
interesting properties, but proving them requires some notion of equality.
Equality in Martin-L\"of type theory is a complex subject about which whole
books have been written~\cite{HOTTbook}. We try to be mostly agnostic about
the flavor of equality we are using and only rely on the ``simplest'' one:
intentional equality, written~$a \equiv_T b$, or simply~$a \equiv b$. This
makes it possible to use vanilla Agda for checking proofs.\footnote{All the
Agda code was checked using \t{Agda 2.6.1.1} with the flag \t{--without-K}.
The Agda code is available
at~\url{http://www.lama.univ-smb.fr/~hyvernat/Files/Infinite/agda.tgz}, with
the file \t{PAPER.agda} referencing all the formalized results from the paper.
For those without a working Agda installation, the code is also browsable
directly
from~\url{http://www.lama.univ-smb.fr/~hyvernat/Files/Infinite/browse/PAPER.html}.
}

We annotate the proofs in
the paper with
\begin{itemize}
  \item
    \checked: for those that have been formalized using Agda,

  \item
    \partlychecked: for those which have only partly been formalized,
    typically because the proof is too complex to write in Agda, or because it
    requires a much stronger version of equality than we have at our disposal.
\end{itemize}
Because we want to explain Agda code only slightly less than you want to read
Agda code, the formalized proofs are either omitted from the paper, or
explained informally.

We sometimes need to assume equality for functions is extensional, \ie that~$f
\equiv_{} g$ iff~$f\,a \equiv_B g\,a$ for all~$a: A$. Those proofs are clearly
identified.


\subsubsection*{Notation}  

The notation for dependent types is standard. Here is a summary:
\begin{itemize}

  \item We write~$\Gamma \vdash B$ to mean that~$B$ is a well-formed type in
    context~$\Gamma$.  We write $A = B$ to express that $A$ and $B$ are
    the same by definition.

  \item We write~$\Gamma \vdash b:B$ to mean that~$b$ is an element of
    type~$B$ in context~$\Gamma$. The context is often left implicit and we
    usually write~$b:B$.
    We write~$a = b : B$ to express that~$a$ and~$b$ are
    definitionally equal in type~$B$. When the type $B$ can easily be deduced
    from the context, we will usually write just~$a = b$.

  \item If $B$ is a type and~$C$ is a type depending on~$x : B$,
    \ie,~$x:B\vdash C$, we write
    \begin{itemize}
      \item $\SI{x:B}C$ for the dependent sum. Its canonical elements are pairs~$⟨
        b, c⟩$ with $b : B$ and $c : C[x/b]$,
      \item $\PI{x:B}C$ for the dependent product. Its canonical elements are
        functions~$\LAM{x:B}u$ where $x:B \vdash u : C$.
    \end{itemize}
    When the type~$C$ is constant, we abbreviate those by~$B \times C$ and~$B \to C$.

  \item The usual amenities are present: the natural numbers, W-types and so on.
\end{itemize}
We use a universe~$\Set$ of ``small types'' containing~$\Zero$ (with no
element),~$\One$ (with a single element~$\star$) and~$\Two$ (with two
elements~$0$ and~$1$). Moreover, the dependent sums and products are reflected
in this universe, and we use the same notation~$\SI{b:B}C$ and~$\PI{b:B}C$
whenever~$B:\Set$ and~$b:B \vdash C:\Set$.
We assume that this universe is closed under many inductive-recursive and
coinductive definitions which will be treated below.

\smallbreak
We are not always consistent with notation for
application of functions and usually write
\begin{itemize}
  \item
     $f\,x$ when the result is an element of a small type (an element
     of~\Set),
  \item
    $A(i)$ when the result is itself a small type (and thus, not an element
    of~\Set).
\end{itemize}


\subsubsection*{Predicates and Families}  

The Curry-Howard isomorphism makes the type~$\Set$ into a universe of
propositions.
\begin{defi}
  If~$A:\Set$, the collection of \emph{predicates} on~$A$
  is defined as
  \[
    \Pow(A) \quad = \quad A \to \Set
    \,.
  \]
  We introduce the following notations~\cite{sambin93:_toolbox}: if~$X$
  and~$Y$ are predicates on~$A$,
  \begin{itemize}
    \item ``$a \IN X$'' is~``$X(a)$'', 

    \item ``$X \sub Y$'' is~``$\PI{a:A} (a\IN X) \to (a\IN Y)$,

    \item ``$X \meets Y$'' is~``$\SI{a:A} (a\IN X) \times (a\IN Y)$'',

    \item ``$X \cap Y$'' is~``$\LAM{a:A}(a \IN X)\times(a\in Y)$'',

    \item ``$X \cup Y$'' is~``$\LAM{a:A}(a \IN X)+(a\in Y)$''.

  \end{itemize}
\end{defi}
The intuition is that a predicate on~$A$ is just a subset of~$A$ in some
constructive and predicative\footnote{if~$A:\Set$, $\Pow(A)$ is not of
type~$\Set$} set theory. It is then natural to call a predicate on
some~$A\times B$ a binary \emph{relation}. We sometimes identify a relation
$R:\Pow(A\times B)$ with its curried version~$R : A \to \Pow(B)$.
Finally, we define the following trivial operation
\begin{defi}
  If $R : \Pow(A\times B)$, the converse $R^\sim$ is the predicate in~$\Pow(B\times A)$ given by
  \[
  R^\sim(b,a) \quad=\quad R(a,b)
  \ .
  \]
\end{defi}

We will also need a notion of family.
\begin{defi}
  A family of~$C$ is given by a set~$I$ together with a function from~$I$
  to~$C$. In other
  words,
  \[
    \Fam(C)
    \quad = \quad
    \SI{I:\Set} I \to C
    \,.
  \]
\end{defi}


\subsection{Inductive-recursive Definitions}  

Inductive definitions are a way to define (weak) initial algebras for
endofunctors on~$\Set$. A typical example is defining~$\mathtt{list}(X)$ as
the least fixed point of~$Y\mapsto \One + X \times Y$.
In their simplest form, inductive-recursive definitions~\cite{dybjer:JSL2000}
give a way to define (weak) initial algebras for endofunctors
on~$\Fam(C)$.\footnote{To ensure the definition is healthy, the endofunctor
has to be expressible according to a certain coding
scheme~\cite{dybjersetzer:APAL2003}.} It means we can define, at the same
time:
\begin{itemize}
  \item a inductive set~$U$, called the \emph{index set},
  \item and a recursive function~$f : U \to C$.
\end{itemize}
Of course, the set~$U$ and the function~$f$ may be mutually dependent.

\smallbreak
The type~$C$ may be large or small. Taking~$C=\One$, we recover usual
inductive types as the index part of such a definition, the recursive
function~$f$ always being trivial. In non-degenerate cases, the inductive
clauses defining~$U$ are expressed using the simultaneously defined
function~$f$. Here is a traditional example with~$C=\Set$: complete binary
trees indexed by~$X$, defined one layer at a time. Here is the definition,
first using Agda syntax:
\begin{allttt}\label{ex:TreeIR}
  mutual
    data Tree (X : Set) : Set where
      Empty : Tree X
      AddLayer : (t : Tree X) → (branches t \(\times\) Bool → X) → Tree X
    branches : \{X : Set\} → Tree X → Set
    branches Empty = One
    branches (AddLayer t l) = (branches t) \(\times\) Bool
\end{allttt}
While~\t{Empty} corresponds to the empty tree, the
definitions
\begin{allttt}
    T₁ = AddLayer Empty (λ b → if b.2 then 2 else 1)
    T₂ = AddLayer T₁ (λ b →
                        if   b.1.2 \&\&     b.2 then 6
                        elif b.1.2 \&\& not b.2 then 5
                        elif not b.1.2 \&\& b.2 then 4
                        else                       3)
\end{allttt}
give trees labeled by natural numbers: (putting the ``{\t{False}'' branch on the left, and the
``\t{True}'' branch on the right})
\[
  \mathtt{T}_1 \quad = \ %
  \begin{tikzpicture}[
        level distance=1cm,
        level 1/.style={sibling distance=1cm},
        tree node/.style={draw=none},
        every child node/.style={tree node},
        baseline={([yshift=-1ex] current bounding box.center)},
        ]
    \node[tree node] (Root) {.}
      child {node {.} edge from parent node[over] {\tiny1}}
      child {node {.} edge from parent node[over] {\tiny2}};
  \end{tikzpicture}
  \mskip80mu
  \mathtt{T}_2 \quad = \ %
  \begin{tikzpicture}[
        level distance=1cm,
        level 1/.style={sibling distance=3cm},
        level 2/.style={sibling distance=1cm},
        tree node/.style={draw=none},
        every child node/.style={tree node},
        baseline={([yshift=-1ex] current bounding box.center)},
        ]
    \node[tree node] (Root) {.}
    child {
      node {.}
        child {node {.} edge from parent node[over] {\tiny 3}}
        child {node {.} edge from parent node[over] {\tiny 4}}
      edge from parent node[over] {\tiny1}}
    child {
      node {.}
        child {node {.} edge from parent node[over] {\tiny 5}}
        child {node {.} edge from parent node[over] {\tiny 6}}
      edge from parent node[over] {\tiny2}};
  \end{tikzpicture}\]
The corresponding functor takes the family~$⟨ T:\Set, b:T\to\Set⟩$
to the family
\begin{itemize}
  \item
    index set: $T' = \One + \SI{t:T} \big((b\,t \times \Two) \to \N\big)$,

  \item
    recursive function~$b'$ defined with
    \begin{itemize}
      \item $b'\,\star = \One$, where~$\star$ is the only element of~$\One$,
        \item $b'\,⟨ t, l⟩ = (b\,t)\times\Two$.
    \end{itemize}

\end{itemize}
We will, in Section~\ref{sub:layering}, encounter a similar situation in which
the type~$C$ will be~$\Fam(I)$, \ie we will need to take the least fixed point
of a functor from~$\Fam\big(\Fam(I)\big)$ to itself.

Another typical example involves defining a universe~$U$ of types closed under
dependent function space:~$U$ needs to contains inductive elements of the
form~$\PI{A:U}(B:\mathsf{fam}(A))$, but~$\mathsf{fam}(A)$ is defined
as~$\mathsf{El}(A) \to U$ and makes use of the decoding function~$\mathsf{El}
: U \to \Set$.


\subsection{Greatest Fixed Points}  

In its simplest form a coinductive definition introduces some~$\NU{F}:\Set$ together
with a weakly terminal coalgebra~$c : \NU{F} \to F(\NU{F})$ for a sufficiently
healthy\footnote{for our purposes, ``sufficiently healthy'' amounts to
``polynomial''} functor~$F : \Set \to \Set$. For example, given a set~$A$, the
functor~$ F(X) = A \times X$ is certainly healthy and the set~$\NU{F}$
corresponds to the set of streams over~$A$. The coalgebra~$c$ ``decomposes'' a
stream~$[a_0, a_1, \ldots]$ into the pair~$⟨ a_0 , [a_1,a_2,\ldots]
⟩ $ of its head and its tail. Because the coalgebra is weakly terminal,
for any other such coalgebra~$x : X \to F(X)$ there is a map $X \to
\NU{F}$.
%
The corresponding typing rules are given by
\[
  \Rule{\sigma : X \to F(X)}{\nuIntro\,\sigma : X \to \NU{F}}
  \qquad\text{and}\qquad
  \Rule{}{\nuElim : \NU{F} \to F(\NU{F})}
\]
and the computation rule is
\[
  \nuElim\,(\nuIntro\,\sigma\,x) \quad = \quad F_{\nuIntro\,\sigma} \, (\sigma\,x)
  \,.
\]
Such coinductive definitions can be extended to families of sets: given an
index set~$I$, we introduce weakly terminal coalgebras for a sufficiently
healthy functor acting on~$\Pow(I)$.
The typing rules are extended as follows:
\[
  \Rule{\sigma : X \sub F(X)}{\nuIntro\,\sigma : X \sub \NU{F}}
  \qquad\text{and}\qquad
  \Rule{}{\nuElim : \NU{F} \sub F(\NU{F})}
\]
together with the computation rule
\[
  \nuElim\,(\nuIntro\,\sigma\,i\,x)
  \quad = \quad
  F_{\nuIntro\,\sigma}\,i \,  (\sigma\,i\,x)
  \,.
\]
Note that it does not seem possible to guarantee the existence of strict
terminal coalgebras~\cite{LetsUnfold} without extending considerably the type
theory.\footnote{It is apparently possible to do so in univalent type
theory~\cite{Ahrens}, where coinductive types can be defined from inductive
ones, just like in plain old set theory!}

\subsubsection*{Bisimulations}  

Equality in type theory is a delicate subject, and it is even more so in the
presence of coinductive types. The usual (extensional or intensional) equality
is easily defined and shown to enjoy most of the expected properties. However,
it is not powerful enough to deal with infinite objects. Two infinite objects
are usually considered ``equal'' when they are \emph{bisimilar}. Semantically,
bisimilarity has a simple definition~\cite{Aczel,Staton,Ahrens}. In the
internal language, two coalgebras are bisimilar if their decompositions are
equal, coinductively.
\begin{defi}\label{def:categorical_bisimulation}
  Given a locally cartesian closed category~$\C$ with an endofunctor~$F$ and
  two coalgebra~$c_i : T_i \to F(T_i)$ ($i=1,2$), a \emph{bisimulation}
  between~$T_1$ and~$T_2$ is given by a span~$T_1 \leftarrow R \rightarrow
  T_2$ with a coalgebra structure such that the following diagram commutes.
  \[
    \begin{tikzpicture}
      \matrix (m) [matrix of math nodes, column sep={6em,between origins}, row
      sep={4em,between origins},text height=1.5ex, text depth=.25ex]
      {
        T_1    & R    & T_2    \\
        F(T_1) & F(R) & F(T_2) \\
      };
      \path[morphism]
      (m-1-2) edge node[above]{$r_1$} (m-1-1)
      (m-1-2) edge node[above]{$r_2$} (m-1-3)
      (m-2-2) edge node[below]{$F_{r_1}$} (m-2-1)
      (m-2-2) edge node[below]{$F_{r_2}$} (m-2-3)
      (m-1-1) edge node[left]{$c_1$} (m-2-1)
      (m-1-2) edge node[right]{$r$} (m-2-2)
      (m-1-3) edge node[right]{$c_2$} (m-2-3)
      ;
    \end{tikzpicture}
  \]
\end{defi}
In particular, the identity span~$T \leftarrow T \rightarrow T$ is always a
bisimulation and the converse of a bisimulation between~$T_1$ and~$T_2$ is a
bisimulation between~$T_2$ and~$T_1$.

Functions between coinductive types ought to be congruences for bisimilarity:
\begin{defi}
  If $f_i : C_i \to D_i$ ($i=1,2$) are morphisms from coalgebras $c_i:C_i\to
  F(C_i)$ to coalgebras~$d_i : D_i \to F(D_i)$, we say that~$f_1$ and~$f_2$ are
  equal \emph{up to bisimulation}, written~$f_1 \approx f_2$, if for every
  bisimulation~$C_1 \leftarrow R \rightarrow C_2$, there is a bisimulation~$D_1
  \leftarrow S \rightarrow D_2$ and a morphism~$h:R\to S$ making the following
  diagram commute.
  \[
    \begin{tikzpicture}
      \matrix (m) [matrix of math nodes, column sep={6em,between origins}, row
      sep={4em,between origins},text height=1.5ex, text depth=.25ex]
      {
        F(C_1) & F(R) & F(C_2) \\
        C_1  & R    & C_2  \\
        D_1  & S    & D_2  \\
        F(D_1) & F(S) & F(D_2) \\
      };
      \path[morphism]
      (m-1-2) edge node[above]{$F_{r_1}$} (m-1-1)
      (m-1-2) edge node[above]{$F_{r_2}$} (m-1-3)
      (m-2-1) edge node[left]{$c_1$} (m-1-1)
      (m-2-2) edge node[right]{$r$} (m-1-2)
      (m-2-3) edge node[right]{$c_1$} (m-1-3)
      (m-2-2) edge node[above]{$r_1$} (m-2-1)
      (m-2-2) edge node[above]{$r_2$} (m-2-3)
      (m-2-1) edge node[left]{$f_1$} (m-3-1)
      (m-2-3) edge node[right]{$f_2$} (m-3-3)
      (m-3-1) edge node[left]{$d_1$} (m-4-1)
      (m-3-3) edge node[right]{$d_2$} (m-4-3)
      ;
      \path[morphism,densely dashed]
      (m-2-2) edge node[right]{$h$} (m-3-2)
      (m-3-2) edge node[below]{$s_1$} (m-3-1)
      (m-3-2) edge node[below]{$s_2$} (m-3-3)
      (m-3-2) edge node[right]{$s$} (m-4-2)
      (m-4-2) edge node[below]{$F_{s_1}$} (m-4-1)
      (m-4-2) edge node[below]{$F_{s_2}$} (m-4-3)
      ;
    \end{tikzpicture}
  \]
\end{defi}
Translated in the internal language,~$f_1 \approx f_2$ means that if~$x$ and~$y$ are
bisimilar, then~$f_1\,x$ and~$f_2\,y$ are also bisimilar.
It is not difficult to show that
\begin{itemize}
  \item
    $\approx$ is reflexive and symmetric,
  \item
    $\approx$ is compositional: if~$f_1\approx f_2$ and~$g_1 \approx g_2$,
    then~$f_1 g_1 \approx f_2 g_2$ (if the composition makes sense).
\end{itemize}
\label{rk:bisim_equiv}
Without additional properties (which hold in our context), $\approx$ is
however not transitive.

Coinductive types are interpreted by \emph{weakly} terminal coalgebras. There
can be, in principle, several non-isomorphic weakly terminal coalgebras. We
however have the following.
\begin{lem}\label{lem:WTC}
  Let~$T_1$ and~$T_2$ be weakly terminal coalgebras for the endofunctor~$F$,
  \begin{itemize}
    \item
      if~$m:T_1 \to T_2$ is a mediating morphisms, then~$m \approx \id$,

    \item
      if~$f:T_1 \to T_2$ and~$g:T_2\to T_1$ are mediating morphisms,
      then~$gf \approx \id_{T_1}$ and~$fg \approx \id_{T_2}$.
  \end{itemize}
\end{lem}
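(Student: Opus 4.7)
The plan is to prove the first bullet by constructing an explicit bisimulation and then derive the second bullet as an immediate corollary obtained by instantiating the first at the composites $gf$ and $fg$. The key fact used throughout is that a mediating morphism $m$ is by definition a coalgebra morphism, so it satisfies the square $c_2 \circ m = F_m \circ c_1$.

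For the first bullet I read the statement as being the case $T_1 = T_2 = T$ (the asymmetric reading does not type-check against the definition of $\approx$, and in any case the second bullet is what gets used later). Given any bisimulation $T \leftarrow R \rightarrow T$ with legs $r_1, r_2 : R \to T$ and coalgebra structure $r : R \to F(R)$ satisfying $c \circ r_i = F_{r_i} \circ r$, I will set $S := R$, $s := r$, $s_1 := m \circ r_1$, $s_2 := r_2$, and take $h := \id_R$. The two commutations required by the definition of $\approx$, namely $s_1 \circ h = m \circ r_1$ and $s_2 \circ h = \id_T \circ r_2$, hold on the nose. The content is the verification that $S$ is still a bisimulation: on the $s_1$-leg we compute $c \circ s_1 = c \circ m \circ r_1 = F_m \circ c \circ r_1 = F_m \circ F_{r_1} \circ r = F_{m \circ r_1} \circ r = F_{s_1} \circ s$, using in order the definition of $s_1$, the coalgebra morphism law for $m$, the bisimulation property of $R$, and functoriality; on the $s_2$-leg the condition is just the bisimulation property of $R$ restated.

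For the second bullet, observe that the composite of two coalgebra morphisms is again a coalgebra morphism, so $gf : T_1 \to T_1$ and $fg : T_2 \to T_2$ are mediating endomorphisms of $T_1$ and $T_2$. Applying the first bullet to each gives $gf \approx \id_{T_1}$ and $fg \approx \id_{T_2}$.

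The main obstacle is really just bookkeeping: keeping track of which direction the legs of each span point, and checking that the modified span $S$ inherits a coalgebra structure compatible with both $c$ on $T$ and the new leg $m \circ r_1$. No genuine difficulty arises once the candidate $S = R$ with the left leg twisted by $m$ is identified. The verification is a one-step diagram chase that invokes the coalgebra morphism law for $m$ exactly once, and the rest is functoriality of $F$ and the original bisimulation property of $R$.
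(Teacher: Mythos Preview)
Your approach is the same as the paper's: take $S := R$ with $h := \id_R$, keep one leg of the span as is, post-compose the other leg with the mediating morphism, and verify the new span is a bisimulation via the coalgebra-morphism square for $m$ together with functoriality of $F$. The derivation of the second bullet from the first is also identical.

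One correction, though it does not affect the argument: the asymmetric reading \emph{does} type-check. The definition of $f_1 \approx f_2$ allows $f_1 : C_1 \to D_1$ and $f_2 : C_2 \to D_2$ with all four coalgebras distinct; here one compares $m : T_1 \to T_2$ against $\id_{T_2} : T_2 \to T_2$, starting from a bisimulation $T_1 \leftarrow R \rightarrow T_2$ and producing one $T_2 \leftarrow S \rightarrow T_2$. The paper actually proves this asymmetric version, and your construction extends to it verbatim---simply let $r_1$ land in $T_1$ rather than in $T$. Since only the second bullet is invoked later, your restricted first bullet would suffice, but there is no reason to weaken it.
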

\begin{proof}
  Consider the following diagram:
  \[
    \begin{tikzpicture}
      \matrix (m) [matrix of math nodes, column sep={6em,between origins}, row
      sep={4em,between origins},text height=1.5ex, text depth=.25ex]
      {
        F(T_2) & F(R) & F(T_1) \\
        T_2    & R    & T_1    \\
        T_2    & R    & T_2    \\
        F(T_2) & F(R) & F(T_2) \\
      };
      \path[morphism]
      (m-1-2) edge node[above]{$F_{r_2}$} (m-1-1)
      (m-1-2) edge node[above]{$F_{r_1}$} (m-1-3)
      (m-2-1) edge node[left]{$t_2$} (m-1-1)
      (m-2-2) edge node[right]{$r$} (m-1-2)
      (m-2-3) edge node[right]{$t_1$} (m-1-3)
      (m-2-2) edge node[above]{$r_2$} (m-2-1)
      (m-2-2) edge node[above]{$r_1$} (m-2-3)
      (m-2-1) edge node[left]{$\id$} (m-3-1)
      (m-2-3) edge node[right]{$m$} (m-3-3)
      (m-3-1) edge node[left]{$t_2$} (m-4-1)
      (m-3-3) edge node[right]{$t_2$} (m-4-3)
      ;
      \path[morphism,densely dashed]
      (m-2-2) edge node[right]{$\id$} (m-3-2)
      (m-3-2) edge node[below]{$r_2$} (m-3-1)
      (m-3-2) edge node[below]{$mr_1$} (m-3-3)
      (m-3-2) edge node[right]{$r$} (m-4-2)
      (m-4-2) edge node[below]{$F_{r_2}$} (m-4-1)
      (m-4-2) edge node[below]{$F_{mr_1}$} (m-4-3)
      ;
    \end{tikzpicture}
  \]
  The only thing needed to make it commutative is that the bottom right square
  is commutative. This follows from the fact that~$m$ is a
  mediating morphism between the weakly terminal coalgebras:
  \[
    \begin{tikzpicture}
      \matrix (m) [matrix of math nodes, column sep={6em,between origins}, row
      sep={4em,between origins},text height=1.5ex, text depth=.25ex]
      {
        R    & T_1    & T_2    \\
        F(R) & F(T_1) & F(T_2) \\
      };
      \path[morphism]
      (m-1-1) edge node[above]{$r_1$} (m-1-2)
      (m-1-2) edge node[above]{$m$} (m-1-3)
      (m-1-1) edge node[left]{$r$} (m-2-1)
      (m-1-2) edge node[left]{$t_1$} (m-2-2)
      (m-1-3) edge node[left]{$t_2$} (m-2-3)
      (m-2-1) edge node[below]{$F(r_1)$} (m-2-2)
      (m-2-2) edge node[below]{$F(m)$} (m-2-3)
      ;
    \end{tikzpicture}
  \]
  The second point is a direct consequence of the first one.
\end{proof}

\begin{asm}\label{asm:bisim}
  We assume our type theory is ``compatible'' with bisimilarity, in the sense
  that any definable function~$f$
  satisfies~$f \approx f$.
\end{asm}
Since it is possible to construct a dependent type theory where bisimilarity
\emph{is} intensional equality~\cite{Ahrens,mogelberg}, this assumption is
reasonable. In our case, it allows to simplify some of the (pen and paper)
proofs while not needing the extension of our type theory.

The only use of this assumption will be Lemma~\ref{lem:bisim_id} on
page~\pageref{lem:bisim_id}, to reduce proofs of~$f\approx g$ to proofs
of~``$f\ x$ and~$g\ x$ are bisimilar for all~$x$''.\footnote{instead of the
much more tedious ``$f\ x_1$ and $g\ x_2$ are bisimilar if~$x_1$ and ~$x_2$
are''}



\section{Indexed Containers}
\label{sec:indexedContainers}

\emph{Indexed containers}~\cite{alti:lics09} were first considered
(implicitly) in type theory 30 years ago by K.~Petersson and
D.~Synek~\cite{treeSet}. Depending on the context and the authors, they are
also called \emph{interaction systems}~\cite{hancock-apal06} or
\emph{polynomial diagrams}~\cite{polyMonads,polyDiagrams}

\begin{defi}
  For~$I:\Set$, an indexed container over~$I$ is a triple~$w = ⟨A,D,n⟩$ where
  \begin{itemize}
    \item
      $i:I \vdash A(i):\Set$,
    \item
      $i:I, a:A \vdash D(i,a):\Set$,
    \item
      $i:I, a:A, d:D \vdash n(i,a,d):I$.\footnote{An abstract, but equivalent,
      way of defining indexed containers over~$I$ is as
      functions~$I\to\Fam\big(\Fam(I)\big)$.}
  \end{itemize}
  In Agda, the definition looks like\footnote{The name \t{IS} comes from
  ``Interaction System'', another name for indexed containers.}
  \begin{allttt}
    record IS (I : Set) where
      field
        A : I → Set
        D : (i : I) → A i → Set
        n : (i : I) → (a : A i) → D i a → I \
  \end{allttt}
\end{defi}
\noindent
A useful intuition is that~$I$ is a set of states and that~$⟨A,D,n⟩$ is a
game:
\begin{itemize}
  \item
    $A(i)$ is the set of \emph{moves} (or \emph{actions} or \emph{commands})
    in state~$i$

  \item
    $D(i,a)$ is the set of \emph{counter-moves} (or \emph{reactions} or
    \emph{responses}) after move~$a$ in state~$i$,

  \item
    $n(i,a,d) : I$ is the \emph{new state} after move~$a$ and counter-move~$d$
    have been played. When no confusion arises, we sometimes write~$i[a/d]$
    for~$n(i,a,d)$.
\end{itemize}
Each indexed container gives rise to a monotone operator on predicates
over~$I$:
\begin{defi}
  If~$w=⟨A,D,n⟩ $ is an indexed container over~$I:\Set$, the \emph{extension
  of~$w$} is the operator:~$\Sem{w} : \Pow(I) \to \Pow(I)$
  where
  \[
    i \IN \Sem{w}(X)
    \quad = \quad
    \SI{a:A(i)}
    \PI{d:D(i,a)}
      i[a/d] \IN X
    \,.
  \]
\end{defi}
\begin{lem}\label{lem:monotonic}
  The operator~$\Sem{w}$ is monotonic, i.e., the following type is inhabited:
  \[
    X\sub Y
    \quad\to\quad
    \Sem{w}(X) \sub \Sem{w}(Y)
  \]
  for every predicates~$X$ and~$Y$ of the appropriate type.
\end{lem}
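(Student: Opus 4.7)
The proof is essentially a short unfolding of the definition of $\Sem{w}$. My plan is to give an explicit term inhabiting the type $(X \sub Y) \to (\Sem{w}(X) \sub \Sem{w}(Y))$.

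Unfolding definitions, a hypothesis $\phi : X \sub Y$ is a term of type $\PI{i:I}(i\IN X)\to(i\IN Y)$, and a witness of $i \IN \Sem{w}(X)$ is a pair $\langle a, f\rangle$ where $a : A(i)$ and $f : \PI{d:D(i,a)} i[a/d]\IN X$. To produce a witness of $i \IN \Sem{w}(Y)$, I would keep the action $a$ unchanged and postcompose the continuation $f$ with $\phi$ pointwise. Concretely, I define
\[
  \mathrm{mono}\ \phi\ i\ \langle a, f\rangle
  \quad = \quad
  \langle a,\ \LAM{d:D(i,a)} \phi\,(i[a/d])\,(f\,d)\rangle
  \,.
\]
A quick type-check confirms that $\phi\,(i[a/d])\,(f\,d) : i[a/d]\IN Y$, so the second component has the required type $\PI{d:D(i,a)} i[a/d]\IN Y$, and the pair inhabits $\SI{a:A(i)}\PI{d:D(i,a)} i[a/d]\IN Y$, which is $i \IN \Sem{w}(Y)$ by definition.

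There is no real obstacle here: the proof is mechanical once the extension is written out, and it corresponds exactly to the functorial action of $\Sem{w}$ on morphisms of predicates (thinking of $\Pow(I)$ as the poset of subsets of $I$ and $\sub$ as the morphisms). The only mild care needed is being precise that the action component $a$ is preserved and only the family of continuations is transformed; this is what makes $\Sem{w}$ a container-shaped endofunctor on $\Pow(I)$ rather than merely a monotone map.
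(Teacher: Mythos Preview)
Your proof is correct and matches the paper's approach exactly: given an inclusion and a witness $\langle a, f\rangle$, keep the action $a$ and postcompose $f$ with the inclusion. The paper phrases this as returning $\langle a, \phi \circ f\rangle$ (using the inclusion where you write $\phi$), which is precisely your term.
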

\begin{proof}\checked
  This is direct: given~$i : X\sub Y$ and~$\langle a,f\rangle : \Sem{w}(X)$,
  we just need to ``carry''~$i$ through~$\Sem{w}$ and return~$\langle a , i
  \circ f\rangle$.
  \end{proof}
Indexed containers form the objects of several categories of
interest~\cite{hancock-apal06,polyMonads,alti:lics09,polyDiagrams}, but that
will only play a very minor role here.

Many indexed containers of interest have singleton actions or singleton
reactions.
\begin{defi}
  An indexed container~$\langle A, D, n\rangle$ is called
  \begin{itemize}
    \item
      \emph{angelic} if $D(i,a)$ is always (isomorphic to) a singleton type,

    \item
      \emph{demonic} if $A(i)$ is always (isomorphic to) a singleton type,

    \item
      \emph{lopsided} if it is either angelic or demonic.

  \end{itemize}
\end{defi}


\subsection{Composition}  
\label{subsub:indexedComposition}

Extensions of indexed containers can be composed as functions. There is a
corresponding operation on the indexed containers.
\begin{defi}
  If~$w_1 = ⟨ A_1,D_1,n_1⟩ $ and~$w_2=⟨ A_2,D_2,n_2⟩ $ are two indexed containers
  on~$I$, the \emph{composition} of~$w_1$ and~$w_2$ is the
  indexed container~$w_2\circ w_1 = ⟨ A , D , n⟩ $ where
  \begin{itemize}
    \item $A(i) = \SI{a_1:A_1(i)}\PI{d_1:D_1(i,a_1)}
      A_2\big(i_1[a_1/d_1]\big) = i \IN \Sem{w_1}(A_2)$,
    \item $D\big(i,⟨ a_1,f⟩ \big) = \SI{d_1:D_1(i,a_1)}D_2\big(i[a_1/d_1],
    f\,d_1\big)$,
      \item $n\big(i,⟨ a_1,f⟩ ,⟨ d_1,d_2⟩ \big) = n_2\big(n_1(i,a_1,d_1),f\,d_1,d_2\big)$.
  \end{itemize}
\end{defi}
\begin{lem}\label{lem:composition}
  For every indexed containers~$w_1$ and~$w_2$ and predicate~$X$, we have
  \[
    \Sem{w_2} \circ \Sem{w_1}(X)
    ==
    \Sem{w_2\circ w_1}(X)
  \]
  where~$Y == Z$ is an abbreviation for~$(X\sub Y)\times(Y\sub X)$.

  If function extensionality holds, the pair of functions are inverse to each
  other.
\end{lem}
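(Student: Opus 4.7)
The plan is to unfold both sides of the equation and exhibit the two inclusions via the type-theoretic ``axiom of choice'' equivalence $\PI{d:D}\SI{a:A(d)} P(d,a) \simeq \SI{f:\PI{d:D}A(d)} \PI{d:D} P(d, f\,d)$, which is a theorem of MLTT and not an extra assumption.

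First I would unfold $i \IN \Sem{w_2}(\Sem{w_1}(X))$ using the definition of $\Sem{\cdot}$ twice: this gives a nested alternation of the form $\Sigma\,\Pi\,\Sigma\,\Pi$ whose base is membership in $X$ at a state obtained by iterated reindexing through $n_1$ and $n_2$. Separately, I would unfold $i \IN \Sem{w_2\circ w_1}(X)$ using the definition of the composite container: pattern-matching on actions of the form $\langle a_1, f\rangle$ and responses of the form $\langle d_1, d_2\rangle$, this yields a $\Sigma\,\Sigma\,\Pi\,\Pi$ pattern whose base is the same membership. The inner $\Pi\Sigma$ block in the first unfolding is exactly the shape to which AC applies; invoking AC turns it into an outer $\Sigma$ of a function together with a single $\Pi$, which after currying/uncurrying at the paired response matches the second unfolding. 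The two inclusion maps are then written by explicit construction along this equivalence: given data in one form, split or pair the components (and curry or uncurry the response) to produce data in the other form. Monotonicity of $\Sem{\cdot}$ (lemma~\ref{lem:monotonic}) handles any reindexing of transported proof terms that appears.

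For the second claim --- that assuming function extensionality the two maps are mutually inverse --- the round-trip converts an expression $\lambda d. \langle \pi_1(g\,d), \pi_2(g\,d)\rangle$ back to $g$, which needs $\eta$ for $\Sigma$ (admissible in Agda) and $\eta$ for $\Pi$, the latter being exactly function extensionality. Extensionality is thus essential to equate the function-valued components ($f$ and the inner response-indexed families) pointwise; without it one only obtains agreement up to pointwise equality.

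The main obstacle is clerical rather than conceptual: the states appearing under the inner binders are iterated applications of $n_1$ and $n_2$ to the currently bound variables, and the maps go through smoothly only when these equalities hold \emph{definitionally}. When they only hold propositionally, transports clutter every component and the proof becomes awkward, even though its content is simply the fact that the type-theoretic axiom of choice is an isomorphism, repackaged for indexed containers.
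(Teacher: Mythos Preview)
Your approach is essentially the paper's: unfold both sides and apply the type-theoretic axiom of choice (plus currying over the paired response) to pass between the $\Sigma\Pi\Sigma\Pi$ and $\Sigma\Sigma\Pi\Pi$ shapes; the paper does exactly this in one displayed chain. Two small corrections: the appeal to monotonicity (lemma~\ref{lem:monotonic}) is out of place---it concerns inclusions of predicates, not reindexing along state equalities, and in any case all the relevant state equalities are definitional here, as you yourself observe---and $\eta$ for $\Pi$ is not function extensionality: $\eta$ gives $f \equiv \lambda x.\,f\,x$, whereas extensionality is needed because the round-trip rebuilds the function-valued components (the action family $f$ and the continuation) pointwise, and one must pass from pointwise equality to equality of functions.
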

\begin{proof}\checked
The main point is that the intensional axiom of choice
\[
  \PI{d: D}\SI{a: A(d)} \varphi(d,a)
  \quad \leftrightarrow \quad
  \SI{f: \PI{d:D} A(d)}\PI{d: D} \varphi(d,f\,d)
\]
is provable in Martin-L\"of type theory.
We then have
\begin{myequationqed}
  i\IN \Sem{w_2} \circ \Sem{w_1}(X)
  & = &
     \SI{a_1:A_1(i)} \PI{d_1:D_1(i,a_1)} \SI{a_2:A_2(i[a_1/d_1])}\\
  && \PI{d_2:D_2(i[a_1/d_1],a_2)}
     i[a_1/d_1][a_2/d_2] \IN X\\
  \text{\footnotesize(axiom of choice)}
  &\leftrightarrow&
     \SI{a_1} \SI{f:\PI{d_1:D_1(i,a_1)}A_2(i[a_1/d_1])} \PI{d_1}\\
  && \PI{d_2} i[a_1/d_1][f\,d_1/d_2] \IN X\\
  &\leftrightarrow&
     \SI{⟨a_1,f⟩} \PI{⟨d_1,d_2⟩} i[a_1/d_1][f\,d_1/d_2] \IN X\\
  & = &
  i\IN \Sem{w_2 \circ w_1}(X)
  \,.
\end{myequationqed}
\end{proof}


\subsection{Duality}  

\begin{defi}\label{def:duality}
If~$w=⟨ A,D,n⟩ $ is an indexed container over~$I$, we write~$w^\perp$ for the
indexed container~$⟨ A^\perp,D^\perp,n^\perp⟩ $ where
\begin{itemize}
  \item 
    $A^\perp(i) = \PI{a:A(i)}D(i,a)$,

  \item
    $D^\perp (i, \BLANK) = A(i)$,
    (note that it does not depend on the value of~$f : A^\bot(i)$)

  \item
    $i[f/a] = i[a/f\,a]$.
\end{itemize}
\end{defi}
\begin{lem}\label{lem:duality}
  For every indexed container~$w=⟨ A,D,n⟩ $, the following type is inhabited:
  \[
    i \IN \Sem{w^\bot}(X)
    \quad\longleftrightarrow\quad
    \PI{a:A(i)}\SI{d:D(i,a)} i[a/d] \IN X
    \,.
  \]
  With function extensionality, this is an isomorphism.
\end{lem}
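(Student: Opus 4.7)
The plan is to unfold definitions on the left-hand side until the equivalence reduces to a direct application of the type-theoretic axiom of choice, exactly as was done for Lemma~\ref{lem:composition}. First I expand $i\IN \Sem{w^\perp}(X)$ using Definition~\ref{def:duality}, substituting $A^\perp(i)=\PI{a:A(i)}D(i,a)$, $D^\perp(i,f)=A(i)$, and $i[f/a]=i[a/f\,a]$:
\[
  i\IN \Sem{w^\perp}(X) \;=\; \SI{f:\PI{a:A(i)}D(i,a)} \PI{a:A(i)} i[a/f\,a] \IN X
  \,.
\]

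Once this is displayed, the stated equivalence with $\PI{a:A(i)}\SI{d:D(i,a)} i[a/d] \IN X$ is literally the intensional axiom of choice
\[
  \PI{a:A}\SI{d:B(a)} \varphi(a,d)
  \;\longleftrightarrow\;
  \SI{f:\PI{a:A}B(a)} \PI{a:A} \varphi(a,f\,a)
\]
instantiated with $A := A(i)$, $B(a) := D(i,a)$, and $\varphi(a,d) := \bigl(i[a/d] \IN X\bigr)$. This is provable in plain Martin-L\"of type theory (pairing and projecting the $\Sigma$-components of a dependent function, and conversely packaging a $\Sigma$-valued function into a pair), so both directions of the equivalence are given by short explicit terms.

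For the stronger claim that, under function extensionality, this is an isomorphism, I would check the two round-trips. Going $g:\PI{a}\SI{d}\dots \mapsto \langle\lambda a.\pi_1(g\,a),\lambda a.\pi_2(g\,a)\rangle \mapsto \lambda a.\langle\pi_1(g\,a),\pi_2(g\,a)\rangle$ is pointwise equal to $g$ by $\Sigma$-$\eta$, hence equal to $g$ by function extensionality. The reverse composite $\langle f,h\rangle \mapsto \lambda a.\langle f\,a, h\,a\rangle \mapsto \langle \lambda a.f\,a,\lambda a.h\,a\rangle$ equals $\langle f,h\rangle$ by function ($\eta$-) extensionality in each component followed by $\Sigma$-pair $\eta$.

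There is no real obstacle on paper beyond recognizing that the duality construction is designed precisely so that its extension ``is'' the axiom-of-choice swap. The only mild subtlety, should one want a fully formal Agda proof, is the usual one: the round-trip identities hold only after appealing to $\Pi$- and $\Sigma$- $\eta$, so function extensionality is genuinely needed to promote the pointwise equivalence of functions into a propositional equality of functions.
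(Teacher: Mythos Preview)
Your proposal is correct and follows essentially the same approach as the paper: unfold the definition of $\Sem{w^\bot}(X)$ and observe that the resulting equivalence is precisely an instance of the intensional axiom of choice, with function extensionality needed only to upgrade the bi-implication to an isomorphism. The paper's own proof sketch is terser but makes exactly this point.
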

\begin{proof}[Sketch of proof]\checked
Just like Lemma~\ref{lem:composition}, the proof relies on the intensional
axiom of choice, which shows that
\[
  \SI{f:A^\bot(i)}
  \PI{a:D^\bot(i,f)}
  \varphi(a, f\,a)
  \leftrightarrow
  \PI{a:A(i)}
  \SI{d:D(i,a)}
  \varphi(a, d)
  \,.\qedhere
\]
\end{proof}
It is interesting to note that lopsided containers are closed under duality,
and that duality is involutive on them.\label{rk:lopsided} For any family~$X$ indexed
by~$I$ and~$t:\PI{i:I}X(i)\to I$, the following containers are dual to each
other:
\begin{itemize}
  \item $A(i) = X(i)$,
  \item $D(i,x) = \{\star\}$,
  \item $i[x/\star] = t(i,x)$.
\end{itemize}
and
\begin{itemize}
  \item $A(i) = \{\star\}$,
  \item $D(i,\star) = X(i)$,
  \item $i[\star/x] = t(i,x)$.
\end{itemize}


\subsection{Free Monad}  
\label{subsub:freeMonad}

If~$w=⟨ A,D,n⟩ $ is an indexed container on~$I$, we can consider the free
monad generated by~$\Sem{w}$. N. Gambino and M. Hyland proved that the free
monad~$F_w$ generated by some~$\Sem{w}$ (a dependent polynomial
functor) is of the form~$\Sem{w^\ast}$ for some indexed
container~$w^\ast$~\cite{GambHyl}. It is characterized by the fact
that~$\Sem{w^\ast}(X)$ is the least
fixed point of~$Y\mapsto X \cup \Sem{w}(Y)$. In other words, we are looking
for an indexed container~$w^\ast$ satisfying
\begin{itemize}
  \item
    $X \cup \Sem{w}\big(\Sem{w^\ast}(X)\big) \sub \Sem{w^\ast}(X)$
  \item
    $X \cup \Sem{w}(Y) \sub Y \to \Sem{w^\ast}(X) \sub Y$.

\end{itemize}
Informally,~$i\IN\Sem{w}^\ast(X)$ iff
$i \IN X \cup w\big(X \cup w( X \cup w(X \cup \dots))\big)$.
Expending the definition of~$\Sem{w}$, this means that
\[
  \SI{a_1}\PI{d_1}\SI{a_2}\PI{d_2}
  \ \dots\ \ i[a_1/d_1][a_2/d_2]\dots \IN X
  \,.
\]
This pseudo formula depicts a well-founded tree (with branching
on~$d_i:D(\BLANK,\BLANK)$) in which branches are finite: either they end at
a~$\Pi{d_i}$ because the corresponding domain~$D(\BLANK,\BLANK)$ is empty, or
they end in a state~$i[a_1/d_1][a_2/d_2]\dots$ which belongs to~$X$. The
length of the sequence~$a_1/d_1,a_2/d_2,\dots$ is finite but may depend on
what moves / counter-moves are chosen as the sequence grows longer.
We can define~$w^\ast$ inductively.
\begin{defi}
  Define~$w^\ast = ⟨A^\ast,D^\ast,n^\ast⟩$ over~$I$ as:
  \begin{itemize}
    \item
      $A^\ast : \Pow(I)$ is a weak initial algebra for the
      endofunctor~$X \mapsto \LAM{i} \big(\One + \Sem{w}(X)(i)\big)$
      on~$\Pow(I)$.
      Concretely, there are two constructors for~$A^\ast(i)$:

      \[
        \Rule{}{\leaf : A^\ast(i)}
        \quad
        \text{and}
        \quad
        \Rule{ a : A(i)
               \qquad
               k : \PI{d : D(i,a) } A^\ast(n\,i\,a\,d) }
             { \node(a,k) : A^\ast(i) }
        \,.
      \]
    Thus, an element of~$A^\ast(i)$ is a well-founded tree where each internal
    node is labeled with an elements~$a:A(i)$ and the branching is given by
    the set~$D(i,a)$.

    \item The components~$D^\ast$ and~$n^\ast$ are defined by recursion:
    \begin{itemize}
      \item in the case of a~$\leaf$:
        \[\begin{array}{lclclcl}
          D^\ast \big(i,\leaf(i)\big) & = & \One : \Set \\
          n^\ast\big(i,\leaf(i),\star\big)  & = & i : I \\
        \end{array}\]

      \item in the case of a~$\node$:
        \[\begin{array}{lcl}
          D^\ast \big(i,\node(a,k)\big)
            & = &
          \SI{d : D (i,a)} D^\ast\big(i[a/d],f\,d\big) : \Set\\
          n^\ast\big(i,\node(a,k),⟨ d,d'⟩ \big)
            & = &
          n^\ast\big(i[a/d],k\,d,d'\big) : I
          \,.
    \end{array}
    \]
    \end{itemize}
    The corresponding Agda definition is
    \begin{allttt}
      module FreeMonad (I : Set) (w : IS I) where
        open IS w

        data A* : I → Set where
          Leaf : (i : I) → A* i
          Node : (i : I) → (a : A i) → (f : (d : (D i a)) → A* (n a d)) → A* i

        D* : (i : I) → (t : A* i) → Set
        D* i Leaf = One
        D* i (Node a f) = \(\Sigma\) (D i a) (λ d → D* (n a d) (f d))

        n* : (i : I) → (t : A* i) → (b : D* i t) → I
        n* i Leaf ⋆ = i
        n* i (Node a f) ( d , ds ) = n* (f d) ds
    \end{allttt}

  \end{itemize}
\end{defi}
In the presence of extensional equality, this particular inductive definition
can be encoded using standard~$W$-types~\cite{polyMonads}. As it is given, it
avoids using equality but needs either a universe, or degenerate\footnote{in
the sense that the inductive set~$A^\ast(i)$ does not depend on the recursive
functions~$D^\ast(i,\BLANK)$ and~$n^\ast(i,\BLANK,\BLANK)$.} induction-recursion
on~$\Fam(I)$.
This construction does indeed correspond to the free monad described by N.
Gambino and M. Hyland:
\begin{lem}\label{lem:RTC_mu}
  If~$w$ is a container indexed on~$I:\Set$, we have
\begin{enumerate}
  \item
    for all~$X:\Pow(I)$, $X \cup \Sem{w}\big(\Sem{w^\ast}(X)\big) \sub \Sem{w^\ast}(X)$
  \item
    for all~$X,Y:\Pow(I)$, $X \cup \Sem{w}(Y) \sub Y \to \Sem{w^\ast}(X) \sub Y$.
\end{enumerate}
\end{lem}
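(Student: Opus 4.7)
The plan is to unfold the definitions and treat the two parts separately, with part~1 by explicit construction and part~2 by induction on the trees making up $A^\ast$.

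For part~1, given $i \IN X \cup \Sem{w}\bigl(\Sem{w^\ast}(X)\bigr)$ I split on the coproduct. If $i \IN X$, I return the pair $\bigl\langle \leaf(i), \LAM{\star}p \bigr\rangle$ where $p$ is the witness of $i \IN X$; this works because $D^\ast\bigl(i,\leaf(i)\bigr)=\One$ and $n^\ast\bigl(i,\leaf(i),\star\bigr)=i$, so the required dependent function lands in $X$. If instead $i \IN \Sem{w}\bigl(\Sem{w^\ast}(X)\bigr)$, I have $a:A(i)$ together with $f$ sending every $d:D(i,a)$ to some $\langle t_d, g_d\rangle \IN \Sem{w^\ast}(X)$ at state $i[a/d]$. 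I then return $\bigl\langle \node(a,\LAM{d}t_d), h\bigr\rangle$ where $h$ takes an element $\langle d,b\rangle$ of $D^\ast\bigl(i,\node(a,\LAM{d}t_d)\bigr)=\SI{d:D(i,a)}D^\ast(i[a/d],t_d)$ and returns $g_d\,b$; the recursive computation of $n^\ast$ on a $\node$ makes this match the corresponding value $n^\ast(i[a/d],t_d,b)$, which lies in $X$ by $g_d$.

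For part~2, assume a predicate $Y$ with $X \cup \Sem{w}(Y) \sub Y$ and a witness $\langle t,g\rangle \IN \Sem{w^\ast}(X)$ at $i$, where $t:A^\ast(i)$ and $g:\PI{b:D^\ast(i,t)}n^\ast(i,t,b)\IN X$. I proceed by induction on the well-founded tree $t$. If $t=\leaf(i)$, then $g\,\star$ witnesses $i\IN X$, hence $i \IN Y$ via the left inclusion in the hypothesis. If $t=\node(a,k)$ with $k:\PI{d:D(i,a)}A^\ast(i[a/d])$, then for each $d$ the pair $\bigl\langle k\,d,\LAM{b}g\,\langle d,b\rangle\bigr\rangle$ is a witness of $i[a/d]\IN \Sem{w^\ast}(X)$, so by the induction hypothesis it lies in $Y$. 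Collecting these gives $\langle a,\LAM{d}(\text{IH applied})\rangle \IN \Sem{w}(Y)(i)$, and the hypothesis again delivers $i\IN Y$.

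The only mildly delicate point is to make sure the recursion equations for $D^\ast$ and $n^\ast$ line up with the Agda-style pattern matching used in the construction, so that the $\node$ case really reduces $n^\ast(i,\node(a,k),\langle d,b\rangle)$ to $n^\ast(i[a/d],k\,d,b)$ definitionally; this is what lets the witnesses one constructs type-check without any appeal to extensionality. No step is substantial once the definitions are unfolded, so I expect the whole argument to go through in Agda, which is consistent with the \checked annotation.
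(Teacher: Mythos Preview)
Your proposal is correct and matches what the paper does: the paper's proof is nothing more than the \checked annotation pointing to the Agda formalization, and your write-up is precisely the obvious Agda proof (explicit witnesses using $\leaf$/$\node$ for part~1, structural induction on $A^\ast$ for part~2). There is no alternative route here worth commenting on.
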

\begin{proof}\checked
\end{proof}

\subsection{Greatest Fixed Points}  

Agda has some support for infinite values via the~$\infty\BLANK$ type
constructor making a type ``lazy'', \ie stopping computation. Using this and
the operators~$\natural: A \to \infty A$ (to freeze a value) and~$\flat :
\infty A \to A$ (to unfreeze it), it is possible to define the
type~$\nu_{\Sem{w}}$ (which we'll write~$\nu_w$) for any indexed
container~$w$. The termination checker used in Agda~\cite{foetus} also checks
productivity of recursive definition, but since it is not clear that this is
sound when inductive and coinductive types are
mixed~\cite{ThorstenNad,PHTotality} we will only use the standard introduction
and elimination rules in our developments:
\[
  \Rule{\sigma : X \sub \Sem{w}(X)}{\nuIntro\,\sigma : X \sub \NU{w}}
  \qquad\text{and}\qquad
  \Rule{}{\nuElim : \NU{w} \sub \Sem{w}(\NU{w})}
\]
which are definable in Agda.

Elements of~$\NU{w}$ are formed by coalgebras for~$\Sem{w}$,
and any element of~$i\IN\NU{w}$ can be ``unfolded''
into an element of~$i \IN \Sem{w}(\NU{w})$ \ie into an element of
\[
  \SI{a : A(i)}
  \PI{d : D(i,a)}
    i[a/d] \IN \NU{w}
  \,.
\]
We can repeat this unfolding and informally decompose an element of~$i\IN\NU{w}$
into an infinite ``lazy'' process of the form
\[
  \SI{a_1 : A(i)}
  \PI{d_1 : D(i,a_1)}
  \SI{a_2 : A(i[a_1/d_1])}
  \PI{d_2 : D(i[a_1/d_1],a_2)}
  \cdots
\]
We therefore picture an element of~$i\IN\NU{w}$ as an infinite tree (which
need not be well-founded). Each node of such a tree has an implicit state
in~$I$, and the root has state~$i$. If the state of a node is~$j$, then the
node contains an element~$a$ of~$A(j)$, and the branching of that node is
given by~$D(j,a)$. Note that some finite branches may be inextensible when
they end at a node of state~$j$ with label~$a$ for which~$D(j,a)$ is empty.

\subsubsection*{Examples} 

We will be particularly interested in fixed points of the form~$\NU{w^\bot}$.
Because of Lemma~\ref{lem:duality}, an element of~$i\IN\NU{w^\bot}$ unfolds to
a potentially infinite object of the form
\[
  \PI{a_1 : A(i)}
  \SI{d_1 : D(i,a_1)}
  \PI{a_2 : A(i[a_1/d_1])}
  \SI{d_2 : D(i[a_1/d_1],a_2)}
  \cdots
\]
For such types, the branching comes from~$A(\BLANK)$ and the labels
from~$D(\BLANK,\BLANK)$. Here are some example of the kind of objects we get.

\begin{enumerate}
  \item
    Streams on~$X$ are isomorphic to~``$\star \IN \NU{w^\bot}$''
    where~$I=\One = \{\star\}$ and~$w=⟨A,D,n⟩$ with
    \begin{itemize}
      \item
        $A(\star) = \One$,

      \item
        $D(\star,\star) = X$,

      \item
        $n(\star,\star,x) = \star$.
    \end{itemize}

  \item
    Increasing streams of natural numbers are isomorphic to~``$0\IN\NU{w^\bot}$''
    where~$I=\N$ and~$w=⟨A,D,n⟩$ with
    \begin{itemize}
      \item
        $A(i) = \One$,

      \item
        $D(i,\star) = \SI{j:\N} (i < j)$,

      \item
        $n(i,\star,⟨j,p⟩) = j$.

    \end{itemize}

  \item
    Infinite, finitely branching trees labeled by~$X$ are isomorphic to~$\star \IN \NU{w^\bot}$
    where~$I=\One$ and~$w=⟨A,D,n⟩$ to be
    \begin{itemize}
      \item
        $A(\star) = \SI{k:\N} N(k)$, where~$N(k)$ is the set with exactly~$k$ elements,

      \item
        $D(\star,\langle k, i\rangle) = X$,

      \item
        $n(\star,k,x) = \star$.
    \end{itemize}

  \item
    In general, non-losing strategies for the second player from state~$i:I$
    in game~$w$ are given by~$i \IN \NU{w^\bot}$.
\end{enumerate}


\subsubsection*{Bisimulations}  

The appropriate equivalence relation on coinductive types is
\emph{bisimilarity}. Translating the categorical notion from
page~\pageref{def:categorical_bisimulation} for the
type~$\NU{w}$,\footnote{Note that we only define bisimilarity for elements
of~$i\IN\NU{w}$, and not for arbitrary ~$\Sem{w}$-coalgebras.} we get
that~$T_1 : i_0 \IN \NU{w}$ is bisimilar to~$T_2 : i_0 \IN \NU{w}$ if:
\begin{enumerate}
  \item
    there is an~$I$ indexed family~$R_i : (i\IN\NU{w}) \times (i\IN\NU{w}) \to
    \Set$ (an ``indexed relation'') s.t.

  \item
    $R_{i_0} \langle T_1,T_2\rangle$ is inhabited,

  \item
    whenever~$R_i \langle T_1, T_2\rangle$ is inhabited, we have
    \begin{itemize}
      \item
        $a_1 \equiv a_2$,

      \item
        for every $d_1:D(i,a_1)$, the elements~$f_1\,i[a_1/d_1]$
        and~$f_2\,i[a_2/d_2]$ are related by~$R$,
    \end{itemize}
    where~$\langle a_1 , f_1\rangle$ [resp. $\langle a_2 , f_2\rangle$] comes
    from the coalgebra structure~$\NU{w} \sub \Sem{w}\NU{w}$ applied to~$T_1$
    [resp.~$T_2$].
\end{enumerate}
Expressing this formally is quite verbose as values need to be
transported along equalities to have an appropriate type. For example,
having~$a_1 \equiv a_2 \in A(i)$ only entails that~$D(i,a_1)$ and~$D(i,a_2)$
are isomorphic, and thus, $d_1 : D(i,a_1)$ is not strictly speaking an element
of~$D(i,a_2)$!

We noted on page~\pageref{rk:bisim_equiv} that categorically
speaking, without hypotheses on the functor~$F$,~$\approx$ was not necessarily
transitive. We have
\begin{lem}
  If~$w$ is a container indexed on~$I$, then~$\approx$ is an equivalence
  relation on any~$i\IN\NU{w}$:
  \begin{itemize}
    \item
      for any~$T:i\IN\NU{w}$, there is an element in~$T \approx T$,

    \item
      for any~$T_1,T_2:i\IN\NU{w}$, there is a function in~$(T_1 \approx T_2)
      \to (T_2 \approx T_1)$,

    \item
      for any~$T_1,T_2,T_3:i\IN\NU{w}$, there is a function in~$(T_1 \approx
      T_2) \to (T_2 \approx T_3) \to (T_1 \approx T_3)$.

  \end{itemize}
\end{lem}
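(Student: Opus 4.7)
The plan is to treat the three properties separately, each time exhibiting an explicit bisimulation. For reflexivity I take the diagonal $R_i(T, T') := (T \equiv_{\NU{w}} T')$; this clearly contains $(T, T)$ for every $T$, and whenever $T \equiv T'$ the two decompositions $\nuElim\,T$ and $\nuElim\,T'$ are definitionally equal $\langle a, f\rangle$, so the action components are trivially equal and the continuations $f\,d$ and $f\,d$ are again in the diagonal for every $d : D(i,a)$.

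For symmetry, given a bisimulation $R$ witnessing $T_1 \approx T_2$ I simply flip the arguments: $R^{\mathrm{op}}_i(U, V) := R_i(V, U)$. The closure clause for $R^{\mathrm{op}}$ reduces, after swapping the pair, to the closure clause for $R$ combined with symmetry of intensional equality on the action components $a_1 \equiv a_2$.

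For transitivity I would form the relational composition
\[
  (R;S)_i(T, T'') \quad := \quad \SI{T' : i\IN\NU{w}} R_i(T, T') \times S_i(T', T'')
  \,,
\]
and witness membership of $(T_1, T_3)$ at $i_0$ using the intermediate $T_2$. To check $R;S$ is a bisimulation, I would unfold $T, T', T''$ with $\nuElim$ into $\langle a, f\rangle$, $\langle a', f'\rangle$, $\langle a'', f''\rangle$. The bisimulation clauses for $R$ and $S$ give $a \equiv a'$ and $a' \equiv a''$, hence $a \equiv a''$; for each $d : D(i,a)$ I transport along these equalities to obtain a $d' : D(i,a')$ and a $d'' : D(i,a'')$, whereupon $(f\,d, f'\,d') \in R$ and $(f'\,d', f''\,d'') \in S$ chain through the intermediate $f'\,d'$ to place $(f\,d, f''\,d'')$ in $R;S$, as needed.

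The main obstacle is bureaucratic rather than conceptual: as the paper already notes, spelling out bisimilarity is verbose precisely because $d : D(i, a_1)$ has to be transported along $a_1 \equiv a_2$ to land in $D(i, a_2)$, and in the transitivity argument there are \emph{two} such transports to compose. This is also what lets transitivity hold here even though it fails for arbitrary weakly terminal coalgebras (the remark on page~\pageref{rk:bisim_equiv}): the coalgebra $\nuElim : \NU{w} \sub \Sem{w}(\NU{w})$ is concretely available, so the composite relation can be checked closed by direct unfolding, rather than requiring the existence of a span that a priori we do not know how to construct.
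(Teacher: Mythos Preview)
Your proposal is correct and is essentially the approach the paper has in mind: the paper's own proof is just ``\checked'' with the remark that reflexivity is easy while symmetry and especially transitivity are tedious in Agda because of the repeated transports along equalities, which is exactly what you sketch (diagonal, flip, relational composition, with transports on the~$D(i,-)$ components).

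One small sharpening of your last paragraph: the reason transitivity goes through here, while the categorical~$\approx$ on morphisms need not be transitive in general, is not so much that~$\nuElim$ is ``concretely available'' as that the functor~$\Sem{w}$ is polynomial. Polynomial functors preserve (weak) pullbacks, which is precisely what guarantees that the span composite of two bisimulations is again a bisimulation; your direct unfolding argument is the internal, element-by-element manifestation of that fact.
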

\begin{proof}\checked
  The result is intuitively obvious but while reflexivity is easy, proving
  transitivity (and to a lesser extent symmetry) in Agda is surprisingly
  tedious. Explaining the formal proof is probably pointless as it mostly
  consists of transporting elements along equalities back and forth.
\end{proof}
We will keep some of the bisimilarity proofs in the meta theory in order to
simplify the arguments. The only consequence of the assumption
from page~\pageref{asm:bisim} that we'll need is the following.
\begin{lem}\label{lem:bisim_id}
  Suppose that~$f,g : i_1 \IN \NU{w_1} \to i_2 \IN \NU{w_2}$ are definable in type
  theory,
  then, to prove that~$f \approx g$,\footnote{\ie $S\approx T \to f\,S \approx
  g\,T$} it is enough to show that~$f\,T \approx g\,T$ for any~$T:i_1 \IN
  \NU{w_1}$.
\end{lem}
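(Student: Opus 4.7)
The plan is to reduce the problem to a chain of three bisimilarities linked by transitivity, exploiting Assumption~\ref{asm:bisim} to handle the left-hand side of the chain.

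Concretely, suppose we are given $S, T : i_1 \IN \NU{w_1}$ with a bisimulation witness $S \approx T$, and the pointwise hypothesis ``$f\,U \approx g\,U$ for every $U : i_1 \IN \NU{w_1}$''. We want to produce a witness $f\,S \approx g\,T$. The plan is as follows. First, since $w_1$ and $w_2$ are assumed not to involve the identity type, $f$ is a definable function between datatypes in the sense required by Assumption~\ref{asm:bisim}, so $f \approx f$ holds. Unfolding what ``$f \approx f$'' means, this gives us exactly that bisimilar inputs are sent to bisimilar outputs; applied to $S \approx T$, it yields $f\,S \approx f\,T$. Second, the pointwise hypothesis applied at $U := T$ gives $f\,T \approx g\,T$. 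Finally, by the preceding lemma stating that $\approx$ is an equivalence relation on $i_2 \IN \NU{w_2}$ (in particular transitive), we chain these to conclude $f\,S \approx g\,T$.

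The only step requiring actual mathematical content is the first one, where we instantiate Assumption~\ref{asm:bisim}: everything else is a matter of applying the hypothesis and invoking transitivity. The subtle point worth checking is that the hypothesis on $w_1$ and $w_2$ (no occurrence of identity types) is exactly what is needed to make $f$ qualify as a ``function between datatypes'' for the purposes of the assumption, so that the congruence-flavoured reading ``$S\approx T \implies f\,S \approx f\,T$'' is legitimately available for our $f$. I don't anticipate any obstacle beyond this bookkeeping; the lemma is essentially a packaging of the three-step triangle $f\,S \approx f\,T \approx g\,T$ into the statement $f \approx g$.
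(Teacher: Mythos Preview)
Your proposal is correct and matches the paper's proof essentially verbatim: the paper also argues $f\,S \approx f\,T \approx g\,T$, obtaining the first bisimulation from Assumption~\ref{asm:bisim} ($f\approx f$) and the second from the pointwise hypothesis at~$T$. Your explicit invocation of transitivity is the only detail the paper leaves implicit in the chain notation.
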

\begin{proof}
  If~$S \approx T$, we have~$f\,S \approx f\,T \approx g\,T$ where the first
  bisimulation comes from the assumption~$f\approx f$ from
  page~\pageref{asm:bisim}
  and the second bisimulation comes from the hypothesis of the lemma.
\end{proof}
This makes proving~$f\approx g$ simpler as we can replace the hypothesis~$T_1
\approx T_2$ by the stronger~$T_1 \equiv T_2$.

\subsubsection*{Weakly Terminal Coalgebras} 

We will have to show that some sets are isomorphic ``up to bisimilarity''. To
do that, we'll use Lemma~\ref{lem:WTC} by showing that the two sets are weakly
terminal coalgebras for the same functor~$\Sem{w}$. (One of the sets will
always be~$\NU{w}$, making half of this automatic.)

To show that~$T : \Pow(I)$ is a weakly terminal coalgebra for~$\Sem{w}$, we
have to define, mimicking the typing rules for coinductive types:
\begin{itemize}
  \item
    $\mathtt{elim} : T \sub \Sem{w}(T)$,

  \item
    $\mathtt{intro} : X \sub \Sem{w}(X) \to X \sub T$,

  \item
    $\mathtt{comp}_{X,c,x,i} : \mathtt{elim}(\mathtt{intro}\, c\, i\, x)
                     \equiv
                     \Sem{w}_{\mathtt{intro}\,c}i\,(c\,i\,x)$
    whenever
    \begin{itemize}
      \item
        $X:\Pow(I)$,
        $c:X\sub\Sem{w}(X)$,
        $i:I$ and~$x:i\IN X$,
      \item
        $\Sem{w}_{\mathtt{intro}\,c} : \Sem{w}(X) \sub \Sem{w}(T)$ comes from
        Lemma~\ref{lem:monotonic}.
    \end{itemize}
\end{itemize}
By Lemma~\ref{lem:WTC}, we have
\begin{cor}\label{cor:WTC}
  If $C$ is a weakly terminal coalgebra for~$\Sem{w}$, then there are
  functions
  $f : \NU{w} \sub C$ and~$g : C \sub \NU{w}$ such that
  \[
  f\,i\,(g\,i\,T) \approx T
  \]
  for any~$T : i \IN \NU{w}$.
\end{cor}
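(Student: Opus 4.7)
The plan is to recognise both $\NU{w}$ and $C$ as weakly terminal coalgebras for the same endofunctor $\Sem{w}$ and then invoke Lemma~\ref{lem:WTC}. First I would confirm that $\NU{w}$ is itself a weakly terminal coalgebra: its coalgebra structure is $\nuElim : \NU{w} \sub \Sem{w}(\NU{w})$, and weak terminality is witnessed by $\nuIntro$, which turns any coalgebra $c : X \sub \Sem{w}(X)$ into a mediating morphism $\nuIntro\,c : X \sub \NU{w}$.

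Next I would produce the two morphisms. The map $g : C \sub \NU{w}$ is obtained by applying $\nuIntro$ to the assumed coalgebra structure $\mathtt{elim} : C \sub \Sem{w}(C)$, setting $g := \nuIntro\,\mathtt{elim}$. Symmetrically, the map $f : \NU{w} \sub C$ is obtained from the weak terminality of $C$ applied to $\nuElim$, i.e. $f := \mathtt{intro}\,\nuElim$. That $f$ and $g$ are genuine coalgebra morphisms (and not merely arbitrary functions) follows from the respective computation rules: $\nuElim \circ \nuIntro\,c \equiv \Sem{w}_{\nuIntro\,c} \circ c$ for $g$, and the analogous $\mathtt{comp}_{X,c,x,i}$ rule assumed for $C$ when $f$ is applied.

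With $f$ and $g$ in hand as mediating morphisms between two weakly terminal coalgebras for $\Sem{w}$, the second clause of Lemma~\ref{lem:WTC} applies directly with $T_1 := \NU{w}$ and $T_2 := C$, yielding $g \circ f \approx \id_{\NU{w}}$ and $f \circ g \approx \id_{C}$. Evaluating the relevant composite pointwise at $T : i \IN \NU{w}$ gives the desired bisimilarity.

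I do not expect a serious obstacle: the corollary is essentially a bureaucratic repackaging of Lemma~\ref{lem:WTC} in the concrete setting of indexed containers. The only care needed is checking that the two computation rules (the one built into $\NU{w}$ and the $\mathtt{comp}$ datum defining a weakly terminal coalgebra structure on $C$) really do make $f$ and $g$ commute with the coalgebra structures in the sense required by the definition of a mediating morphism; once this is in place, the lemma fires without further work.
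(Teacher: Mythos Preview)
Your proposal is correct and follows exactly the paper's approach: the paper's proof simply states that this is the second point of Lemma~\ref{lem:WTC}, and your argument spells out precisely how that lemma applies, including the construction of the mediating morphisms via $\nuIntro$ and the assumed $\mathtt{intro}$ for $C$. The only minor wrinkle is that the statement as printed has a type mismatch (with $f:\NU{w}\sub C$ and $g:C\sub\NU{w}$, the composite $f\,i\,(g\,i\,T)$ needs $T:i\IN C$, not $T:i\IN\NU{w}$), but since Lemma~\ref{lem:WTC} yields both round-trips this does not affect your argument.
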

\begin{proof}\checked
  This is the second point of Lemma~\ref{lem:WTC}, and it has been
  formalized in Agda.
\end{proof}



\section{Simulations and Evaluation}

\subsection{Functions on Streams} 

Continuous function from~$\Stream{A}$ to~$\Stream{B}$ can be described by
infinite, $A$-branching ``decision trees'' with two kinds of nodes: $\INPUT$
and~$\OUT_b$ with~$b\in B$. The idea is that~$f(s) = [b_1, b_2, \dots]$ if and
only if the infinite branch corresponding to~$s$ contains, in order, the
nodes~$\OUT_{b_1}$, $\OUT_{b_2}$, \dots, interspersed with~$\INPUT$ nodes. For
that to be well defined, we need to guarantee that all infinite branches in
the tree contain infinitely many~$\OUT$ node, or equivalently, that no branch
contains infinitely many \emph{consecutive}~$\INPUT$ nodes.

\begin{thm}[\cite{hancock09:_repres_of_stream_proces_using}]\label{thm:stream_transducer}
  The set of continuous functions from~$\Stream{A}$ to~$\Stream{B}$ is
  isomorphic to the set~$\nu X.\mu Y. (B \times X) + (A \to Y)$.
\end{thm}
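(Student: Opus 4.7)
The plan is to construct maps in both directions between $\nu X.\mu Y.(B\times X)+(A\to Y)$ and the set of continuous functions $\Stream{A}\to\Stream{B}$, and to show they are mutually inverse (up to bisimulation on the coinductive side).

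First I would build the \emph{evaluation} map $\Phi$ from trees to functions. Given a tree $t$ and a stream $s$, unfold the outer $\nu$ once to obtain a well-founded inner $\mu$-tree $\tau$, with leaves labelled by pairs $(b,t')\in B\times\nu X.\mu Y.(B\times X)+(A\to Y)$ and internal nodes branching over $A$. By induction on $\tau$, walk through $\tau$ while consuming a prefix of $s$: at an $(A\to Y)$ node, take the branch indexed by the head of $s$ and recurse on its tail; at a leaf $(b,t')$, return the triple $(b,t',s')$ where $s'$ is the remaining stream. Coinductively, set $\Phi(t)(s)=b::\Phi(t')(s')$. This function is continuous by lemma~\ref{lem:continuous_stream}: its $k$-th output depends only on the finite prefix of $s$ consumed by the first $k$ walks.

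Next I would build the \emph{representation} map $\Psi$ from continuous functions to trees. Given continuous $f:\Stream{A}\to\Stream{B}$, condition~$(\ref{eqn:prefix_condition})$ applied with $k=0$ asserts that for every $s$ there is an $n$ such that $f(s)_0$ is determined by $s_{\restrict n}$. I would build a finite inner $\mu$-tree by scanning partial inputs $p\in A^\ast$, starting from $p=[]$: if $p$ already determines $f(-)_0$, insert a leaf carrying that value together with the continuation $\lambda s.\,f(p\cdot s)$ restricted to its tail; otherwise, insert an $(A\to Y)$ node and extend $p$ by the next symbol. The result is a well-founded tree rather than an infinite one precisely because condition~$(\ref{eqn:prefix_condition})$ forbids the existence of an infinite branch along which every prefix of $p$ fails to determine $f(-)_0$. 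Iterating this construction coinductively on the continuation stored at each leaf gives $\Psi(f)$.

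Finally, I would verify the round-trips. The equation $\Phi(\Psi(f))=f$ (up to extensional equality on $\Stream{A}\to\Stream{B}$) is a direct stream coinduction: both sides have the same head by the very construction of the leaf that $\Psi$ installs for the first output, and the same tail by the coinductive hypothesis applied to the continuation. For $\Psi(\Phi(t))\approx t$ on trees, I would equip the set of continuous functions with the coalgebra for $X\mapsto\mu Y.(B\times X)+(A\to Y)$ that sends $f$ to the finite tree built by $\Psi$ and show it is weakly terminal, then invoke corollary~\ref{cor:WTC}. The main obstacle is the well-foundedness argument inside $\Psi$: turning the pointwise modulus given by continuity into a uniformly finite branching tree requires, in a purely constructive reading, a form of bar induction. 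The paper's formalisation either pairs $f$ with an explicit modulus of continuity or works under the classical reading of continuity; modulo this point, the rest of the argument is a mechanical manipulation of guarded recursive definitions on one side and of pointwise continuity on the other.
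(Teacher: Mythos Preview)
The paper does not prove this theorem; it is cited from~\cite{hancock09:_repres_of_stream_proces_using}. The text following the statement only sketches the evaluation direction---your~$\Phi$, which the paper calls ``stream eating''---and a footnote observes that constructing the tree from a function is not constructive without an explicit modulus of continuity. Your definitions of~$\Phi$ and~$\Psi$ and your verification of~$\Phi\circ\Psi=\id$ agree with (and go well beyond) the paper's informal remarks, and your diagnosis of the bar-induction obstacle inside~$\Psi$ matches the footnote exactly.

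There is, however, a genuine gap in the~$\Psi\circ\Phi\approx\id$ direction, and it is not a gap in your argument so much as in the statement itself. The map~$\Phi$ is not injective: the tree that immediately emits~$b$ and recurses, and the tree that first consumes one input symbol and then on every branch emits~$b$ and recurses, are not bisimilar, yet both evaluate to the constant stream~$[b,b,\ldots]$. Hence~$\Psi\circ\Phi$ cannot be bisimilar to the identity on~$\nu X.\mu Y.(B\times X)+(A\to Y)$, and a literal set isomorphism is impossible. Your proposed route through corollary~\ref{cor:WTC} does not rescue this: for that corollary to apply you would need the set of continuous functions, equipped with the one-step~$\Psi$ coalgebra structure, to be weakly terminal, which would in particular require a coalgebra morphism from~$\nu X.\mu Y.\ldots$ into it; but such a morphism would have to send the second tree above to a function whose~$\Psi$-unfolding begins with an input node, and no continuous function whose head is input-independent has that property. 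What the cited paper actually delivers is a surjection~$\Phi$ together with a (classically constructed) section~$\Psi$, i.e.\ a retraction~$\Phi\circ\Psi=\id$; the word ``isomorphic'' in the present statement should be read with that caveat, and your proof is correct once the~$\Psi\circ\Phi$ claim is dropped.
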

The nested fixed points guarantee that along a branch, there can only be
finitely many consecutive \tINPUT\ nodes:
\begin{itemize}
  \item
    $\mu Y. \, (B \times X) + (A \to Y)$: well-founded~$A$-branching trees with
    leafs in~$B \times X$, \ie consisting of an $\OUT_b$ node and an element
    of~$X$,

  \item
    $\nu X. \cdots$ the element in~$X$ at each leaf for the well-founded trees
    is another such well-founded tree, ad infinitum.
\end{itemize}
We can evaluate each such tree into a continuous function, a process
colloquially referred to as ``stream eating'': we consume the elements of the
stream to follow a branch in the infinite tree, and output~$b$ (an element of the
result) whenever we find an~$\OUT_b$ node.\footnote{The converse mechanism, of
converting a function into  process into a tree cannot be defined in type
theory, see the discussion on page~\pageref{sub:completeness}.}

\medbreak
Our aim is to extend Theorem~\ref{thm:stream_transducer} to coinductive types
of the form~$i \IN \NU{w^\bot}$. The problem is doing so in a
type-theoretic manner and even the case of dependent streams (where the type
of an element may depend on the values of the previous elements) is not trivial.
Retrospectively, the difficulty was that there are two generalizations of
streams:
\begin{itemize}
  \item
    adding states: we consider dependent streams instead of streams;
  \item
    adding branching: we consider trees instead of streams.
\end{itemize}
Both cases required the introduction of states \emph{and} branching, making
those seemingly simpler cases as hard as the general one.


\subsection{Linear Simulations as Transducers} 

We are going to define a notion of ``transducer'' that can explore some branch
of its input and produce some output along the way. Such notions have already
been considered as natural notions of morphisms between dependent containers:
linear simulations~\cite{polyDiagrams} and general
simulations~\cite{hancock-apal06}. These notions generalize morphisms as
representations of pointwise inclusions~$\Sem{w_1} \sub \Sem{w_2}$ (called
cartesian strong natural transformations), which only make sense for
containers with the same index set~\cite{polyMonads,GambHyl,alti:lics09}.

A transducer from type~$i_1 \IN \NU{w_1^\bot}$ to type~$i_2 \IN \NU{w_2^\bot}$
works as follows: given an argument (input) in~$i_1 \IN \NU{w_1^\bot}$,
morally of the form
\[
  \PI{a_0}\SI{d_0}\PI{a_1}\SI{d_1} \dots
\]
it must produce (output) an object of the form
\[
  \PI{b_0}\SI{e_0}\PI{b_1}\SI{e_1} \dots
\]
In other words, the transducer
\begin{itemize}
  \item
    consumes~$b$ (they are given by the environment when constructing the
    result) and~$d$ (they may be produced internally by the input),

  \item
    produces~$e$ (as part of the output) and~$a$ (to be used internally by
    feeding them to the input).
\end{itemize}
Graphically, the transducer ``plugs into'' the input, producing the output:
  \[
   \begin{tikzpicture}[
       box/.style={draw, minimum size=1.5cm,
       }]

       \node[box] (a) {\text{transducer}};
       \node[box,left=4cm of a] (input) {input};

       \draw[latex-,thick] (input.32) --++(0:1cm) node [right] {$a$};
       \draw[-latex,thick] (input.-32) --++(0:1cm) node [right] {$d$};

       \draw[-latex,thick] (a.-205) --++(0:-1cm) node [left] {$a$};
       \draw[latex-,thick] (a.205) --++(0:-1cm) node [left] {$d$};

       \draw[latex-,thick] (a.25) --++(0:1cm) node [right] {$b$};
       \draw[-latex,thick] (a.-25) --++(0:1cm) node [right] {$e$};


   \end{tikzpicture}
  \]

A very simple kind of transducer works as follows:
\begin{enumerate}
  \item
    when given some~$b_0$,
  \item
    it produces an~$a_0$ and feeds it to its argument,
  \item
    it receives a~$d_0$ from its argument,
  \item
    and produces an~$e_0$ for its result.
  \item
    It starts again at step (1) possibly in a different internal state.
\end{enumerate}
The resulting interface is of the form
  \[
   \begin{tikzpicture}[
       box/.style={draw, minimum size=1.5cm,
       }]

       \node[box] (output) {output};

       \draw[latex-,thick] (output.25) --++(0:1cm) node [right] {$b$};
       \draw[-latex,thick] (output.-25) --++(0:1cm) node [right] {$e$};

   \end{tikzpicture}
  \]

This intuition is captured by the following definition.
\begin{defi}
  Let~$w_1=⟨A_1,D_1,n_1⟩$ and~$w_2=⟨A_2,D_2,n_2⟩$ be indexed containers over~$I_1$ and~$I_2$
  and let~$R:\Pow(I_1 \times I_2)$ be a relation between states. We say that~$R$ is
  a \emph{linear simulation} from~$w_1$ to~$w_2$ if it comes with a proof:
  \[\begin{array}{rcl}
    \rho \quad:\quad \PI{i_1:I_1} \PI{i_2:I_2} R(i_1,i_2) &\to&
          \PI{a_2 : A_2(i_2)} \\
      &&  \SI{a_1 : A_1(i_1)} \\
      &&  \PI{d_1 : D_1(i_1,a_1)} \\
      &&  \SI{d_2 : D_2(i_2,a_2)} \\
      && \quad R\big(i_1[a_1/d_1], i_1[a_2/d_2]\big)
      \,.
  \end{array}\]
  We write~$(R,\rho) : w_1 \linear w_2$, but usually leave the~$\rho$
  implicit and write~$R:w_1\linear w_2$.
\end{defi}
We will not need this fact in the present paper, but this notion of simulation
enjoys a strong universal property. It arises as the adjoint of a very natural
tensor product~\cite{polyDiagrams} (see also the discussion on
page~\pageref{sub:internal_sim}).

To justify the fact that this can serve as a transducer, we need to
``evaluate'' a simulation on elements of~$i\IN\NU{w_1^\bot}$.
\begin{lem}\label{lem:eval_linear}
  Let~$w_1=⟨A_1,D_1,n_1⟩$ and~$w_2=⟨A_2,D_2,n_2⟩$ be indexed containers over~$I_1$ and~$I_2$,
  and~$(R,\rho) : w_1 \linear w_2$. We have a function
  \[
  \eval_R : \PI{i_1:I_1}\PI{i_2:I_2}R(i_1,i_2) \to
             (i_1 \IN \NU{w_1^\bot}) \to
             (i_2 \IN \NU{w_2^\bot})
  \,.
  \]
\end{lem}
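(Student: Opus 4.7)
The plan is to construct $\eval_R$ by coinduction, using the relation $R$ itself (paired with the input) as the bisimulation-style invariant carried along during evaluation. Concretely, I would define a predicate $X : \Pow(I_2)$ by
\[
  i_2 \IN X \quad = \quad \SI{i_1 : I_1} R(i_1, i_2) \times (i_1 \IN \NU{w_1^\bot})
\]
and build a coalgebra $\sigma : X \sub \Sem{w_2^\bot}(X)$; then $\nuIntro\,\sigma$ gives $X \sub \NU{w_2^\bot}$, from which $\eval_R\,i_1\,i_2\,r\,T_1 = \nuIntro\,\sigma\,i_2\,\langle i_1, r, T_1\rangle$.

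The bulk of the work is constructing $\sigma$. Using lemma~\ref{lem:duality}, to give $i_2 \IN \Sem{w_2^\bot}(X)$ it suffices (up to the isomorphism coming from the intensional axiom of choice) to produce an inhabitant of
\[
  \PI{a_2 : A_2(i_2)} \SI{d_2 : D_2(i_2, a_2)}\ i_2[a_2/d_2] \IN X.
\]
So, assume $\langle i_1, r, T_1\rangle : i_2 \IN X$ and $a_2 : A_2(i_2)$. First, feed $r$ and $a_2$ into $\rho$ to obtain an $a_1 : A_1(i_1)$ together with a pending continuation expecting a $d_1 : D_1(i_1, a_1)$. Second, unfold the input stream: $\nuElim\,i_1\,T_1$ lives in $i_1 \IN \Sem{w_1^\bot}(\NU{w_1^\bot})$, and lemma~\ref{lem:duality} lets us apply it to the just-produced $a_1$, yielding some $d_1 : D_1(i_1, a_1)$ and a residual $T_1' : i_1[a_1/d_1] \IN \NU{w_1^\bot}$. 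Third, resume $\rho$'s continuation with this $d_1$ to obtain $d_2 : D_2(i_2, a_2)$ and a witness $r' : R\big(i_1[a_1/d_1],\ i_2[a_2/d_2]\big)$. Return $d_2$ paired with $\langle i_1[a_1/d_1], r', T_1'\rangle : i_2[a_2/d_2] \IN X$. This defines $\sigma\,i_2\,\langle i_1, r, T_1\rangle$.

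Applying $\nuIntro$ to $\sigma$ yields the required $X \sub \NU{w_2^\bot}$, and currying gives $\eval_R$ in the stated type. Informally, each output step of $\eval_R$ mirrors the description in the text: an environment-supplied $a_2$ is translated by $\rho$ into a query $a_1$ of the input, the input answers with $d_1$, and $\rho$ converts this into the emitted response $d_2$ together with a new $R$-related pair of states from which evaluation continues.

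I do not expect a real obstacle here: everything is a routine unfolding once the invariant $X$ is chosen. The only subtlety is purely bureaucratic — tracking indices through $\rho$ and lemma~\ref{lem:duality}, and making sure the residual input $T_1'$ is at state $i_1[a_1/d_1]$ so that $r'$ does relate it to $i_2[a_2/d_2]$, as required for the pair to lie in $X$ at the new index. Everything else is a direct application of the coinduction principle for $\NU{w_2^\bot}$.
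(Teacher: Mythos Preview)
Your proposal is correct and is essentially the same as the paper's proof: the paper's ``main point'' is that the predicate $\NU{w_1^\bot}\between R^\sim$ (where $R^\sim(i_2)=\LAM{i_1}R(i_1,i_2)$) is a $\Sem{w_2^\bot}$-coalgebra, which is exactly your $X$ up to reordering the two factors of the $\Sigma$. You simply spell out the construction of the coalgebra map $\sigma$ in more detail than the paper, which defers to the Agda code.
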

\begin{proof}\checked
  This amounts to unfolding the simulation as a linear transducer. The main
  point in the Agda proof is to show
  that the predicate~$\NU{w_1^\bot}\between R^\sim(i_2)$ is a
  coalgebra for~$\Sem{w_2^\bot}$.
\end{proof}

The next lemma shows that this notion of simulation gives an appropriate
notion of morphism between indexed containers.
\begin{lem}
  We have:
  \begin{itemize}
    \item the identity type on~$I$ is a linear simulation from any~$w$
      over~$I$ to itself,
    \item if~$R$ is a linear simulation from~$w_1$ to~$w_2$, and if~$S$ is a
      linear simulation from~$w_2$ to~$w_3$, then~$S\circ R$ is a linear
      simulation from~$w_1$ to~$w_3$, where~$S\circ R$ is the relational
      composition of~$R$ and~$S$:
      \[
        (S\circ R)(i_1,i_3)
        \quad=\quad
        \SI{i_2 : I_2} R(i_1,i_2) \times S(i_2,i_3)
      \]
  \end{itemize}
\end{lem}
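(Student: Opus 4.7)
The plan is to construct the required witnesses $\rho$ in each case, thinking of a linear simulation as a two-player protocol in which moves travel right-to-left and counter-moves travel left-to-right.

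For the first point, take $R = \Id_I$ with $R(i_1, i_2) = (i_1 \equiv i_2)$. Given $p : i_1 \equiv i_2$ and a move $a_2 : A(i_2)$, path induction on $p$ reduces us to the case $i_1 = i_2$ definitionally, in which we set $a_1 := a_2$; then for $d_1 : D(i_1, a_1)$ we set $d_2 := d_1$, and the required proof of $i_1[a_1/d_1] \equiv i_2[a_2/d_2]$ is $\refl$.

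For the second point, write $\rho_R$ and $\rho_S$ for the witnesses of $R : w_1 \linear w_2$ and $S : w_2 \linear w_3$ respectively. Given $\langle i_2, r, s\rangle : (S \circ R)(i_1, i_3)$ and a move $a_3 : A_3(i_3)$, I would first feed $a_3$ into $\rho_S\,i_2\,i_3\,s$ to obtain an intermediate move $a_2 : A_2(i_2)$, and then feed $a_2$ into $\rho_R\,i_1\,i_2\,r$ to obtain $a_1 : A_1(i_1)$. Given a counter-move $d_1 : D_1(i_1, a_1)$, the direction reverses: I apply the second component of $\rho_R$ to $d_1$ to obtain $d_2 : D_2(i_2, a_2)$ together with a proof $r' : R(i_1[a_1/d_1], i_2[a_2/d_2])$, then apply the second component of $\rho_S$ to $d_2$ to obtain $d_3 : D_3(i_3, a_3)$ together with $s' : S(i_2[a_2/d_2], i_3[a_3/d_3])$. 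Packaging these as $\langle i_2[a_2/d_2], r', s'\rangle$ gives the required element of $(S\circ R)(i_1[a_1/d_1], i_3[a_3/d_3])$.

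The only mild obstacle is keeping the contravariance straight: $\rho_S$ must be invoked before $\rho_R$ on the move side but after $\rho_R$ on the counter-move side, mirroring the pattern of function composition applied to a $\PI\SI\PI\SI\cdots$ telescope. Once that order is fixed both clauses are entirely syntactic, and I would expect the construction to transcribe directly into Agda with no appeal to equality beyond the transport needed in the identity case.
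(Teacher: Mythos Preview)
Your proposal is correct and matches the paper's own argument: the paper's proof is the one-line remark that the identity case is straightforward and that composition ``amounts to extracting the functions $a_1 \mapsto a_2$ and $d_2 \mapsto d_1$ from the simulations, and composing them,'' which is precisely the move/counter-move pipeline you spell out. Your more explicit bookkeeping of the contravariant/covariant order is exactly what is needed to make that sentence precise.
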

\begin{proof}[Proof] \checked
  That the identity type is a linear simulation is straightforward. Composing
  simulation amounts to extracting the functions~$a_1 \mapsto a_2$ and~$d_2
  \mapsto d_1$ from the simulations, and composing them.
\end{proof}
Note that composition of simulations is only associative \emph{up to
associativity of relational composition} (pullback of spans) so that a quotient is needed to turn
indexed containers into a category~\cite{polyDiagrams}.
What is nice is that composition of simulations corresponds to composition of
their evaluations, up to bisimilarity.
\begin{lem}
  If~$w_1$, $w_2$ and~$w_3$ are containers indexed on~$I_1$, $I_2$ and~$I_3$,
  and if~$R:w_1 \linear w_2$ and~$S: w_2 \linear w_3$, then we have
  \[
    \eval_{S\circ R}\,i_1\,i_3\,⟨i_2,⟨r,s⟩⟩
    \quad\approx\quad
    \eval_S\,i_2\,i_3\,s \circ \eval_R\,i_1\,i_2\,r
  \]
  where
  \begin{itemize}
    \item
      $i_1:I_1$, $i_2:I_2$, $i_3:I_3$,

    \item
      $r:R(i_1,i_2)$ and~$s:S(i_2,i_3)$,

    \item
      and thus, $⟨i_2,⟨r,s⟩⟩: (S\circ R)(i_1,i_3)$.
  \end{itemize}
  Recall that for functions, $f \approx g$ means that for every input~$T$
  (here of type~$i_1\IN\NU{w_1^\bot}$), we have ``$f\,T \approx g\,T$, \ie
  $f\,T$ is bisimilar to~$g\,T$''.
\end{lem}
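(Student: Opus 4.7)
The approach is coinductive. First, by Lemma~\ref{lem:bisim_id} (applicable because none of $w_1$, $w_2$, $w_3$ mentions the identity type), it is enough to fix an input $T : i_1 \IN \NU{w_1^\bot}$ and show that
\[
\eval_{S\circ R}\,i_1\,i_3\,\langle i_2,\langle r,s\rangle\rangle\,T
\;\approx\;
\eval_S\,i_2\,i_3\,s\,\bigl(\eval_R\,i_1\,i_2\,r\,T\bigr)
\]
as elements of $i_3 \IN \NU{w_3^\bot}$. To obtain this, I would build an $I_3$-indexed candidate bisimulation $B$ by declaring, at index $i_3$, a pair $\langle U_1, U_2\rangle$ to be $B$-related iff there exist $i_1, i_2, r, s, T$ such that $U_1$ is the left-hand side and $U_2$ the right-hand side of the displayed formula above. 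The target pair lies in $B$ by construction, so the task reduces to showing that $B$ is a bisimulation for $\Sem{w_3^\bot}$.

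To verify that $B$ is a bisimulation, I fix an arbitrary move $a_3 : A_3(i_3)$ and unfold both sides through $\nuElim$ using the definition of $\eval$ from Lemma~\ref{lem:eval_linear} together with the explicit formula for composition of simulations. On the right, $\eval_S$ runs $\rho_S$ at $(s, a_3)$ to produce some $a_2 : A_2(i_2)$; $\eval_R$ then runs $\rho_R$ at $(r, a_2)$ to produce $a_1 : A_1(i_1)$; the input $T$ responds to $a_1$ with some $d_1 : D_1(i_1, a_1)$; $\rho_R$ turns $d_1$ into $d_2$ (together with a new witness $r' : R(i_1[a_1/d_1], i_2[a_2/d_2])$); and finally $\rho_S$ turns $d_2$ into $d_3$ (with a new witness $s'$). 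On the left, the composed transfer function of $S \circ R$ is, by the very definition of simulation composition, precisely this same composite applied to the same $a_3$ and the same response $d_1$ fed back from $T$. Hence both sides respond to $a_3$ with the same $d_3 : D_3(i_3, a_3)$, and their continuations at the updated index $i_3[a_3/d_3]$ are again of the required form, witnessed by the updated tuple $(i_1', i_2', T', r', s')$ produced along the way, so they are once more $B$-related.

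The main obstacle is bookkeeping rather than any conceptual difficulty: although the two computations plainly trace the same path through the transfer functions $\rho_R$ and $\rho_S$, matching them as type-theoretic terms requires carrying witnesses of $R$ and $S$ along the propositional equalities between the indices produced by the two orderings of unfolding, and similarly for $T$ after it has consumed $a_1$. This is exactly the ``transporting elements along equalities'' that already made the proof of Lemma~\ref{lem:eval_linear} tedious, and it is the reason I would conduct the bisimulation argument in the meta-theory rather than writing it out as an explicit Agda term.
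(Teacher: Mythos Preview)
Your proposal is correct and follows essentially the same line as the paper: reduce via Lemma~\ref{lem:bisim_id} to a fixed input $T$, then exhibit a bisimulation by unfolding one step and observing that the composed simulation applies exactly the same transfer functions $\rho_R$ and $\rho_S$ in the same order as the two separate evaluations. The only point worth flagging is your closing remark: the paper notes that this is in fact one of the cases where the direct type-theoretic bisimilarity proof \emph{is} tractable in Agda (it is marked \checked), so the transport bookkeeping, while present, is not as bad here as you anticipate.
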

\begin{proof}\checked
  This is one instance where the direct, type theoretic proof of bisimilarity
  is possible, and not (too) tedious. With the transducer intuition in mind, this result
  is natural: starting from some~$a_3 : A_1(i_3)$, we can either
  \begin{itemize}
    \item
      transform it to~$a_2 : A_2(i_2)$ (with the simulation from~$w_2$
      to~$w_3$) and then to~$a_1 : A_1(i_1)$ (with the simulation from~$w_1$
      to~$w_2$),
    \item
      or transform it directly to~$a_1:A(i_1)$ (with the composition of the
      two simulations).
  \end{itemize}
  Because composition is precisely defined by composing the functions
  making the simulations, the two transformations are obviously equal. (The
  Agda proof is messier than that but amounts to the same thing.)

  Note that because of Lemma~\ref{lem:bisim_id}, the Agda proof only needs to
  show that~$(\eval_S \circ \eval_R) \, T$ is bisimilar to~$\eval_{S\circ R}
  \, T$.
\end{proof}

\subsection{General Simulations}  

As far as representing functions from~$i_1\IN\NU{w_1^\bot}$
to~$i_2\IN\NU{w_2^\bot}$, linear simulation are not very powerful. For streams,
the first~$n$ elements of the result may depend at most on the first~$n$
elements of the input! Here is a typical continuous function that cannot be
represented by a linear simulation. Given a stream~$s$ of natural numbers,
look at the head of~$s$:

\begin{itemize}
  \item
    if it is~$0$, output~$0$ and start again with the
    tail of the stream,

  \item
    if it is~$n>0$, remove the next~$n$ element of the stream, output their
    sum, and start again.
\end{itemize}
For example, on the stream~$[0,1,2,3,4,5,6,\dots]$, the function outputs
\[
  \Big[0,\,
  2,\,
  \overbrace{\underbrace{\big.4+5+6}_{=15}}^{\text{3 elements}},\,
  \overbrace{\underbrace{\big.8+9+\cdots+14}_{=77}}^{\text{7 elements}},\,
  \overbrace{\underbrace{\big.16+17+\cdots+31}_{=376}}^{\text{15 elements}},\,
  \dots\Big]
  = [0,2,15,77,376, \dots]
\]
We can generalize transducers by allowing them to work in the following
manner.
\begin{enumerate}
  \item
    When given some~$b_0$,
  \item
    they produce an~$a_0$ and feed it to their argument,
  \item
    they receive a~$d_0$ from their argument,
  \item
    and \emph{either go back to step~\textup{(2)} or} produce an~$e_0$
    (output) for their result,
  \item
    start again at step (1)...
\end{enumerate}
In other words, steps (2) and (3) can occur several times in a row. We can
even allow the transducer to go directly from step~(1) to step~(4), bypassing
steps~(2) and~(3) entirely. Of course steps~(3) and~(4) should never happen
infinitely many times consecutively.
\begin{defi}
  Let~$w_1$ and~$w_2$ be indexed containers
  over~$I_1$ and~$I_2$, let~$R:\Pow(I_1 \times I_2)$ be a relation between
  states; we say that~$R$ is a \emph{general simulation from~$w_1$ to~$w_2$}
  if it is a linear simulation from~$w_1^\ast$ to~$w_2$.
\end{defi}
In other words, $\langle R,\rho\rangle$ is a general simulation
from~$⟨A_1,D_1,n_1⟩$ to~$⟨A_2,D_2,n_2⟩$ if
\[\begin{array}{rcl}
  \rho : \PI{i_1:I_1}\PI{i_2:I_2}R(i_1,i_2) &\to&
        \PI{a_2 : A_2(i_2)} \\
    &&  \SI{\alpha_1 : A_1^\ast(i_1)} \\
    &&  \PI{\delta_1 : D_1^\ast(i_1,\alpha_1)} \\
    &&  \SI{d_2 : D_2(i_2,a_2)} \\
    && \quad R\big(i_1[\alpha_1/\delta_1], i_2[a_2/d_2]\big)
    \,.
\end{array}\]

Thanks to Lemma~\ref{lem:eval_linear}, such a simulation automatically gives
rise to a function from~$\NU{w_1^{\ast\bot}}$ to~$\NU{w_2^\bot}$. Fortunately,
$\NU{w_1^{\ast\bot}}$ is, up to bisimulation, isomorphic to~$\NU{w_1^\bot}$.
\begin{lem}\label{lem:*WTC}
  $\NU{w^{\ast\bot}}$ is a weakly terminal coalgebra for~$\Sem{w^\bot}$.
  
\end{lem}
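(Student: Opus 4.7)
The approach is to use the recipe for weak terminality described just before Corollary~\ref{cor:WTC}, exploiting the fact that $\Sem{w^\ast}$ is the free monad on $\Sem{w}$. The key idea is that $\Sem{w^\bot}$-coalgebras and $\Sem{w^{\ast\bot}}$-coalgebras are interchangeable: a $\Sem{w^{\ast\bot}}$-coalgebra restricts to a $\Sem{w^\bot}$-coalgebra along the embedding $A(i) \hookrightarrow A^\ast(i)$ sending $a$ to $\node(a,\lambda d.\leaf)$, and conversely any $\Sem{w^\bot}$-coalgebra can be extended to a $\Sem{w^{\ast\bot}}$-coalgebra by induction on elements of $A^\ast$.

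First, I would construct $\mathtt{elim} : \NU{w^{\ast\bot}} \sub \Sem{w^\bot}(\NU{w^{\ast\bot}})$. Given $T : i \IN \NU{w^{\ast\bot}}$, I unfold it with $\nuElim$ to obtain, via Lemma~\ref{lem:duality}, a function that on $\alpha : A^\ast(i)$ returns a path $\delta : D^\ast(i,\alpha)$ and a continuation in $\NU{w^{\ast\bot}}$. Given $a : A(i)$, I instantiate this at the singleton tree $\alpha_a = \node\bigl(a, \lambda d.\leaf\bigr)$. By the defining clauses of $D^\ast$ and $n^\ast$, we have $D^\ast(i,\alpha_a) = \SI{d:D(i,a)}\One$ and $n^\ast(i,\alpha_a,⟨d,\star⟩) = i[a/d]$ definitionally, which yields the required $d : D(i,a)$ together with a continuation in $i[a/d] \IN \NU{w^{\ast\bot}}$.

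Next, for $\mathtt{intro}$, given a coalgebra $c : X \sub \Sem{w^\bot}(X)$, I first build $\tilde c : X \sub \Sem{w^{\ast\bot}}(X)$ by induction on $\alpha : A^\ast(i)$ (using its dualised presentation from Lemma~\ref{lem:duality}): on $\leaf$, send $x$ to $⟨\star, x⟩$; on $\node(a,k)$, use $c$ to obtain $d : D(i,a)$ and $x' : i[a/d] \IN X$, then recursively apply $\tilde c$ on $k\,d$ and $x'$ to produce the remaining path and continuation. I then define $\mathtt{intro}\,c = \nuIntro\,\tilde c : X \sub \NU{w^{\ast\bot}}$.

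Finally, the computation rule for $\NU{w^{\ast\bot}}$ applied to $\tilde c$ gives $\nuElim\bigl(\nuIntro\,\tilde c\,i\,x\bigr) = \Sem{w^{\ast\bot}}_{\nuIntro\,\tilde c}\,i\,(\tilde c\,i\,x)$, and instantiating both sides at $\alpha_a$ reduces, by the inductive definition of $\tilde c$ on singleton trees, to exactly $\Sem{w^\bot}_{\mathtt{intro}\,c}\,i\,(c\,i\,x)$ paired with $a$; this is the required computation rule for $\mathtt{elim}$ and $\mathtt{intro}$, whence Lemma~\ref{lem:WTC} gives weak terminality. The main obstacle I expect is this last step: Lemma~\ref{lem:duality} is a definitional equivalence only up to the intensional axiom of choice, so one must either carry out the construction in a form that keeps the relevant equalities judgemental (by using the $\SI{}\PI{}$ form of $\Sem{w^\bot}$ directly and choosing the same presentation on both sides) or retreat to equality up to bisimulation, which is the setting of Lemma~\ref{lem:bisim_id} and is anyway all that is needed to invoke Corollary~\ref{cor:WTC}.
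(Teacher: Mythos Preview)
Your proposal is correct and follows essentially the same route as the paper: the construction of \texttt{elim} by instantiating at the singleton tree $\node(a,\lambda d.\leaf)$ is exactly what the paper sketches, and your inductive extension of a $\Sem{w^\bot}$-coalgebra to a $\Sem{w^{\ast\bot}}$-coalgebra for \texttt{intro} is the natural construction that the paper simply defers to its Agda formalisation. If anything, you have spelled out more of the argument than the paper itself, including the computation rule and the caveat about Lemma~\ref{lem:duality} only giving an equivalence rather than a judgemental equality.
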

\begin{proof}\checked
  From an element of~$i \IN \NU{w^{\ast\bot}}$  we can use the elimination rule
  and extract a member of~$\PI{\alpha:A^\ast(i)}\SI{\delta:D^\ast(i,\alpha)}
  i[\alpha/\delta]\NU{w^{\ast\bot}} $.
  Given some~$a:A(i)$, we instantiate~$\alpha$ to~$\node(a,\LAM{d:D(i,a)}
  \leaf) : A^\ast(i)$ (a single~$a$, followed by nothing), and its responses
  are just responses to~$a$. This produces an element
  of~$\PI{a:A(i)}\SI{d:D(i,a)} i[a/d] \IN \NU{w^{\ast\bot}}$, \ie an element
  of~$i\IN\Sem{w^\bot}(\NU{w^{\ast\bot}})$. We've just shown
  that~$\NU{w^{\ast\bot}} \sub \Sem{w^\bot}\NU{w^{\ast\bot}}$. We refer to the
  Agda code for the rest of the proof.
\end{proof}
\begin{cor}\label{cor:RTCbot_RTC_WTC}
  $\NU{w^{\ast\bot}}$ and~$\NU{w^\bot}$ are isomorphic up to bisimulation:
  \[
  \NU{w^{\ast\bot}} \stackrel{\approx}\longleftrightarrow \NU{w^\bot}
  \,.
  \]
\end{cor}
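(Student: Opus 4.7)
The plan is to reduce the statement directly to Corollary~\ref{cor:WTC} by using Lemma~\ref{lem:*WTC}. That lemma already establishes that $\NU{w^{\ast\bot}}$ is a weakly terminal coalgebra for the functor $\Sem{w^\bot}$, and $\NU{w^\bot}$ is also (by definition) a weakly terminal coalgebra for the same functor. So both types inhabit the position of $C$ in Corollary~\ref{cor:WTC}, and the conclusion follows mechanically.

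Concretely, I would proceed as follows. First, instantiate Corollary~\ref{cor:WTC} with the functor $F = \Sem{w^\bot}$ and with $C := \NU{w^{\ast\bot}}$, which is legal by Lemma~\ref{lem:*WTC}. This yields functions $f : \NU{w^\bot} \sub \NU{w^{\ast\bot}}$ and $g : \NU{w^{\ast\bot}} \sub \NU{w^\bot}$ with $f\,i\,(g\,i\,T) \approx T$ for every $T : i\IN\NU{w^{\ast\bot}}$. To get the other half of the ``isomorphism up to bisimulation'', I would swap the roles: take $C := \NU{w^\bot}$ (which is trivially a weakly terminal coalgebra for $\Sem{w^\bot}$, with identity mediating morphism) and apply the corollary again, or equivalently invoke the first bullet of Lemma~\ref{lem:WTC} to conclude that $g\,i\,(f\,i\,S) \approx S$ for every $S : i\IN\NU{w^\bot}$. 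Together these two statements give the required bisimulation-level isomorphism $\NU{w^{\ast\bot}} \stackrel{\approx}\longleftrightarrow \NU{w^\bot}$.

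The only real content hides in the appeal to Lemma~\ref{lem:*WTC}: unfolding an element of $i\IN\NU{w^{\ast\bot}}$ one step of $\Sem{w^\bot}$ at a time requires instantiating the $A^\ast$-move to a singleton tree $\node(a,\LAM{d}\leaf)$, which the sketch of that lemma already explains. Everything else --- the definition of $\mathtt{elim}$, $\mathtt{intro}$, and the computation rule $\mathtt{comp}$ for $\NU{w^{\ast\bot}}$ as a weakly terminal $\Sem{w^\bot}$-coalgebra --- is assumed discharged there.

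I expect no serious obstacle in this corollary itself: it is a purely formal consequence of Lemma~\ref{lem:WTC} (uniqueness up to $\approx$ of weakly terminal coalgebras) once Lemma~\ref{lem:*WTC} is in place. The only subtle point worth flagging is that $\approx$ is not known to be transitive in full generality (see the remark after Lemma~\ref{lem:WTC}), so one must be careful to quote the first bullet of Lemma~\ref{lem:WTC} directly --- giving $fg \approx \id$ and $gf \approx \id$ in one go --- rather than chaining bisimilarities by hand.
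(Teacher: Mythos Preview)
Your proposal is correct and follows essentially the same route as the paper, which simply records the corollary as ``a direct consequence of Lemma~\ref{lem:WTC}'' once Lemma~\ref{lem:*WTC} is in hand. One small slip: it is the \emph{second} bullet of Lemma~\ref{lem:WTC} that delivers both $fg\approx\id$ and $gf\approx\id$ in one go, not the first.
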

\begin{proof}
  This is a direct consequence of Lemma~\ref{lem:WTC}.
\end{proof}
By composing the function~$\NU{w^\bot} \sub \NU{w^{\ast\bot}}$ with the
evaluation of linear simulations (Lemma~\ref{lem:eval_linear}), we get an
evaluation function for general simulations.
\begin{cor}\label{cor:eval*}
  Let~$w_1=⟨A_1,D_1,n_1⟩ $ and~$w_2=⟨A_2,D_2,n_2⟩ $ be indexed containers over~$I$
  and~$I_2$, and~$(R,\rho) : w^\ast \linear w_2$. We have a function
  \[
  \eval^\ast_R : \PI{i_1:I_1}\PI{i_2:I_2}R(i_1,i_2) \to
             (i_1 \IN \NU{w_1^\bot}) \to
             (i_2 \IN \NU{w_2^\bot})
  \,.
  \]
\end{cor}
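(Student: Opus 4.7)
The plan is to obtain $\eval^\ast_R$ as the composition of two maps already constructed in the preceding results: the conversion $\NU{w_1^\bot} \sub \NU{w_1^{\ast\bot}}$ furnished by the weakly-terminal-coalgebra lemma, followed by the linear evaluation of $R$ viewed as a linear simulation from $w_1^\ast$ to $w_2$.

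First, recall that, by definition, a general simulation $R$ from $w_1$ to $w_2$ is the same thing as a linear simulation from $w_1^\ast$ to $w_2$, together with its witness $\rho$. Lemma~\ref{lem:eval_linear} applied to this linear simulation $(R,\rho) : w_1^\ast \linear w_2$ therefore yields directly
\[
\eval_R \,:\, \PI{i_1:I_1}\PI{i_2:I_2} R(i_1,i_2) \to (i_1 \IN \NU{w_1^{\ast\bot}}) \to (i_2 \IN \NU{w_2^\bot}) \,.
\]
So the only missing ingredient is a way to feed an element of $i_1\IN\NU{w_1^\bot}$ to this evaluator, which lives on $\NU{w_1^{\ast\bot}}$ rather than $\NU{w_1^\bot}$.

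For that, I appeal to Corollary~\ref{cor:RTCbot_RTC_WTC}: since Lemma~\ref{lem:*WTC} shows $\NU{w_1^{\ast\bot}}$ is a weakly terminal coalgebra for $\Sem{w_1^\bot}$, and $\NU{w_1^\bot}$ is also one by construction, Lemma~\ref{lem:WTC} gives a mediating morphism $g : \NU{w_1^\bot} \sub \NU{w_1^{\ast\bot}}$ (the intuition being that any infinite unfolding in $w_1$ can be read as a trivial infinite unfolding in $w_1^\ast$ by using $\node(a,\LAM{d}\leaf)$ at each step, just as in the proof of Lemma~\ref{lem:*WTC}). I then set
\[
\eval^\ast_R \, i_1 \, i_2 \, r \, T \;=\; \eval_R \, i_1 \, i_2 \, r \, (g \, i_1 \, T) \,.
\]

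There is essentially no obstacle: all the genuine work—establishing that $w_1^{\ast\bot}$ supports weak finality (Lemma~\ref{lem:*WTC}) and that linear simulations can be evaluated coinductively (Lemma~\ref{lem:eval_linear})—has already been done. The corollary is purely the bookkeeping composition of these two ingredients. If one cared about behaviour up to bisimulation, the only further remark would be that, since $g$ is a mediating morphism between weakly terminal coalgebras, Lemma~\ref{lem:WTC} guarantees it is bisimilar to the identity, so the definition of $\eval^\ast_R$ is independent (up to $\approx$) of which mediating morphism one picks.
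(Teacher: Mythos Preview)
Your proof is correct and matches the paper's approach exactly: the paper states just before the corollary that one obtains $\eval^\ast_R$ by composing the map $\NU{w_1^\bot} \sub \NU{w_1^{\ast\bot}}$ from Corollary~\ref{cor:RTCbot_RTC_WTC} with the linear evaluation of Lemma~\ref{lem:eval_linear}, which is precisely what you do. Your additional remark about bisimulation-independence of the choice of mediating morphism is a nice observation, though not required for the bare statement.
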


\subsubsection*{Sidenote on formal topology}  

When evaluating a general simulation~$R: w_1^\ast \linear w_2$ directly (\ie
not relying on Corollary~\ref{cor:RTCbot_RTC_WTC}), we need to compute an
element of~$i_2 \IN \NU{w_2^\bot}$ from
\begin{itemize}
  \item
    a state~$i_1 : I_1$ and a state~$i_2 : I_2$,
  \item
    an element~$r : R(i_1,i_2)$,
  \item
    an element~$T_1 : i_1 \IN \NU{w_1^\bot}$.
\end{itemize}
We then need to find, for each branch~$a_2 : A_2(i_2)$, a
corresponding~$d_2 : D_2(i_2,a_2)$ and a way to continue the computation.
Given~$a_2$, by the general simulation property, we can compute an
element of
\[
  \SI{\alpha_1 : A_1^\ast(i_1)}
  \PI{\delta_1 : D_1^\ast(i_1,a_1)}
  \SI{d_2 : D_2(i_2,a_2)}
  R\big(i_1[\alpha_1/\delta_1] , i_2[a_2/d_2]\big)
\]
which can be rewritten as
\[
  i_1 \IN
  \Sem{w_1^\ast} \bigg(
  \LAM{i:I_1}
  \SI{d_2 : D_2(i_2,a_2)}
  R\big(i , i_2[a_2/d_2]\big)
  \bigg)
  \,.
\]
The crucial first step is matching the well-founded tree~$\alpha_1$ with the
infinite tree~$T_1$. This can be done with
\begin{lem}\label{lem:SambinExecution}
  Let~$I:\Set$, and~$w$ an indexed container over~$I$, and~$X$ a predicate
  over~$I$. The type~$w^\ast(X) \meets \NU{w^\bot} \to X \meets
  \NU{w^\bot}$ is inhabited.
\end{lem}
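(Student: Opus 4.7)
The plan is to perform structural induction on the well-founded tree $\alpha : A^\ast(i)$ that witnesses membership in $\Sem{w^\ast}(X)$, while threading the (potentially infinite) strategy $T : i \IN \NU{w^\bot}$ along. No coinduction is required on the output: an element of $X \meets \NU{w^\bot}$ is built by descending inside $\alpha$, peeling off one layer of $T$ at each internal node, and halting as soon as a leaf is reached. Intuitively, this is Sambin's ``execution formula'': the infinite object $T$ has enough information to follow any finite exploration plan $\alpha$ until that plan commits to a state in $X$.

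After unpacking the hypothesis, I have $i:I$, a pair $\langle\alpha,k\rangle$ with $\alpha : A^\ast(i)$ and $k : \PI{\delta:D^\ast(i,\alpha)} n^\ast(i,\alpha,\delta) \IN X$, and $T : i\IN\NU{w^\bot}$. I define a function $\Phi$ by recursion on $\alpha$. In the $\leaf$ case, $D^\ast(i,\leaf) = \One$ and $n^\ast(i,\leaf,\star) = i$ hold definitionally, so $k\,\star : i \IN X$, and I output $\langle i,\langle k\,\star, T\rangle\rangle$. In the $\node(a,\kappa)$ case, I apply $\nuElim$ to $T$ to view it as a member of $\PI{a':A(i)}\SI{d':D(i,a')} i[a'/d']\IN\NU{w^\bot}$, feed in the move $a$, and receive a pair $\langle d, T'\rangle$ with $d:D(i,a)$ and $T' : i[a/d]\IN\NU{w^\bot}$. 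The subtree $\kappa\,d : A^\ast(i[a/d])$, paired with the residual continuation $\lambda\delta'.\,k\,\langle d,\delta'\rangle$, forms an element of $i[a/d]\IN\Sem{w^\ast}(X)$, on which I call $\Phi$ recursively together with $T'$. The recursion terminates because $\kappa\,d$ is a structural subterm of $\alpha$.

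The main obstacle I anticipate is purely the bookkeeping of dependent types when combining $k$ with the subtree $\kappa\,d$ in the node case. This goes through smoothly only because the defining equations $D^\ast(i,\node(a,\kappa)) = \SI{d:D(i,a)} D^\ast(i[a/d],\kappa\,d)$ and $n^\ast(i,\node(a,\kappa),\langle d,\delta'\rangle) = n^\ast(i[a/d],\kappa\,d,\delta')$ are definitional rather than propositional equalities; otherwise one would have to transport $k\,\langle d,\delta'\rangle$ along an equality to land in the right fiber of $X$. This is precisely why the inductive-recursive presentation of $w^\ast$ chosen earlier, rather than a $W$-type encoding requiring extensional equality, is essential: the whole construction can then be carried out in pure dependent type theory without any use of identity types.
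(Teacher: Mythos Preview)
Your proof is correct and follows exactly the same approach as the paper: structural induction on the well-founded tree $\alpha$, outputting directly in the $\leaf$ case and, in the $\node$ case, unfolding $T$ via $\nuElim$, feeding in the move $a$ to obtain $d$ and a residual $T'$, and recursing on the subtree at $d$ together with the restricted continuation. Your write-up is in fact more explicit than the paper's about the residual continuation $\lambda\delta'.\,k\,\langle d,\delta'\rangle$ and about why the definitional equalities for $D^\ast$ and $n^\ast$ make the construction go through without identity types.
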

\begin{proof}\checked
  By matching dual quantifiers coming from~$i \IN w^\ast(X)$
  and~$i\IN\nu_{w^\bot}$, we get an alternating sequence of moves~$a_i$ and
  counter moves~$d_i$ reaching a final state~$i_f$ in~$X$, together with a
  infinite tree in~$i_f \IN \NU{w^\bot}$. More formally, suppose~$w^\ast(X)
  \meets \NU{w^\bot}$ is inhabited, i.e., that we have $i : I$, $\langle
  \alpha, f \rangle : i\IN w^\ast(X)$ and a (non-well-founded) tree $T$
  in~$i\IN\NU{w^\bot}$. We examine~$\alpha$:

 \begin{itemize}
   \item
     $\alpha=\leaf(i)$: we have $f\,\star : i\IN X$, in which
     case~$X\meets\NU{w^\bot}$ is inhabited.

   \item
     $\alpha=\node(a,k)$: where~$k:\PI{d} i[a/d]\IN w^\ast(X)$. We can
     apply~$\nuElim$ to ~$T:i \IN \NU{w^\bot}$ to obtain a function
     in~$\PI{a}\SI{d}i[a/d]\IN\NU{w^\bot}$. Applying that function
     to~$a:A(i)$, we get~$d:D(i,a)$ s.t.
     \begin{itemize}
       \item
         $i[a/d] \IN \NU{w^\bot}$,
       \item
         $k\,d : i[a/d] \IN w^\ast(X)$.
     \end{itemize}
     This pair inhabits~$w^\ast(X)\meets\NU{w^\bot}$, and by
     induction hypothesis, yields an inhabitant of~$X\meets\NU{w^\bot}$.
     \qedhere
 \end{itemize}
\end{proof}
This formula is at the heart of formal topology, where it is called
``monotonicity''~\cite{DBLP:journals/apal/CoquandSSV03}. There,~$i \IN
\Sem{w^\ast}(U)$ is read ``the basic open~$i$ is covered by~$U$'' (written~$i
\triangleleft U$), and~$i\IN\NU{w^\bot}$ is read ``the basic open~$i$ contains
a point'' and is written~$i\IN\mathsf{Pos}$.

Applied to the present situation with~$X= \LAM{i:I_1} \SI{d_2 :
D_2(i_2,a_2)} R\big(i , i_2[a_2/d_2]\big)$, we get an element
of~$X\meets\NU{w_2^\bot}$, which is precisely given by
\begin{itemize}
  \item
    a state~$i_1 : I_1$,
  \item
    a pair~$\langle d_2 , r\rangle$ with~$d_2:D_2(i_2,a_2)$ and~$r:R(i_1,
    i_2[a_2/d_2])$,
  \item
    an element~$T_1:\NU{w_1^\bot}$.
\end{itemize}
In other words, we get the sought after~$d_2$ (giving the first element of
the~$a_2$ branch), together with enough data to continue the computation.

\subsection{The Free Monad Construction as a Comonad}  

Composition of general simulations does not follow directly from composition of
linear simulations because there is a mismatch on the middle container:
composing~$R : w_1^\ast \linear w_2$ and~$S:w_2^\ast \linear w_3$ to
get a simulation from~$w_1^\ast$ to~$w_3$ is not obvious.
Fortunately, the operation~$w \mapsto w^\ast$ lifts to a comonad, and the
composition corresponds to composition in its (co)Kleisli category.

\begin{prop}
  Let~$\C$ be a locally cartesian closed category. The operation~$P \mapsto
  P^\ast$ lifts to a monad on the category of polynomial functors over~$I$
  with cartesian natural transformations between them.
\end{prop}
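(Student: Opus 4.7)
The plan is to equip $(-)^\ast$ with monad structure by exhibiting its unit, its multiplication, and its action on morphisms, and then verifying the monad axioms. Semantically, lemma~\ref{lem:RTC_mu} identifies $\Sem{w^\ast}$ as the free monad on $\Sem{w}$, so by the universal property of free monads there exist a unit $\Sem{w} \to \Sem{w^\ast}$ and a multiplication $\Sem{w^\ast}\circ\Sem{w^\ast} \to \Sem{w^\ast}$ satisfying the monad laws at the level of endofunctors. The real content of the proposition is that these two natural transformations are \emph{cartesian}, so that they arise from morphisms $\eta_w : w \to w^\ast$ and $\mu_w : w^{\ast\ast} \to w^\ast$ in the category of polynomial functors.

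First I would describe the unit concretely. A cartesian morphism between indexed containers amounts to a forward map on commands together with a backwards map on positions, commuting with the indexing. For $\eta_w$, the command component $A(i) \to A^\ast(i)$ sends $a$ to $\node(a,\lambda d.\, \leaf)$; a $D^\ast$-position in this one-internal-node tree is a pair $\langle d, \star\rangle$ with $d : D(i,a)$, and the position component projects onto the first coordinate. Commutativity with $n^\ast$ is immediate from the definition.

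Next I would describe the multiplication $\mu_w : w^{\ast\ast} \to w^\ast$ as grafting. The key auxiliary operation is leaf-substitution: given $\alpha : A^\ast(i)$ and $f : \PI{\delta : D^\ast(i,\alpha)} A^\ast\big(n^\ast(i,\alpha,\delta)\big)$, define $\alpha \oplus f : A^\ast(i)$ by induction on $\alpha$, with $\leaf \oplus f = f(\star)$ and $\node(a,k)\oplus f = \node\big(a,\, \lambda d.\, k(d) \oplus (\delta \mapsto f\langle d,\delta\rangle)\big)$. Then the command part of $\mu_w$ sends $\leaf$ to $\leaf$ and $\node(\alpha, k)$ to $\alpha \oplus (\delta \mapsto \mu_w(k(\delta)))$; the accompanying position map splits each $D^\ast$-path in the flattened tree into an outer $D^\ast$-path in $\alpha$ and an inner $D^{\ast\ast}$-path in the chosen continuation. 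Functoriality of $(-)^\ast$ on cartesian nats is similarly defined by tree induction, relabelling each internal node by the command map and lifting the backward position map through the tree.

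The remaining work consists of verifying naturality of $\eta$ and $\mu$ together with the three monad laws, each by induction on the relevant tree structure. The main obstacle I expect is the associativity $\mu_w \circ \mu_{w^\ast} = \mu_w \circ (\mu_w)^\ast$: grafting a tree of trees of trees in the two possible orders must yield the same $A^\ast$-tree with matching $D^\ast$-positions, and the inductive bookkeeping is delicate. A cleaner route uses the universal property from lemma~\ref{lem:RTC_mu}: at the semantic level both sides are the canonical associativity composite for the free monad $\Sem{w^\ast}$ and hence coincide as natural transformations; since cartesian natural transformations between polynomials are in bijection with polynomial morphisms, this semantic equality is already equality in the category we care about, reducing the task to checking that our inductive definitions realise the semantic structure maps on the nose.
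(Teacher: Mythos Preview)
Your proposal is correct and close in spirit to the paper's argument, but the decomposition is different enough to be worth noting. You build the unit and multiplication concretely as container morphisms (the one-node embedding and tree grafting), and then propose to discharge the monad laws either by a direct tree induction or by appealing to the universal property recorded in lemma~\ref{lem:RTC_mu}. The paper reverses the order of emphasis: it first obtains the monad structure and its laws abstractly, by exhibiting $(-)^\ast$ as left adjoint to the forgetful functor $\Mnd(\C/I)\to\End(\C/I)$ via the chain $\Nat_\Mnd(P^\ast,M)\cong(\MAlg{M}\to_\C\MAlg{P^\ast})\cong(\MAlg{M}\to_\C\Alg{P})\cong\Nat(P,M)$ (citing Barr and Gambino--Hyland), so that the composite $\mathcal{U}(-)^\ast$ is automatically a monad. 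Only then does it show that the unit and multiplication are cartesian, by writing them as identity-carrier linear simulations $(\equiv,\varepsilon_w):w^\ast\linear w$ and $(\equiv,\delta_w):w^\ast\linear w^{\ast\ast}$; your one-node embedding is exactly their~$\varepsilon_w$, and your grafting is their inductive~$\delta_w$, for which the paper also sketches the semantic derivation from lemma~\ref{lem:RTC_mu}. What the paper's route buys is that the monad laws come for free from the adjunction and need not be chased through tree inductions; indeed the paper remarks that it could not complete the purely type-theoretic verification of the laws in Agda and relies on the categorical argument. What your route buys is a self-contained concrete description that does not invoke the external references, at the cost of the delicate associativity bookkeeping you flag; your closing suggestion to fall back on the universal property is essentially how the paper avoids that bookkeeping, just phrased one abstraction level lower than the adjunction.
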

\begin{proof}\partlychecked
  The operation~$P\mapsto P^\ast$ goes from~$\End(\C/I)$, the category of
  polynomial endofunctors on~$\C/I$ to~$\Mnd(\C/I)$, the category of
  (polynomial) monads over~$\C/I$~\cite{GambHyl,polyMonads}.
  We write~$\Nat(F,G)$ for natural transformations from~$F$ to~$G$,
  and~$\Nat_{\Mnd}(F,G)$ for those transformations that respect the monad
  structures of~$F$ and~$G$.
  Writing~$\Alg{F}$ for the category of $F$-algebras, and~$\MAlg{F}$
  (when~$F$ is a monad) for the category of~$F$-algebras that respect the
  monad operations, we have
  \begin{myequation}
    \Nat_\Mnd(P^\ast, M)
    &\cong&
    \MAlg{M} \longrightarrow_\C \MAlg{P^\ast}
    &\text{\footnotesize\cite[proposition (5.3)]{Barr70}}
    \\
    &\cong&
    \MAlg{M} \longrightarrow_\C \Alg{P}
    &\text{\footnotesize\cite[proposition 17]{GambHyl}}
    \\
    &\cong&
    \Nat(P, M)
    &\text{\footnotesize\cite[proposition (5.2)]{Barr70}}
    \\
  \end{myequation}
  This shows that~$\BLANK^\ast$ is left adjoint to the forgetful
  functor~$\mathcal{U} : \Mnd(\C/I) \to \End(\C/I)$, and makes the
  composition~$\mathcal{U}\BLANK^\ast$ a monad. It only remains
  to show that the monad operations are strong cartesian transformations.

  Since strong natural transformations from~$\Sem{w_1}$ to~$\Sem{w_2}$
  correspond exactly to linear simulations~$(\equiv,\rho) : w_2 \linear
  w_1$~\cite{polyMonads,PolyFunctors}, it is enough to define the monad
  operations as simulations:
  \begin{itemize}
    \item
      $(\equiv,\varepsilon_w) : w^\ast \linear w$
    \item
      and~$(\equiv,\delta_w) : w^\ast \linear w^{\ast\ast}$
  \end{itemize}
  Those constructions are relatively straightforward in Agda: the first operation corresponds to
  embedding a single action~$a:A(i)$ into~$A^\ast(i)$ as~$\node(a,\LAM{d:D(i,a)}
  \leaf)$.
  A direct definition of~$\delta_w$ is done by induction and can be found in
  the Agda code. Semantically speaking, it can be derived from
  Lemma~\ref{lem:RTC_mu}:
  \begin{myequationqed}
    &&
    \Sem{w}\Sem{w^\ast}(X) \sub \Sem{w^\ast}(X)
    &\text{\footnotesize(first point of Lemma~\ref{lem:RTC_mu})}
    \\
    &\to&
    \Sem{w^\ast}(X) \cup \Sem{w}\Sem{w^\ast}(X) \sub \Sem{w^\ast}(X)
    &\text{\footnotesize(because $\Sem{w^\ast}(X)\sub\Sem{w^\ast}(X)$)}
    \\
    &\to&
    \Sem{w^\ast}\Sem{w^\ast}(X) \sub \Sem{w^\ast}(X)
    &\text{\footnotesize(second point of Lemma~\ref{lem:RTC_mu})}
    \\
    &\to&
    X \cup \Sem{w^\ast}\Sem{w^\ast}(X) \sub \Sem{w^\ast}(X)
    &\text{\footnotesize(because $X \sub \Sem{w^\ast}(X)$)}
    \\
    &\to&
    \Sem{w^{\ast\ast}}(X) \sub \Sem{w^\ast}(X)
    &\text{\footnotesize(second point of Lemma~\ref{lem:RTC_mu})}
  \end{myequationqed}
\end{proof}
What makes this lemma interesting is that the constructions themselves are
easy to define in Agda without identity types. A purely type theoretic proof
that the constructions satisfies the monad laws could not be completed in Agda,
because the overhead of reasoning with equality on dependent structures is
\emph{very} tedious. The categorical proof guarantees that it holds in all
models for extensional type theory which is good enough for us.

Because of the reversal (strong natural transformation from~$w_1$ to~$w_2$
are equivalent to identity linear simulations from~$w_2$ to~$w_1$), this
translates to ``$w\mapsto w^\ast$ lifts to a comonad in the category of
indexed containers with linear simulations''.
\begin{cor}\label{cor:comonad}
  The operation~$w \mapsto w^*$ lifts to a comonad in the category of
  indexed containers over~$I$ and linear simulations.
\end{cor}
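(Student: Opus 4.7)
The plan is to obtain the comonad structure on $w \mapsto w^\ast$ by dualizing the monad structure established in the preceding proposition. The key ingredient, already invoked in the proof of that proposition, is the correspondence between strong cartesian natural transformations $\Sem{w_1} \to \Sem{w_2}$ and identity-indexed linear simulations $w_2 \linear w_1$. This correspondence reverses the direction of morphisms, so a monad in the category of polynomial functors with cartesian natural transformations translates directly into a comonad in the category of interaction systems over $I$ with linear simulations.

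Concretely, the unit $\Sem{w} \to \Sem{w^\ast}$ of the monad becomes the counit $\varepsilon_w : w^\ast \linear w$, which extracts a single action $a : A(i)$ from its embedding $\node(a, \LAM{d} \leaf) : A^\ast(i)$; and the multiplication $\Sem{w^{\ast\ast}} \to \Sem{w^\ast}$ becomes the comultiplication $\delta_w : w^\ast \linear w^{\ast\ast}$, which flattens a well-founded tree of actions into a nested tree of trees. Both simulations have already been exhibited explicitly in the proof of the proposition, so only the laws remain.

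The three comonad equations (left and right counitality, and coassociativity) are obtained by translating the corresponding monad equations under the reversal, holding (as in the proposition) up to the quotient by associativity of relational composition that turns indexed containers with linear simulations into a genuine category. The main obstacle, as noted just before the corollary, is that a direct type-theoretic verification of these equations on the explicit simulations would require transporting elements along isomorphisms between iterated dependent sums and products, which is prohibitively tedious in Agda; the categorical argument bypasses this and guarantees the laws in every extensional model, which suffices for our purposes.
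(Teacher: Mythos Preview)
Your proposal is correct and follows exactly the paper's approach: the corollary is obtained directly from the preceding proposition by the contravariant correspondence between cartesian strong natural transformations $\Sem{w_1}\to\Sem{w_2}$ and identity linear simulations $w_2\linear w_1$, which turns the monad into a comonad. One small slip in your informal descriptions: as a \emph{simulation}, $\varepsilon_w:w^\ast\linear w$ sends an action $a:A(i)$ to its embedding $\node(a,\LAM{d}\leaf):A^\ast(i)$ (not the other way round), and likewise $\delta_w$ \emph{flattens} an $A^{\ast\ast}$-action into an $A^\ast$-action; but this does not affect the argument.
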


\subsection{Composition of General Simulations}  

Recall that a comonad may be given in triple form with the following data:
\begin{itemize}
  \item
    a natural transformation~$\varepsilon_w : w^\ast \linear w$,

  \item
    a ``cobinding'' operation taking~$R : w_1^\ast \linear w_2$ to~$R^\sharp :
    w_1^\ast \linear w_2^\ast$ (defined as~$R^\ast \circ \delta_{w_1}$),

\end{itemize}
satisfying the following laws:
\begin{enumerate}
  \item $\varepsilon_w^\sharp = \id_{w^\ast}$,
  \item $\varepsilon_{w_2} \circ R^\sharp = R$ for~$R:w_1^\ast \linear w_2$,
  \item $(S \circ R^\sharp)^\sharp = S^\sharp \circ R^\sharp$ for~$R:w_1
    \linear w_2^\ast$ and~$S:w_2^\ast \linear w_3^\ast$.
\end{enumerate}
We can now define
\begin{defi}\label{def:general_comp}
    If~$R:w_1^\ast \linear w_2$ and~$S:w_2^\ast \linear w_3$, define~$S\bullet
    R$ with~$S\circ R^\sharp$.
\end{defi}
The comonad laws are then enough to prove that composition of general
simulations corresponds to composition of their evaluations.
\begin{prop}\label{prop:comp_general}
  Let~$w_1,w_2,w_3$ be containers indexed on~$I_1$, $I_2$ and~$I_3$,
  with~$R:w_1^\ast \linear w_2$ and~$S: w_2^\ast \linear w_3$,
  then we have
  \[
    (\eval_S^\ast\,i_2\,i_3\,s) \circ (\eval_R^\ast\,i_1\,i_2\,t)
    \quad\approx\quad
    \eval_{S\bullet R}^\ast\,i_1\,i_3\,⟨i_2,⟨s,t⟩⟩
  \]
  where
  \begin{itemize}
    \item
      $i_1:I_1$, $i_2:I_2$, $i_3:I_3$,

    \item
      $s:S(i_2,i_3)$ and~$t:T(i_2,i_3)$,

    \item
      and thus, $⟨i_2,⟨s,t⟩⟩: (T \bullet S)(i_1,i_3)$.
  \end{itemize}
\end{prop}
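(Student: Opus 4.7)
The overall strategy is to reduce to the composition lemma for linear simulations (proved earlier in the same subsection) by unfolding the definition $S \bullet R = S \circ R^\sharp$, and then handle the remaining coherence via weak terminality (Lemma \ref{lem:WTC}) together with the comonad laws listed above Definition \ref{def:general_comp}.

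By Lemma \ref{lem:bisim_id} I may fix an input $T : i_1 \IN \NU{w_1^\bot}$ and prove a single bisimilarity. Write $\phi_w : \NU{w^\bot} \sub \NU{w^{\ast\bot}}$ for the mediating map supplied by Corollary \ref{cor:RTCbot_RTC_WTC}, so that by the very definition of $\eval^\ast$ from Corollary \ref{cor:eval*} one has $\eval^\ast_R = \eval_R \circ \phi_{w_1}$ (and similarly for $\eval^\ast_S$ and $\eval^\ast_{S\bullet R}$). The goal reduces to
\[
  \eval_S\bigl(\phi_{w_2}(\eval_R(\phi_{w_1}(T)))\bigr)
  \;\approx\;
  \eval_{S\circ R^\sharp}\bigl(\phi_{w_1}(T)\bigr).
\]
The composition lemma for linear simulations rewrites the right-hand side (up to bisimilarity) as $\eval_S(\eval_{R^\sharp}(\phi_{w_1}(T)))$, and since $\eval_S$ respects bisimilarity by Assumption \ref{asm:bisim}, it suffices to show
\[
  \phi_{w_2} \circ \eval_R \;\approx\; \eval_{R^\sharp}
  \quad\text{as maps } \NU{w_1^{\ast\bot}} \to \NU{w_2^{\ast\bot}}.
\]

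For this last step I would use the universal property. Both sides are definable functions into $\NU{w_2^{\ast\bot}}$, which is a weakly terminal coalgebra for $\Sem{w_2^{\ast\bot}}$. The plan is to check that each side is a coalgebra morphism from a single coalgebra structure on $\NU{w_1^{\ast\bot}}$ for $\Sem{w_2^{\ast\bot}}$ -- the one built from $R^\sharp$ together with the coalgebra structure on $\NU{w_1^{\ast\bot}}$ supplied by Lemma \ref{lem:*WTC}. For $\eval_{R^\sharp}$ this is by construction. For $\phi_{w_2}\circ\eval_R$ it amounts to observing that $\eval_R$ is the mediating morphism determined by $R$, and then using comonad law 2 ($\varepsilon_{w_2}\circ R^\sharp = R$) to rewrite the resulting coalgebra structure in the required form. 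Once both sides are coalgebra morphisms into the same weakly terminal coalgebra, Lemma \ref{lem:WTC} yields the required bisimilarity.

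The main obstacle I expect is making the last step ``both sides are coalgebra morphisms'' fully precise: unlike the linear composition lemma, here the universal property is used on $\Sem{w_2^{\ast\bot}}$ rather than on $\Sem{w_2^\bot}$, and the $\phi$s and $\varepsilon$s sit between two different descriptions of the same final coalgebra. As noted in the paragraphs discussing the comonad structure, a direct Agda formalisation of these coherence diagrams would be prohibitively verbose, so the argument is best carried out on paper, invoking Lemma \ref{lem:WTC} and the comonad laws as black boxes rather than grinding through the dependent-type transports.
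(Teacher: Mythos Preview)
Your reduction to the claim $\phi_{w_2}\circ\eval_R \approx \eval_{R^\sharp}$, via unfolding $\eval^\ast$ and the linear composition lemma, matches the paper exactly; the divergence is only in how this residual bisimilarity is established.

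The paper does not invoke a universal property of $\NU{w_2^{\ast\bot}}$ as a $\Sem{w_2^{\ast\bot}}$-coalgebra. Instead it inserts $f_2 g_2 \approx \id$ (with $f_2=\phi_{w_2}$ and $g_2$ its inverse from Corollary~\ref{cor:RTCbot_RTC_WTC}) and isolates one concrete fact: $g_2 \approx \eval_{\varepsilon_{w_2}}$, i.e.\ the mediating map $\NU{w_2^{\ast\bot}}\to\NU{w_2^\bot}$ agrees up to bisimilarity with evaluation of the counit simulation. This is the single piece actually checked in Agda. Once that is in hand, the linear composition lemma gives $\eval_{\varepsilon_{w_2}}\circ\eval_{R^\sharp}\approx\eval_{\varepsilon_{w_2}\circ R^\sharp}=\eval_R$ by comonad law~(2), and post-composing with $f_2$ recovers precisely your target.

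Your route via weak terminality is morally sound but has two loose ends. First, Lemma~\ref{lem:WTC} is the wrong reference: it concerns mediating morphisms \emph{between} weakly terminal coalgebras, whereas your source $\NU{w_1^{\ast\bot}}$ with the $R^\sharp$-induced structure is not weakly terminal for $\Sem{w_2^{\ast\bot}}$. The principle you actually want---two coalgebra morphisms out of a common coalgebra span a bisimulation---follows directly from Definition~\ref{def:categorical_bisimulation}, not from that lemma. Second, and more substantive, verifying that $\phi_{w_2}\circ\eval_R$ is a coalgebra morphism for the $R^\sharp$-structure is not automatic: $\phi_{w_2}$ and $\eval_R$ are each morphisms for $\Sem{w_2^\bot}$, not for $\Sem{w_2^{\ast\bot}}$, and promoting the commuting square across the comonad law essentially requires the compatibility $g_2\approx\eval_{\varepsilon_{w_2}}$ that the paper singles out. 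So your plan, once made precise, collapses into the paper's argument; the benefit of the paper's presentation is that it names the one non-obvious ingredient and formalises exactly that.
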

\begin{proof}\partlychecked
  Because~$\NU{w_2^{\ast\bot}}$ is a weakly terminal algebra
  for~$\Sem{w_2^\bot}$, by Lemma~\ref{lem:WTC}, we have a pair of morphisms
  $f_2 : i_2 \IN \NU{w_2^\bot} \to i_2 \IN \NU{w_2^{\ast\bot}}$ and $g_2 :
  i_2 \IN \NU{w_2^{\ast\bot}} \to i_2 \IN \NU{w_2^\bot}$ such that~$f_2g_2
  \approx \id$ and~$g_2f_2 \approx \id$.

  Expanding the definitions (and omitting all non-essential parameters, except
  for the first and last lines), we have:

  \begin{myequation}
    &&
    i_1 \IN \NU{w_1^\bot}
    \xrightarrow{ \eval^\ast_{S\bullet R}\,i_1\,i_3\,\langle i_2 \langle r,s\rangle\rangle}
    i_3 \IN \NU{w_3^\bot}
    \\
    &=&
    \NU{w_1^\bot}
    \xrightarrow{ \eval_{S\bullet R}^\ast}
    \NU{w_3^\bot}
    \\
    \text{\tiny(def of $\eval^\ast$)}
    &=&
    \NU{w_1^\bot}
    \xrightarrow{f_1}
    \NU{w_1^{\ast\bot}}
    \xrightarrow{ \eval_{S\bullet R}}
    \NU{w_3^\bot}
    \\
    \text{\tiny(def of~$\bullet$)}
    &=&
    \NU{w_1^\bot}
    \xrightarrow{f_1}
    \NU{w_1^{\ast\bot}}
    \xrightarrow{ \eval_{S\circ R^\sharp}}
    \NU{w_3^\bot}
    \\
    \text{\tiny(Lemma~\ref{lem:composition})}
    &=&
    \NU{w_1^\bot}
    \xrightarrow{f_1}
    \NU{w_1^{\ast\bot}}
    \xrightarrow{ \eval_{R^\sharp}}
    \NU{w_2^{\ast\bot}}
    \xrightarrow{ \eval_{S}}
    \NU{w_3^\bot}
    \\

    &=&
    \NU{w_1^\bot}
    \xrightarrow{f_1}
    \NU{w_1^{\ast\bot}}
    \xrightarrow{ \eval_{R^\sharp}}
    \NU{w_2^{\ast\bot}}
    \xrightarrow{ \id }
    \NU{w_2^{\ast\bot}}
    \xrightarrow{ \eval_S}
    \NU{w_3^\bot}
    \\

    \text{\tiny(remark above)}
    &\approx&
    \NU{w_1^\bot}
    \xrightarrow{f_1}
    \NU{w_1^{\ast\bot}}
    \xrightarrow{ \eval_{R^\sharp}}
    \NU{w_2^{\ast\bot}}
    \xrightarrow{g_2}
    \NU{w_2^{\bot}}
    \xrightarrow{f_2}
    \NU{w_2^{\ast\bot}}
    \xrightarrow{ \eval_S}
    \NU{w_3^\bot}
    \\

    \text{\tiny(remark below)}
    &=&
    \NU{w_1^\bot}
    \xrightarrow{f_1}
    \NU{w_1^{\ast\bot}}
    \xrightarrow{ \eval_{R^\sharp}}
    \NU{w_2^{\ast\bot}}
    \xrightarrow{ \eval_{\varepsilon_{w_2}}}
    \NU{w_2^{\bot}}
    \xrightarrow{f_2}
    \NU{w_2^{\ast\bot}}
    \xrightarrow{ \eval_S}
    \NU{w_3^\bot}
    &
    (*)
    \\

    \text{\tiny(comonad law)}
    &=&
    \NU{w_1^\bot}
    \xrightarrow{f_1}
    \NU{w_1^{\ast\bot}}
    \xrightarrow{ \eval_R}
    \NU{w_2^\bot}
    \xrightarrow{f_2}
    \NU{w_2^{\ast\bot}}
    \xrightarrow{ \eval_S}
    \NU{w_3^\bot}
    \\

    \text{\tiny(def of $\eval^\ast$)}
    &=&
    \NU{w_1^\bot}
    \xrightarrow{ \eval_R^\ast}
    \NU{w_2^\bot}
    \xrightarrow{ \eval_S^\ast}
    \NU{w_3^\bot}
    \\

    &=&
    i_1 \IN \NU{w_1^\bot}
    \xrightarrow{ \eval^\ast_{R}\,i_1\,i_2\,r}
    i_2 \IN \NU{w_2^\bot}
    \xrightarrow{ \eval^\ast_{S}\,i_2\,i_3\,s}
    _3 \IN \NU{w_3^\bot}
    \\
  \end{myequation}
  The only missing part~$(*)$ is showing that~$\eval_{\varepsilon_{w_2}} \approx g_2$
  where~$g_2 : i_2 \IN \NU {w_2^{\ast\bot}} \to i_2 \IN \NU{w_2^\bot}$ is
  the mediating morphism coming from the~$\Sem{w_2^\bot}$ coalgebra structure
  of~$i_2 \IN \NU{w_2^{\ast\bot}}$. This was formally proved in Agda. (The
  reason is essentially that both morphisms are, up-to bisimilarity, the
  identity.)
\end{proof}
Note that this proof is hybrid with a formal part proved in Agda, and a pen
and paper proof.



\section{Layering and Infinite Trees}

General simulations allow to represent all computable (continuous) functions
on streams: for a given piece of the output, we only need to look at
finitely many elements of the stream. A general simulation does that by
asking as many elements as it needs. However, for branching structures,
general simulations are not enough: general simulations explore their
arguments along a single branch $a_1/d_1, a_2/d_2, \dots$. For example, the
function summing each layer of a binary tree to form a stream is not
representable by a general simulation:
\[\begin{tikzpicture}[  
      level distance=1cm,
      level 1/.style={sibling distance=3cm},
      level 2/.style={sibling distance=1.5cm},
      level 3/.style={sibling distance=.9cm},
      tree node/.style={draw=none},
      every child node/.style={tree node},
      baseline={([yshift=-1ex] current bounding box.center)},
      ]
  \node[tree node] (Root) {.}
  child {
    node {.}
      child {node {.}
        child {node {\tiny\dots} edge from parent node[over] {\tiny 7}}
        child {node {\tiny\dots} edge from parent node[over] {\tiny 8}}
      edge from parent node[over] {\tiny 3}}
      child {node {.}
        child {node {\tiny\dots} edge from parent node[over] {\tiny 9}}
        child {node {\tiny\dots} edge from parent node[over] {\tiny 10}}
      edge from parent node[over] {\tiny 4}}
    edge from parent node[over] {\tiny1}}
  child {
    node {.}
      child {node {.}
        child {node {\tiny\dots} edge from parent node[over] {\tiny 11}}
        child {node {\tiny\dots} edge from parent node[over] {\tiny 12}}
      edge from parent node[over] {\tiny 5}}
      child {node {.}
        child {node {\tiny\dots} edge from parent node[over] {\tiny 13}}
        child {node {\tiny\dots} edge from parent node[over] {\tiny 14}}
      edge from parent node[over] {\tiny 6}}
    edge from parent node[over] {\tiny2}};
\end{tikzpicture}
\quad\mapsto\quad
\begin{tikzpicture}[
      level distance=1cm,
      level 1/.style={sibling distance=3cm},
      level 2/.style={sibling distance=1.5cm},
      level 3/.style={sibling distance=.9cm},
      tree node/.style={draw=none},
      every child node/.style={tree node},
      baseline={([yshift=-1ex] current bounding box.center)},
      ]
  \node[tree node] (Root) {.}
  child {
    node {.}
      child {node {.}
        child {node {\tiny\dots} edge from parent node[over] {\tiny 84}}
      edge from parent node[over] {\tiny 18}}
    edge from parent node[over] {\tiny 3}}
    ;
\end{tikzpicture}
\]
We want a notion of ``backtracking'' transducer that can explore
several branches. Describing such a transducer is difficult in type theory if
we try to refrain from using equality.


\subsection{Layering}  
\label{sub:layering}

To give a simulation access to several branches, we are going to replace a
branching structure by the stream of its ``layers''. Then, a simulation will
be able to access as many layers as it needs to get information about as many
branches as it needs.

\begin{defi}
  Given an indexed container~$w = ⟨ A,D,n⟩ $ over~$I:\Set$ we define an
  indexed container~$w^\sharp = \langle A^\sharp,D^\sharp,n^\sharp \rangle$ on~$I$
  by induction-recursion on~$\Fam(I)$.
  \begin{itemize}
    \item
      given~$i:I$, the index set~$A^\sharp(i):\Set$ is defined with
      \[
        \Rule{}{\leaf : A^\sharp(i)}
        \qquad
        \Rule{ \alpha : A^\sharp(i)
               \qquad
               l : \PI{\beta : D^\sharp(i,\alpha) } A(n^\sharp\,i\,\alpha \, \beta) }
             { (\alpha \triangleleft l) : A^\sharp(i) }
             \,;
      \]

    \item
      for~$\alpha : A^\sharp(i)$, the family $\big\langle
      D^\sharp(i,\alpha):\Set, n^\sharp\,i\,\alpha : D^\sharp(i,\alpha) \to
      I\big\rangle$ is defined with

      \begin{itemize}
        \item
          $D^\sharp(i,\leaf) = \One$ and
          $D^\sharp(i,\alpha \triangleleft l) = \SI{\beta :
          D^\sharp(i,\alpha)}
          D\big(n^\sharp\,i\,\alpha\,\beta,l\,\beta\big)$,

        \item
          $n^\sharp\,i\,\leaf\,\star = i$
          and
          $n^\sharp\,i\,(\alpha \triangleleft l)\,⟨ \beta,d⟩
          =
          n\ (n^\sharp\,i\,\alpha\,\beta)\ (l\,\beta)\ d$.
      \end{itemize}
  \end{itemize}
  The Agda definition looks like
  \begin{allttt}
    mutual
      data A# : I → Set where
        Leaf : \{ i : I \} → A# i
        \_◂\_ : \{i : I\} → (t : A# i) → ((b : D# i t) → A (n# t b)) → A# i

      D# : (i : I) → A# i → Set
      D# i Leaf = One
      D# i (t ◂ l) = Σ (D# i t) (λ ds → D (n# t ds) (l ds))

      n# : \{i : I\} → (t : A# i) → D# i t → I
      n# \{i\} Leaf * = i
      n# \{i\} (t ◂ l) ( ds , d ) = n \{n# \{i\} t ds\} (l ds) d
  \end{allttt}
\end{defi}
This definition is a direct extension of the operation appearing in previous
work about continuous functions and (unindexed)
containers~\cite{hancock09:_contin_funct_final_coalg}. It generalizes the
example of complete binary trees as an inductive recursive definition from
page~\pageref{ex:TreeIR} to arbitrary dependent~$A$-labelled and~$D$-branching
trees. Given such a tree~$\alpha : A^\sharp(i)$,~$D^\sharp(i,\alpha)$ indexes
the set of its terminal leaves. A new layer assigns a new element~$a:A(\dots)$
at each leaf, and~$\alpha \triangleleft l$ is the new, increased tree.

Of particular interest is the indexed container~$w^{\bot\sharp}$. An element
of~$A^{\bot\sharp}(i)$ is a complete tree of finite depth where branching occurs
at~$A(\BLANK)$ and labels are chosen in~$D(\BLANK,\BLANK)$. In particular, an
element of~$D^{\bot\sharp}(i,\alpha)$ is a finite sequence of actions.

\medbreak
We can now construct a lopsided (angelic) indexed container from an arbitrary
indexed container:
\begin{defi}
  Given~$w$ a container indexed on~$I$ and a fixed~$i:I$, we define a new
  container~$\ALayered{w,i}$:
  \begin{itemize}
    \item
      $\ALayered{w,i}$ is indexed on $A^\sharp(i)$,

    \item
      actions in state~$\alpha:A^\sharp(i)$ are given by~$\PI{\beta :
      D^\sharp(i,\alpha)}A(n^\sharp \, i \, \alpha \, \beta)$, i.e., ``layers'' on top
      of~$\alpha$,

    \item
      responses are trivial: there is only ever one possible
      response:~$\star$,

    \item
      the next state after action~$l$ in state~$\alpha$ (and response~$\star$)
      is~$\alpha\triangleleft l$.
  \end{itemize}
\end{defi}
States of this new container record a complete tree of finite depth.
Moreover, since it has trivial responses, an element of~$\leaf \IN
\NU{\ALayered{w,i}}$ is not very different from a (dependent) stream of
layers, and so, from an infinite tree in~$i \IN \NU{w}$. This is generalized
and formalized in the next lemmas.
\begin{lem}
  The predicate $\PI{\beta :D^\sharp(i,\alpha)} (n^\sharp\,i \, \alpha \,
  \beta) \IN \NU{w}$,\footnote{An element of~$\PI{\beta :D^\sharp(i,\alpha)}
  (n^\sharp\,i \, \alpha \, \beta) \IN \NU{w}$ is a way to extend the complete
  tree~$\alpha$ of finite depth to a full infinite complete tree by appending
  an infinite tree at each leaf.} depending on $\alpha : A^\sharp(i)$,  is a
  weakly terminal coalgebra for~$\Sem{\ALayered{w,i}}$.
\end{lem}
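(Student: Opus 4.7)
The plan is to exhibit the three pieces of data that make $P(\alpha) := \PI{\beta:D^\sharp(i,\alpha)}(n^\sharp\,i\,\alpha\,\beta) \IN \NU{w}$ a weakly terminal coalgebra for $\Sem{\ALayered{w,i}}$, following the template spelled out before corollary~\ref{cor:WTC}: a coalgebra structure $\mathtt{elim}:P \sub \Sem{\ALayered{w,i}}(P)$, a mediating operation $\mathtt{intro}$, and a compatibility equation. Unfolding the definition of $\ALayered{w,i}$ (whose responses are trivial), $\Sem{\ALayered{w,i}}(X)(\alpha)$ is simply $\SI{l:\PI{\beta}A(n^\sharp\,i\,\alpha\,\beta)} X(\alpha \triangleleft l)$, so $\mathtt{elim}$ must read off a new layer on top of $\alpha$ together with the residues sitting above the extended tree $\alpha\triangleleft l$.

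For $\mathtt{elim}$, given $f:P(\alpha)$ I would apply $\nuElim$ pointwise: each $f\,\beta$ decomposes into $\langle a_\beta, g_\beta\rangle \in \Sem{w}(\NU{w})(n^\sharp\,i\,\alpha\,\beta)$. Setting $l := \LAM{\beta} a_\beta$ gives the required layer, and since by definition $n^\sharp\,i\,(\alpha\triangleleft l)\,\langle\beta,d\rangle = n(n^\sharp\,i\,\alpha\,\beta,\, l\,\beta,\, d)$, the residue at leaf $\langle \beta,d\rangle$ of $\alpha\triangleleft l$ is simply $g_\beta\,d$, with no transport needed.

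For $\mathtt{intro}$, given $c:X\sub\Sem{\ALayered{w,i}}(X)$ I need to produce $X(\alpha)\to P(\alpha)$, i.e.\ a coalgebraic unfolding into $\NU{w}$. The idea is to package an element of $X$ together with a chosen leaf into a $\Sem{w}$-coalgebra and invoke $\nuIntro$. Concretely, define
\[
    Y(j) \;=\; \SI{\alpha':A^\sharp(i)}\SI{\beta': D^\sharp(i,\alpha')} X(\alpha') \times \big(j \equiv n^\sharp\,i\,\alpha'\,\beta'\big)
\]
and equip $Y$ with a $\Sem{w}$-coalgebra as follows: given $\langle \alpha',\beta',x,p\rangle \in Y(j)$, split $c\,\alpha'\,x$ as $\langle l,x'\rangle$; transporting along $p$, the action is $l\,\beta':A(j)$, and to each response $d:D(j,l\,\beta')$ we attach the $Y$-element $\langle \alpha'\triangleleft l,\, \langle \beta',d\rangle,\, x',\, \refl\rangle$ (here we use that $n^\sharp\,i\,(\alpha'\triangleleft l)\,\langle \beta',d\rangle$ reduces to $n(j,l\,\beta',d)$). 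Applying $\nuIntro$ yields $Y \sub \NU{w}$, and instantiating at $\langle \alpha,\beta,x,\refl\rangle : Y(n^\sharp\,i\,\alpha\,\beta)$ defines $\mathtt{intro}\,c\,\alpha\,x\,\beta$.

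For the computation rule $\mathtt{elim}(\mathtt{intro}\,c\,\alpha\,x) = \Sem{\ALayered{w,i}}_{\mathtt{intro}\,c}(c\,\alpha\,x)$, both the layer component and the residue component reduce, after applying the $\nuIntro/\nuElim$ computation rule pointwise in $\beta$, to the required equalities. I expect the main obstacle to be purely bookkeeping: because $P$ is a $\Pi$-type and the coalgebra on $Y$ is defined up to the reindexing $n^\sharp$, making the computation rule check on the nose (rather than up to bisimilarity) requires function extensionality and some transport manipulation — precisely the sort of equality overhead that the paper's Agda development has to absorb.
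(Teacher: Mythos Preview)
Your proposal is correct and fleshes out exactly the construction the paper only gestures at (``an infinite tree is equivalently given by the infinite stream of its layers'') and delegates to the Agda formalisation. The decomposition into a pointwise $\nuElim$ for $\mathtt{elim}$, a reindexed $\Sem{w}$-coalgebra $Y$ for $\mathtt{intro}$, and the computation check is the natural route and matches what the formal development must do; your only overcaution is the appeal to function extensionality, since with the $\eta$-rules Agda provides for $\Pi$ and $\One$ the layer component $\LAM{\beta}\pi_1(\nuElim(f\,\beta))$ reduces definitionally to $l$, so the computation rule holds on the nose.
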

\begin{proof}\checked
  The proof corresponds to the above remark that an infinite tree is
  equivalently given by the infinite stream of its layers. It has been
  formalized in Agda.
\end{proof}
\begin{cor}\label{cor:layeringlemma}
  Given~$\alpha:A^\sharp(i)$, there are functions
  \[
    \PI{\beta :D^\sharp(i,\alpha)}
      (n^\sharp\,i \, \alpha \, \beta) \IN \NU{w}
    \quad\stackrel\approx{\longleftrightarrow}\quad
    \alpha \IN \NU{\ALayered{w,i}}
  \]
  that are, up to bisimulation, inverse to each other.
  In particular, there are functions
  \[
    f :
    i\IN\NU{w}
    \to
    \leaf \IN \NU{\ALayered{w,i}}
    \quad \text{and} \quad
    g :
    \leaf \IN \NU{\ALayered{w,i}}
    \to
    i\IN\NU{w}
  \]
  such that~$fg \approx \id$ and~$gf\approx\id$.
\end{cor}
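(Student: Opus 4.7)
The plan is to derive this directly from the preceding lemma via Corollary~\ref{cor:WTC}. The preceding lemma states that the predicate $P_\alpha = \PI{\beta :D^\sharp(i,\alpha)} (n^\sharp\,i \, \alpha \, \beta) \IN \NU{w}$ (viewed as a predicate on $A^\sharp(i)$, i.e., indexed by the state $\alpha$) is a weakly terminal coalgebra for $\Sem{\ALayered{w,i}}$. Since $\NU{\ALayered{w,i}}$ is also (by definition) a weakly terminal coalgebra for the same functor, Corollary~\ref{cor:WTC} gives us functions in both directions that compose to something bisimilar to the identity on each side.

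More concretely, I first invoke Corollary~\ref{cor:WTC} with $C = P_\alpha$ to obtain
\[
  f : \NU{\ALayered{w,i}} \sub P_\alpha
  \qquad \text{and} \qquad
  g : P_\alpha \sub \NU{\ALayered{w,i}}
\]
with $f\,\alpha\,(g\,\alpha\,T) \approx T$ for every $T : \alpha \IN \NU{\ALayered{w,i}}$. Applying the same corollary with the two weakly terminal coalgebras swapped (or, equivalently, by the second bullet of Lemma~\ref{lem:WTC}) yields the bisimilarity $g\,\alpha\,(f\,\alpha\,h) \approx h$ for every $h : \alpha \IN P_\alpha$. This establishes the first, general part of the statement.

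For the ``in particular'' part, I specialise to $\alpha = \leaf$. By the definition of $w^\sharp$ we have $D^\sharp(i,\leaf) = \One$ and $n^\sharp\,i\,\leaf\,\star = i$, so the dependent product $\PI{\beta : D^\sharp(i,\leaf)} (n^\sharp\,i\,\leaf\,\beta) \IN \NU{w}$ is definitionally (or at least by a trivial $\lambda$-$\star$ isomorphism) the same as $i \IN \NU{w}$. Transporting $f$ and $g$ along this correspondence produces the desired pair of functions $f : i\IN\NU{w} \to \leaf \IN \NU{\ALayered{w,i}}$ and $g : \leaf \IN \NU{\ALayered{w,i}} \to i\IN\NU{w}$ satisfying $fg \approx \id$ and $gf \approx \id$.

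The only delicate step is the identification $\PI{\beta:\One} X(\beta) \cong X(\star)$ in the $\leaf$ case: if the type theory treats this as a propositional rather than definitional isomorphism, then $f$ and $g$ must be pre-/post-composed with the evaluation-at-$\star$ and $\lambda$-abstraction maps, and the bisimilarity statements are preserved by this composition (since $\lambda\BLANK.\eval$ is up-to-bisimilarity the identity, and by Assumption~\ref{asm:bisim} all definable maps are congruences for $\approx$). Everything else is just an instance of Lemma~\ref{lem:WTC}, which has already been formalised, so no further work is needed.
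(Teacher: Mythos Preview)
Your proposal is correct and follows essentially the same approach as the paper: the paper's proof is the single line ``This is a direct consequence of lemma~\ref{lem:WTC}'', and you have simply spelled out that consequence in detail (via Corollary~\ref{cor:WTC}, which is itself the Agda-checked instance of Lemma~\ref{lem:WTC}). Your extra care about the $\PI{\beta:\One}X(\beta)\cong X(\star)$ identification in the $\leaf$ case is a reasonable remark but not something the paper dwells on.
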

\begin{proof}
  This is a direct consequence of Lemma~\ref{lem:WTC}.
\end{proof}

As mentioned on page~\pageref{rk:lopsided}, lopsided containers are special
because duality is essentially involutive, in the sense that~$w$
and~$w^{\bot\bot}$ are isomorphic:
\begin{itemize}
  \item
    there are bijections~$f_i$ between~$A(i)$ and~$A^{\bot\bot}(i)$,
  \item
    there is are bijections~$g_{i,a}$ between~$D(i,a)$ and~$D^{\bot\bot}(i,
    f_i\,a)$,
  \item
    the next state functions are compatible with them: $n(i,a,d) \equiv
    n^{\bot\bot}(i,f_i\,a,g_{i,a}\,d)$.
\end{itemize}
Rather than developing this notion, we will only state and prove the only
consequence we'll need.
\begin{lem}
  Suppose~$w$ has trivial (singleton) reactions, then $\NU{w}$ is a weakly
  terminal coalgebra for~$\Sem{w^{\bot\bot}}$.
\end{lem}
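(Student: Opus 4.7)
The plan is to exhibit a canonical bijection $\Sem{w^{\bot\bot}}(X) \cong \Sem{w}(X)$ for every predicate $X$ when $D$ is pointwise singleton, and then transport the known weakly terminal coalgebra structure of $\NU{w}$ for $\Sem{w}$ across this bijection.

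First I would unfold the double duality. Writing $\star$ for the unique inhabitant of $\One$, triviality of $D$ gives $A^\bot(i) = \PI{a:A(i)} \One$, $D^\bot(i,f) = A(i)$, and $n^\bot(i,f,a) = n(i,a,\star)$. Dualising once more yields $A^{\bot\bot}(i) = \PI{f:A^\bot(i)} A(i)$, $D^{\bot\bot}(i,g) = A^\bot(i)$, and $n^{\bot\bot}(i,g,f) = n(i,g\,f,\star)$. Consequently,
\[
  i \IN \Sem{w^{\bot\bot}}(X)
  \quad = \quad
  \SI{g:A^{\bot\bot}(i)} \PI{f:A^\bot(i)} n(i,g\,f,\star) \IN X
\]
whereas $i \IN \Sem{w}(X) = \SI{a:A(i)} n(i,a,\star) \IN X$. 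I would define the forward map $\varphi$ taking $\langle a,x\rangle$ to $\langle \LAM{f} a, \LAM{f} x\rangle$, and the backward map $\psi$ taking $\langle g,h\rangle$ to $\langle g\,\star',\, h\,\star'\rangle$, where $\star' = \LAM{a:A(i)} \star$ is the canonical inhabitant of $A^\bot(i)$. These are mutually inverse modulo the $\eta$-rules for $\One$ and for $\Pi$-types into $\One$, which hold definitionally in Agda. Moreover, $\varphi$ and $\psi$ are built without inspecting $X$, so they are automatically natural with respect to any $X \sub Y$.

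Next I would transport the coalgebra structure. Define $\mathtt{elim} : \NU{w} \sub \Sem{w^{\bot\bot}}(\NU{w})$ by post-composing $\nuElim$ with $\varphi$. Given any coalgebra $\sigma : X \sub \Sem{w^{\bot\bot}}(X)$, set $\sigma' := \psi \circ \sigma : X \sub \Sem{w}(X)$ and $\mathtt{intro}\,\sigma := \nuIntro\,\sigma' : X \sub \NU{w}$. The required computation rule
\[
  \mathtt{elim}(\mathtt{intro}\,\sigma\,i\,x) \;\equiv\; \Sem{w^{\bot\bot}}_{\mathtt{intro}\,\sigma}\,i\,(\sigma\,i\,x)
\]
unfolds, on the left, to $\varphi\bigl(\Sem{w}_{\nuIntro\,\sigma'}\,i\,(\sigma'\,i\,x)\bigr)$ by the computation rule for $\nuIntro/\nuElim$; using naturality of $\varphi$ in the argument predicate, this rewrites to $\Sem{w^{\bot\bot}}_{\nuIntro\,\sigma'}\,i\,(\varphi\,(\psi\,(\sigma\,i\,x)))$, which equals the right-hand side via the identity $\varphi \circ \psi = \id$.

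The hard part will be discharging $\varphi \circ \psi = \id$ and the naturality square \emph{definitionally}, since both reduce to $\eta$-laws for $\One$ and for functions into~$\One$. In Agda these hold because $\One$ is declared as a record type with $\eta$-expansion, so the computation rule can be closed by $\refl$. If instead one works in a type theory lacking these $\eta$-laws, the cancellation only holds up to propositional equality; by assumption~\ref{asm:bisim} this is still enough for $\NU{w}$ to behave as a weakly terminal coalgebra for $\Sem{w^{\bot\bot}}$ in every downstream application.
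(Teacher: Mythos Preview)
Your proposal is correct and matches the approach the paper sketches immediately before the lemma: exploit that for trivial reactions the containers $w$ and $w^{\bot\bot}$ are isomorphic (bijections on $A$, on $D$, compatible with $n$), hence $\Sem{w}$ and $\Sem{w^{\bot\bot}}$ are naturally isomorphic, and transport the weakly terminal coalgebra structure of $\NU{w}$ across that iso. The paper gives no further detail beyond \checked, but your explicit unfolding of $A^{\bot\bot}$, $D^{\bot\bot}$, $n^{\bot\bot}$ and your use of the $\eta$-rules for $\One$ and for $\Pi$ into $\One$ to make $A^\bot(i)$ definitionally contractible are exactly what one does in the Agda development; your remark that without these $\eta$-rules one falls back on assumption~\ref{asm:bisim} is also accurate.
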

\begin{proof}\checked
\end{proof}
\begin{cor}\label{cor:duality_update}
  There are function
  \[
    \varphi :
    \leaf \IN\NU{\ALayered w,i}
    \to
    \leaf \IN \NU{\ALayered{w,i}^{\bot\bot}}
    \quad \text{and} \quad
    \psi :
    \leaf \IN \NU{\ALayered{w,i}^{\bot\bot}}
    \to
    \leaf \IN\NU{\ALayered w,i}
  \]
  such that~$\varphi\psi \approx \id$ and~$\psi\varphi\approx\id$.
\end{cor}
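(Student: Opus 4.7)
The plan is to reduce the corollary to Lemma~\ref{lem:WTC} by exhibiting $\NU{\ALayered{w,i}}$ and $\NU{\ALayered{w,i}^{\bot\bot}}$ as two weakly terminal coalgebras for one and the same endofunctor.

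First, I observe that the interaction system $\ALayered{w,i}$ has trivial (singleton) reactions by construction: every action $l$ at a state $\alpha : A^\sharp(i)$ admits only the unique response $\star$. The lemma immediately preceding the corollary therefore applies to $\ALayered{w,i}$, and we conclude that $\NU{\ALayered{w,i}}$ carries the structure of a weakly terminal coalgebra for $\Sem{\ALayered{w,i}^{\bot\bot}}$.

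Second, $\NU{\ALayered{w,i}^{\bot\bot}}$ is, tautologically, a weakly terminal coalgebra for the same endofunctor $\Sem{\ALayered{w,i}^{\bot\bot}}$. Applying the second item of Lemma~\ref{lem:WTC} to these two weakly terminal coalgebras (taken as predicates over $A^\sharp(i)$) yields mediating families $\NU{\ALayered{w,i}} \sub \NU{\ALayered{w,i}^{\bot\bot}}$ and $\NU{\ALayered{w,i}^{\bot\bot}} \sub \NU{\ALayered{w,i}}$ whose two compositions are bisimilar to the identity at every $\alpha$. Specializing these families at the state $\alpha = \leaf$ delivers exactly the required $\varphi$ and $\psi$, together with $\varphi\psi \approx \id$ and $\psi\varphi \approx \id$.

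I do not anticipate any real obstacle: the substantive content has been pushed into the preceding lemma (triviality of reactions is what makes $w$ and $w^{\bot\bot}$ essentially interchangeable at the level of final coalgebras) and into Lemma~\ref{lem:WTC} (any two weakly terminal coalgebras for the same functor are equivalent up to bisimulation). The only book-keeping step is the specialization at $\leaf$, which is routine.
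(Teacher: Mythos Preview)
Your proposal is correct and matches the paper's intended argument: the corollary is stated immediately after the lemma that $\NU{w}$ is a weakly terminal coalgebra for $\Sem{w^{\bot\bot}}$ whenever $w$ has trivial reactions, and is meant to follow from that lemma together with Lemma~\ref{lem:WTC} (just as Corollaries~\ref{cor:RTCbot_RTC_WTC} and~\ref{cor:layeringlemma} do). Your only addition is the explicit observation that $\ALayered{w,i}$ has singleton reactions and the specialization at $\alpha=\leaf$, both of which are routine.
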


\subsection{Continuous Functions} 

We now have everything we need.
\begin{defi}
  A layered simulation from~$w_1$ to~$w_2$ at states~$i_1:I_1$ and~$i_2:I_2$
  is a simulation from~$\ALayered{w_1^\bot,i_1}^\bot$
  to~$\ALayered{w_2^\bot,i_2}^\bot$.
  A general layered simulation is a simulation
  from~$\ALayered{w_1^\bot,i_1}^{\bot\ast}$
  to~$\ALayered{w_2^\bot,i_2}^\bot$.
\end{defi}
\begin{thm}\label{lem:eval_layer}
  For every general layered simulation~$R:\ALayered{w_1^\bot,i_1}^{\bot\ast} \linear 
  \ALayered{w_2^\bot,i_2}^\bot$ there is an evaluation function
  \[
  \eval_R\,
  :
  R(\leaf, \leaf) \to
  i_1 \IN \NU{w_1^\bot} \to i_2 \IN \NU{w_2^\bot}
  \]
\end{thm}
\begin{proof}
  Given $r: R(\leaf,\leaf)$, we have
  \begin{myequationqed}
    i_1 \IN \NU{w_1^\bot}
    &\xrightarrow{\quad f_1\quad }&
    \leaf \IN \NU{\ALayered{w_1^\bot,i_1}}
    &\text{\footnotesize(Corollary~\ref{cor:layeringlemma})}
    \\
    &\xrightarrow{\ \varphi \ }&
    \leaf \IN \NU{\ALayered{w_2^\bot,i_2}^{\bot\bot}}
    &\text{\footnotesize(Corollary~\ref{cor:duality_update})}
    \\
    &\xrightarrow{\ \eval^\ast_R\,\leaf\,\leaf\,r\ }&
    \leaf \IN \NU{\ALayered{w_2^\bot,i_2}^{\bot\bot}}
    &\text{\footnotesize(Corollary~\ref{cor:eval*})}
    \\
    &\xrightarrow{\ \psi \ }&
    \leaf \IN \NU{\ALayered{w_2^\bot,i_2}}
    &\text{\footnotesize(Corollary~\ref{cor:duality_update})}
    \\
    &\xrightarrow{\quad g_2\quad }&
    i_2 \IN \NU{w_2^\bot}
    &\text{\footnotesize(Corollary~\ref{cor:layeringlemma})}
    \\
  \end{myequationqed}
\end{proof}
\begin{thm}\label{lem:eval_layer_comp}
  If composition of general layered simulations is defined as general
  composition of layered simulations, then evaluation of a composition is
  bisimilar to the composition of their evaluations.
\end{thm}
\begin{proof}
  The composition of evaluations gives
  \[
    i_1 \IN \NU{w_1^\bot}
    \xrightarrow{f_1}
    \BLANK
    \xrightarrow{\varphi_1}
    \BLANK
    \xrightarrow{\eval^\ast_R}
    \BLANK
    \xrightarrow{\psi_2}
    \BLANK
    \xrightarrow{g_2}
    \BLANK
    \xrightarrow{f_2}
    \BLANK
    \xrightarrow{\varphi_2}
    \BLANK
    \xrightarrow{\eval^\ast_S}
    \BLANK
    \xrightarrow{\psi_3}
    \BLANK
    \xrightarrow{g_3}
    i_3 \IN \NU{w_3^\bot}
    \,.
  \]
  Since~$f_2 g_2 \approx \id$ (Corollary~\ref{cor:layeringlemma})
  and~$\varphi_2\psi_2\approx\id$ (Corollary~\ref{cor:duality_update}), this whole
  composition is bisimilar to
  \[
    i_1 \IN \NU{w_1^\bot}
    \xrightarrow{f_1}
    \BLANK
    \xrightarrow{\varphi_1}
    \BLANK
    \xrightarrow{\eval^\ast_R}
    \BLANK
    \xrightarrow{\eval^\ast_S}
    \BLANK
    \xrightarrow{\psi_3}
    \BLANK
    \xrightarrow{g_3}
    i_3 \IN \NU{w_3^\bot}
  \]
  and thus, by Proposition~\ref{prop:comp_general}, to
  \[
    i_1 \IN \NU{w_1^\bot}
    \xrightarrow{f_1}
    \BLANK
    \xrightarrow{\varphi_1}
    \BLANK
    \xrightarrow{\eval^\ast_{S\bullet R}}
    \BLANK
    \xrightarrow{\psi_3}
    \BLANK
    \xrightarrow{g_3}
    i_3 \IN \NU{w_3^\bot}
    \,.
  \]
  This corresponds to evaluation of~$S \bullet
  R : \ALayered{w_1^\bot, i_1}^{\bot\ast} \linear \ALayered{w_2^\bot,
  i_2}^\bot$ as defined in Theorem~\ref{lem:eval_layer}.
\end{proof}



\section*{Concluding Remarks}

\subsection*{Internal Simulations}  
\label{sub:internal_sim}

It is possible to internalize the notion of linear simulation by defining an
indexed container~$w_1\linear w_2$ on~$I_1\times I_2$ satisfying~``$R$ is a
linear simulation from~$w_1$ to~$w_2$ iff~$R\sub \Sem{w_1\linear
w_2}(R)$''~\cite{hancock-apal06,polyDiagrams}.
\begin{defi}
  If~$w_1$ and $w_2$ are containers indexed on~$I_1$ and~$I_2$, then the
  container~$w_1 \linear w_2$, indexed on~$I_1 \times I_2$ is defined by

  \begin{itemize}
    \item
      $\Big.
      A\big(\langle i_1,i_2\rangle\big)
      =
      \SI{f:A_2(i_2) \to A_1(i_1)}\PI{a_2:A_2(i_2)} D_1\big(i_1,f(a_2)\big) \to D_2(i_2,a_2)$,

    \item
      $\Big.
      D\big(\langle i_1,i_2\rangle, \langle f, \varphi\rangle\big)
      =
      \SI{a_2:A_2(i_2)} D_1\big(i_1,{f(a_2)}\big)$,

    \item
      $\Big.
      n\,\langle i_1,i_2\rangle\,\langle f, \varphi\rangle\,\langle a_2, d_1\rangle
      =
      \big\langle i_1[f(a_2)/d_1], i_2[a_2/\varphi(a_2)(d_1)]\big\rangle$.
  \end{itemize}
\end{defi}
The resulting structure is nicer in the opposite category because
then,~$w_1\linear w_2$ generalizes duality (definition~\ref{def:duality}) in
the sense that~$w^\bot$ is the same as~$w \linear \bot$, where~$\bot$ is the
trivial container (indexed on~$I=\{\star\}$, with a single action and a
single reaction). Moreover, this definition is universal in the following
sense. Define the ``synchronous tensor'' $w_1\otimes w_2$ of containers, indexed with
the cartesian product of states with
\begin{itemize}
  \item 
    $\big(A_1\otimes A_2\big)\big(\langle i_1,i_2\rangle\big) = A_1(i_1)
    \times A_2(i_2)$,

  \item
    $\big(D_1 \otimes D_2\big)\big(\langle i_1, i_2\rangle, \langle
    a_1,a_2\rangle\big) = D_1(i_1,a_1) \times D_2(i_2,a_2)$,

  \item
    $\big(n_1 \otimes n_2\big)\ \langle i_1, i_2\rangle\ \langle
    a_1,a_2\rangle\ \langle d_1,d_2\rangle = \langle i_1[a_1/d_1],
    i_2[a_2/d_2]\rangle$.

\end{itemize}
One can show~\cite{polyDiagrams,PolyFunctors} that~$\linear$ is right-adjoint
to~$\otimes$. Linear simulations thus give rise to a symmetric monoidal
closed category.

That a simulation~$R:w_1 \linear w_2$  is nothing more than a coalgebra
for~$\Sem{w_1 \linear w_2}$ means that, up to bisimilarity, they can be seen
as elements of~$\NU{w_1\linear w_2}$. However, even if~$w_1$ and~$w_2$ are
finitary,~$w_1^\ast$ or~$\ALayered{w_1^\bot}^\ast$ are not and we cannot
iterate this construction to represent higher order continuous functions in
this way.

\subsection*{Thoughts about Completeness}   
\label{sub:completeness}

Stating and proving formally completeness of this representation is yet to be
done. Even though all functions definable\footnote{provided they pass the
termination criterion} in Agda are continuous (because they are computable),
we need to define continuity as an Agda predicate. Looking at the simplest
case of functions from streams to~\Two, we would need to transform some~$f :
\Stream{X} \to \Two$ into a well-founded tree with branching given by~$X$ and
leaves in~$\Two$. In order to do that, we need to know when a finite prefix of
a stream gives enough information to decide what it maps to, that is, to know
the ``modulus of continuity'' of the function, which is not definable in MLTT.
To avoid a contradiction (the modulus is trivially computable from the tree
representation), each continuous function needs to come with its modulus
function.

The definition of ``modulus of continuity'' for elements of
some~$\NU{w^\bot}(i)$ is not easy, and the simplest is probably to use the
exact same ideas that were developed: see a (finitary branching) coinductive
tree as the stream of its layer.

That's a preliminary step to be able to \emph{state} completeness of the
representation in Agda. This path does look neither very enlightening nor very
interesting.

\medbreak
On a meta-theoretical level, completeness of the representation is much
simpler:
\begin{enumerate}
  \item
    Semantically, when interpreting all the constructions in the category of
    sets and functions, every function that is continuous for the ``wild''
    topology from Section~\ref{sub:wild} is representable as a general layered
    simulation.\footnote{Let's stress the point again: proving such theorems
    in Agda is not possible, as there can be non-computable continuous
    functions.} This would be an analogous to
    Theorem~\ref{thm:stream_transducer} and the proof would go as follows
    \begin{itemize}
      \item
        we generalize Theorem~\ref{thm:stream_transducer} to dependent streams
        (\ie consider indexed containers with trivial actions) and show that
        in this case, any continuous function from~$\NU{w_1^\bot}(i_1)$
        to~$\NU{w_2^\bot}(i_2)$ is represented by an element of~$\NU{w_1^\ast
        \linear w_2}(i_1,i_2)$.

      \item
        we use Corollary~\ref{cor:duality_update} showing that
        any~$\NU{w^\bot}(i)$ (without restriction) is isomorphic to
        some~$\NU{w'^\bot}(j)$ where~$w'$ has trivial actions.

    \end{itemize}

  \item
    In particular, every function continuous for the natural topology
    (Section~\ref{sub:natural}) between greatest fixed points of finitary
    indexed containers\footnote{An indexed container is finitary if its
    sets of actions are finite.} is thus representable as a general layered
    simulation.

  \item
    If only the codomain is finitary, all continuous functions between fixed
    points of non-finitary containers are representable by simulations of the
    form~$\ALayered{w_1^\bot,i_1}^{\bot\ast} \linear w_2$. This is for example the
    case of the naturally continuous but non-wildly continuous function from
    page~\pageref{ex:natural_not_wild}. However, we do not know how to compose
    such simulations.

\end{enumerate}

\subsection*{Notes about Formalization}   

Some proofs have not been formalized in Agda, most notably:
\begin{itemize}
  \item
    the proof of Lemma~\ref{lem:bisim_id} or of Assumption~\ref{asm:bisim},

  \item
    the proof of Corollary~\ref{cor:comonad}.
\end{itemize}
We think the second holds in intensional type theory with function
extensionality but were unable to complete the proof. As it stands, we only
know it holds semantically by categorical reasoning. (It thus holds in
extensional type theory.)

The situation is subtle with Lemma~\ref{lem:bisim_id}. It is possible it can
be bypassed entirely. A direct proof of Corollary~\ref{cor:layeringlemma} was
in fact checked in Agda (with the \t{--with-K} flag), but its complexity
convinced us to base similar proofs on Lemma~\ref{lem:bisim_id}.

Enriching the type theory with stronger forms of equality may well simplify
the development. Preliminary investigation showed that \emph{cubical type
theory}~\cite{cubicalTT} allows to prove that bisimulation and (path) equality are identical.
For readers with some knowledge about cubical Agda~\cite{cubicalagda}, here is one way of
defining bisimulation:\footnote{Note that states are written~$s\in S$
to avoid clashing with cubical Agda's interval called~$I$.}
  \begin{allttt}
  record _≈_ \{s : S\} (T₁ T₂ : ν w s) : Type₀ where
    coinductive
    field
      root≈ : root T₁ ≡ root T₂
      branch≈ : (s' : S)
                (d₁ : D s (root T₁)) (q₁ : n s (root T₁) d₁ ≡ s')
                (d₂ : D s (root T₂)) (q₂ : n s (root T₂) d₂ ≡ s')
                (Pd : PathP (λ i → D s (root≈ i)) d₁ d₂) →
                (PathP (λ i → n s (root≈ i) (Pd i) ≡ s') q₁ q₂) →
                (subst (ν w) q₁ (branch T₁ d₁)) ≈ (subst (ν w) q₂ (branch T₂ d₂))
  \end{allttt}
This is much more streamlined than the definition in plain Agda!

It is then possible to prove the following:
\[
  \PI{s : S}\PI{T_1,T_2 : \NU{w}\ s}
  \quad
  \big(T_1 \approx T_2\big) \equiv \big(T_1 \equiv T_2\big)
\]
which makes all the bisimilarity proofs unnecessary.\footnote{To be honest,
the difficult part of this proof is due to Anders Mortberg and is part of
cubical Agda's standard library: \texttt{Cubical/Codata/M/Bisimilarity.agda}}

Recall however all this is only needed to \emph{prove properties} about the
objects that are defined in plain Martin-Löf type theory without identity.





\section*{Acknowledgements}

I really want to thank Peter Hancock for all the discussions that led to this
paper. His knowledge of type theory and inductive-recursive definitions,
together with his reluctance to give in to the temptation of equality are at
the bottom of this work.


\bibliographystyle{alpha}
\bibliography{eating}

\end{document}